\numberwithin{equation}{section}
\newcommand{\Z}{\mathbb{Z}}
\newcommand{\R}{\mathbb{R}}
\newcommand{\N}{\mathbb{N}}
\newcommand{\NN}{\mathcal{N}}
\newcommand{\C}{\mathbb{C}}
\newcommand{\A}{\mathcal{A}}
\newcommand{\HH}{\mathcal{H}}
\newcommand{\DD}{\mathcal{D}}
\newcommand{\FF}{\mathcal{F}}
\newcommand{\Cc}{\mathcal{C}}
\newcommand{\ep}{\varepsilon}
\newcommand{\Om}{\Omega}
\newcommand{\om}{\omega}
\newtheorem{remark}{Remark}
\newtheorem{theorem}{Theorem}[section]
\newtheorem{proposition}{Proposition}[section]
\newtheorem{corollaire}{Corollary}[section]
\newtheorem{lemma}{Lemma}[section]
\def\XXint#1#2#3{{\setbox0=\hbox{$#1{#2#3}{\int}$}
     \vcenter{\hbox{$#2#3$}}\kern-.5\wd0}}
\DeclareMathAlphabet{\mathonebb}{U}{bbold}{m}{n}
\newcommand{\one}{\ensuremath{\mathonebb{1}}}
\title{The giant vortex state for a Bose-Einstein condensate in a rotating anharmonic trap : extreme rotation regimes}
\author{Nicolas Rougerie \footnote{Universit{\'e} Paris 6, Laboratoire Jacques-Louis Lions, 175 rue du Chevaleret, 75013 Paris, France. email: rougerie@ann.jussieu.fr}}
\begin{document}

\maketitle

\begin{abstract}

We study the fast rotation limit for a Bose-Einstein condensate in a quadratic plus quartic confining potential within the framework of the two dimensional Gross-Pitaevskii energy functional. As the rotation speed tends to infinity with a proper scaling of the other parameters in the model, a linear limit problem appears for which we are able to derive precise energy estimates. We prove that the energy and density asymptotics of the problem can be obtained by minimizing a simplified one dimensional energy functional. In the case of a fixed coupling constant we also prove that a giant vortex state appears. It is an annulus with pure irrotational flow encircling a central low-density hole around which there is a macroscopic phase circulation.

\end{abstract}

\section{Introduction}

\subsection{Physical background}

Since the first experimental achievement of a Bose-Einstein condensate by the Jila and MIT groups in 1995 (2001 Nobel prize in physics attributed to Cornell, Wieman and Ketterle), these systems have been the subject of many studies from the condensed matter community. A reason (among others) for this is the fact that a Bose-Einstein condensate is a good system to study superfluidity issues, such as the existence of quantized vortices. These vortices can be generated by rotation of the container (generally a magnetic trap) enclosing the condensate. They are the subject of an ever increasing number of experimental and theoretical papers, see e.g. the review \cite{Fetter} or the monograph \cite{Aft} for extensive lists of references.\\
Recently, mathematical contributions have studied certain issues arising from the physics of rotating Bose-Einstein condensates. Let us cite some of these issues :
\begin{itemize}
\item the appearance of the first vortices when increasing the rotational speed of the trap in the strong coupling regime \cite{Ign-Mil1,Ign-Mil2};
\item the formation of a vortex lattice characteristic of regimes when the rotational speed nearly deconfines the atoms \cite{AB,AB-2D,ABN};
\item the interaction of many condensates in an optical lattice \cite{AH};
\item the symmetry breaking of the ground state of a condensate in parameter regimes \cite{S};
\item the energy and density asymptotics of strongly interacting condensates in anharmonic traps \cite{CRY1,CRY2,CY} .
\end{itemize}

Most of the available mathematical studies on Bose-Einstein condensates are made in the framework of the Gross-Pitaevskii energy. Here we will use a two-dimensional energy although the actual energy should be three-dimensional (see the discussion in Section 1.4 below). The consensate is described by a complex macroscopic wave-function $\psi$ minimizing the energy given (in the rotating frame) by
\begin{equation}\label{eq0:EGP}
E_{GP}(\psi) = \int _{\R ^2}  \vert \nabla \psi -i\Om  x^{\perp}\psi \vert^2 dx + \int _{\R ^2} \left( V(x) \vert \psi \vert ^2 -\Om^2 |x|^2 |\psi |^2 \right)dx + G \int _{\R ^2} \vert \psi \vert^4 dx,
\end{equation}
 under the mass constraint
\begin{equation}\label{eq0:masse}
\int_{\R^2} | \psi | ^2 dx = 1.
\end{equation}
The matter density at some point $x$ is given by $|\psi|^2(x)$. A vortex is a zero of the wave-function $\psi$ around which there is a quantized phase circulation (ie topological degree or winding number).\\
Here we note $x=(x_1,x_2)$, $x^{\perp}=(-x_2,x_1)$, $V(x)$ is the potential confining the atoms (generally representing a magnetic trap), $\Om$ is the speed at which the trap is rotated around the axis perpendicular to the $x_1-x_2$ plane and $G>0$ is the coupling constant modelling the atomic interactions which we assume to be repulsive. The different terms in the energy represent the kinetic energy (including the effect of Coriolis forces due to the transformation to the rotating frame), the potential energy (including the effect of centrifugal forces) and the energy due to the interatomic interactions.\\
The model based on the Gross-Pitaevskii energy is an approximation of the quantum mechanical many-body problem for $N$ bosons at zero temperature. We refer to \cite{LSY} and references therein for a discussion of the connection between this type of energies and the $N$-body problem. We remark that the Gross-Pitaevskii energy in the rotating case ($\Om \neq 0$) was rigorously derived from the $N$-body problem in \cite{LS}, but only for fixed rotation (ie fixed $\Om$) and fixed interactions (ie fixed $G$). In the regimes we are going to consider ($\Om$ and/or $G$ going to infinity), the derivation of the Gross-Pitaevskii energy from the quantum mechanical many-body problem is still a mostly open problem. To the best of the author's knowledge, the only rigourous results available are those of \cite{BCPY}, where it is proved that the Gross-Pitaevskii energy is a correct first-order description in regimes similar to the ones we are going to study in this paper.\\
We remark that our results provide an a posteriori criterium for the justification of the reduction to the Gross-Pitaevskii energy in some regime of parameters. However, it is far from a rigorous proof, see Section 1.4 below.\\    

In most experimental situations, the potential $V$ is harmonic, of the form
\begin{equation}\label{eq0:potentielharmonique}
V(x)= a_1 x_1 ^2 + a_2 x_2 ^2.
\end{equation}
Such a potential does not allow to take arbitrarily large rotation speeds, as the so-called effective potential  
\begin{equation}\label{eq0:potentieleffectif}
V(x)-\Om^2 |x|^2
\end{equation}
is not bounded below for $\Om^2 > \min(a_1,a_2)$ and consequently neither is the energy (\ref{eq0:EGP}). Physically, this corresponds to the fact that the centrifugal force overcomes the magnetic trapping force when $\Om^2 > \min(a_1,a_2)$, thus the condensate is no longer confined and the atoms fly apart. Fetter suggested \cite{Fetterquart} to use instead a potential with a growth steeper than harmonic, for example of the form
\begin{equation}\label{eq0:potentielquart}
 V(x)= |x|^2 + k|x|^4.
\end{equation}
The nice feature of this potential is that the centrifugal force is always compensated by the trapping force, and thus one can in theory take arbitrarily large rotation speeds. Experiments with this type of potentials have been realized by the ENS group \cite{Exp1,Exp2}, using a blue-detuned laser beam to create the quartic part of the potential. The experiments motivated numerous theoretical studies \cite{AD,Moi,D,FJS,FuZa,JaKa,JaKaLu,KTU,KB,KiFe,Lu} revealing the very rich vortex structure one can expect to be displayed by such systems.
When the rotational speed of a condensate trapped by a potential of this kind is increased from zero, many phase transitions are expected to happen. Firstly, vortices are expected to appear one by one as is the case for a harmonically trapped condensate, but eventually with multiply quantized vortices becoming stable \cite{JaKa,JaKaLu,Lu}, which is never the case with a purely harmonic trap.\\
When the centrifugal force begins to compensate the trapping force corresponding to the quadratic part in the potential, a triangular lattice of singly quantized vortices (Abrikosov lattice) similar to that which is observed in purely quadratic traps appears \cite{Moi}, but a new feature of the quadratic plus quartic trap is the existence of a critical speed for the centrifugal force to create a central hole in the condensate. The resulting state is an annular condensate with a vortex lattice encircling a central giant hole carrying a macroscopic phase circulation \cite{AD,Moi,D,FJS,KB}. At even larger rotation speeds, a new transition is expected to happen, all the individual vortices present in the annulus retreating in the central hole and gathering in a single multiply quantized vortex at the center of the trap. This is what we will refer to as the \emph{giant vortex state}: an annular condensate with pure irrotational flow encircling a central multi-quantized vortex \cite{FJS,FuZa,KTU,KB,KiFe}.\\

In this paper we aim at justifying rigorously the appearance of the giant vortex state in the limit $\Om \rightarrow +\infty$. The existing mathematical results on giant vortices in Bose-Einstein condensates (mainly those of \cite{AAB} to our knowledge) focused on the case when the trapping potential is taken so that the condensate has an annular shape even at slow rotation speeds and studied the effect of the central hole on the vortex structure in the annulus. Here we study the case where, as the rotational speed increases, a central hole is created in the condensate by the centrifugal force. 

\subsection{Model}

Our model is the following: we consider a Gross-Pitaevskii energy of the form
\begin{equation}\label{eq0:modele}
E_{GP}(\psi) = \int _{\R ^2} \left( \vert \nabla \psi -i\Om  x^{\perp}\psi \vert^2 +\right(1-\Om^2 \left) |x|^2 |\psi |^2+ k |x|^4|\psi|^2 + G \vert \psi \vert^4 \right)dx 
\end{equation}
to be minimized under the mass-constraint (\ref{eq0:masse}). In order to study the asymptotics of the problem when $\Om \rightarrow +\infty$, it is more convenient to change scales, setting for $\Om > 1$ 
\begin{equation}\label{eq0:echelle}
u(x)= R\psi \left(R x\right), \quad R=\sqrt{\frac{\Om^2-1}{2k}}. 
\end{equation}
 We then have
\begin{equation}\label{E/F}
 E_{GP}(\psi) = \FF_{\Om}(u)
\end{equation}
with
\begin{equation}\label{eq0:modele2}
\FF_{\Om}(u)= \frac{2k}{\Om^2-1} \int _{\R ^2}  \left| \nabla u -i\Om \frac{\Om^2 -1}{2k}  x^{\perp}u \right|^2 + G \vert u \vert^4 + \frac{\left(\Om^2 -1\right)^3}{8k^2} \left(|x|^4 -2|x|^2 \right)|u|^2     .  
\end{equation}
Let us emphasize that formal calculations \cite{FJS} suggest that the ground state of (\ref{eq0:modele}) should be confined on an annulus of radius $\sim R$ so that our change of scales is quite natural in this setting. Because of the mass constraint 
\begin{equation}\label{eq0:masseu}
\int_{\R ^2} |u|^2dx=1,
\end{equation}
we do not change the physics of the problem by adding to $\FF_{\Om}$ any multiple of $\int_{\R ^2} |u|^2dx$. Hence it is equivalent to minimize
\begin{equation}\label{eq0:modele3}
\tilde{\FF}_{\Om}(u)= \frac{2k}{\Om^2-1} \int _{\R ^2}  \left| \nabla u -i\Om \frac{\Om^2 -1}{2k}  x^{\perp}u \right|^2 + G \vert u \vert^4 + \frac{\left(\Om^2 -1\right)^3}{8k^2} \left(|x|^2-1 \right)^2 |u|^2.
\end{equation}
Defining the parameters
\begin{eqnarray}\label{eq0:parametres}
\om &=& \Om \frac{\Om^2 -1}{2k} \nonumber \\
D_{\Om} &=& \frac{\Om^2-1}{\Om^2} 
\end{eqnarray}
we have
\begin{equation}\label{eq0:lienenergies}
\tilde{\FF}_{\Om}(u)  = \frac{2k}{\Om^2-1} F_{\om}(u)
\end{equation}
with 
\begin{equation}\label{eq0:modele4}
F_{\om}(u)= \int _{\R ^2} \left( \vert \nabla u -i\omega  x^{\perp}u \vert^2 +  D_{\Om}\frac{\om ^ 2}{2}\left(|x|^2-1\right)^2|u|^2 +  G \vert u \vert^4 \right)dx,
\end{equation}
and we are going to study the asymptotics of $F_{\om}$. Results for $\tilde{\FF}_{\Om}(u)$ are straigthforward modifications of those we will present.\\ 
The potential $D_{\Om}\om^2 \left(|x|^2-1 \right)^2$ is positive and has a degenerate minimum for $|x|=1$, so that we expect the condensate to be tightly confined on an annulus centered on the circle $|x|=1$ in the limit $D_{\Om}\om^2 \rightarrow + \infty$. This is reminiscent of semi-classical studies of Hamiltonians with potential wells (see \cite{HeSjo} and the references therein), although our analysis will be quite different.\\

We denote
\begin{equation}\label{eq0:Energiemin}
 I_{\om} = \inf \left\lbrace F_{\om}(u),\: u\in H^1(\R^2)\cap L^2(\R^2,x^4dx) ,\:\int_{\R^2} |u|^2 = 1 \right\rbrace = F_{\om}(u_{\om}).
\end{equation}
It is classical that $I_{\om}$ is achieved by some (a priori non unique) $u_{\om}$ satisfying the Euler-Lagrange equation
\begin{equation}\label{eq0:EELu}
-\left(\nabla -i\om x^{\perp} \right)^2 u_{\om} +D_{\Om}\frac{\om^2}{2} \left(\vert x\vert^2 - 1 \right)^2 u_{\om} + 2 G \vert u_{\om}\vert ^2 u_{\om}=\mu _{\om} u_{\om}, 
\end{equation} 
where $\mu_{\om}$ is the Lagrange multiplier associated with the mass constraint. We aim at studying the asymptotics of $I_{\om}$ and $u_{\om}$ when $\om \rightarrow +\infty$. In particular we want to confirm rigorously the predicted transition of $u_{\om}$ to a state with pure irrotational flow of the form
\begin{equation}\label{eq0:cible}
(r,\theta) \mapsto f(r)e^{in\theta},
\end{equation}
where $r,\theta$ are the polar coordinates and $n$ an integer.\\
More precisely, it is to be expected that $u_{\om}$ will be close to a function of the form (\ref{eq0:cible}) with $n\sim \om $ and $f$ a function tightly confined on a shrinking annulus centered on the potential well. 

\subsection{Main results}

In order to state our main results, let us introduce some notation. For every $n\in \Z$ we define the one dimensional energy $F_n$ 
\begin{equation}\label{eq0:EnergieN1}
 F_n(f)=2\pi \int _ {\R ^+} \left( |f' (r)|^2 +V_n(r)|f(r)|^2 \right)rdr
\end{equation}
defined for every $f\in H^1(\R^+,rdr)\cap L^2(\R^+,rV_n(r)dr)$. The potential $V_n$ is given by:
\begin{equation}\label{eq:Vn1}
 V_n(r)=\frac{n^2}{r^2}-2n\om + \left(1-D_{\Om}\right)\om^2 r^2 + D_{\Om}\frac{\om ^2}{2}+D_{\Om}\frac{\om^2}{2} r^4.
\end{equation} 
The energies $F_n$ appear when decomposing the function $u_{\om}$ as a Fourier series in the angular variable (see (\ref{eq0:decouple}) below). 
We will denote $g_{1,n}$ the ground-state (it is unique up to a multiplicative constant of modulus $1$, that we fix so as to have $g_{1,n}\geq 0$) of the energy (\ref{eq0:EnergieN1}) satisfying the mass constraint
\begin{equation}\label{eq0:masseg}
2\pi \int_{\R ^+} g_{1,n}^2(r)rdr =1 
\end{equation}
and 
\begin{equation}\label{eq0:lambda1n}
 \lambda_{1,n}= F_n(g_{1,n})= \inf \left\lbrace F_n(f), \: f\in H^1(\R^+,rdr)\cap L^2(\R^+,rV_n(r)dr), 2\pi \int_{\R ^+} f^2(r)rdr =1  \right\rbrace. 
\end{equation}
Similarly we introduce
\begin{equation}\label{eq0:1Dnonlineaire}
 E_n(f)=2\pi \int _ {\R ^+} \left( |f' (r)|^2 +V_n(r)|f(r)|^2 +G |f(r)|^4\right)rdr
\end{equation}
with
\begin{equation}\label{eq0:gamman}
 \gamma_{n}= E_n(\Psi_{n})= \inf \left\lbrace E_n(f), \: f\in H^1(\R^+,rdr)\cap L^2(\R^+,rV_n(r)dr), 2\pi \int_{\R ^+} f^2(r)rdr =1  \right\rbrace. 
\end{equation}
Similarly to $g_{1,n}$ we can ensure the unicity of $\Psi_n$ by requiring that $\Psi_n \geq 0$.\\
We will denote $\xi_1$ and $\xi_2$ the first two eigenfunctions normalized in $L^2(\R)$ of the harmonic oscillator
\begin{equation}\label{eq0:OH}
-\frac{d^2}{dx^2} + x^2,  
\end{equation}
associated to the eigenvalues $1$ and $3$. \\

Our results concern the energy (\ref{eq0:modele4}) in the limit $\om \rightarrow +\infty$. They hold when the interaction energy is small compared to the potential and kinetic terms so that the first order of the energy is a quadratic term, leading to a linear equation. We show that this condition is fulfilled if
\begin{equation}\label{eq0:scaling}
\om \rightarrow +\infty, \quad \om \gg G^2 \geq g^2, \quad D_{\Om}\in (0,1), \quad D_{\Om} \rightarrow D \in (0,1] 
\end{equation}
where $D$ and $g$ are fixed constants. The assumption that $G^2$ is larger than some constant is only a matter of convenience for writing the results. The opposite case where $G\ll 1$ can be dealt with using the tools that we develop in this paper, and is actually easier than the case where $G$ stays bounded below. We will refer to the limit (\ref{eq0:scaling}) as the \emph{extreme rotation regime}.\\


We have the following result (the notation $a \propto b$ has the usual meaning that $a/b$ converges to some constant.):
\begin{theorem}[Energy and density asymptotics in the extreme rotation regime]\label{theo:densite/energie} \mbox{}\\
Let $u_{\om}$ be any solution of (\ref{eq0:Energiemin}) and suppose that (\ref{eq0:scaling}) holds. There is an integer $n^* = \om + O(1)$ so that 
\begin{equation}\label{eq0:resultenergie}
I_{\om}=F_{\om}(u_{\om})=\lambda_{1,n^*} + 2\pi G \int_{\R^+} g_{1,n^*}(r) ^4 rdr + O(G^2)=\gamma_{n^*}+O(G^2).
\end{equation}
Moreover 
\begin{equation}\label{eq0:resultdensite}
\Vert |u_{\om}|^2 - g_{1,n^*}^2 \Vert_{L^2(\R^2)} \leq C G^{1/2} \ll \Vert g_{1,n^*} ^2 \Vert_{L^2(\R^2)}  \propto \om ^{1/4}
\end{equation}
and $|u_{\om}|^2$ converges to a Dirac delta function at $|x|=1$ in the weak sense of measures.
\end{theorem}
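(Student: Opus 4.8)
\emph{Proof strategy.} The plan is to exploit the angular Fourier decomposition $u_\om(r,\theta)=\sum_{n\in\Z}c_n(r)e^{in\theta}$, under which the quadratic part of $F_\om$ separates as $\sum_n F_n(c_n)$: combining the magnetic kinetic term $(\tfrac{n}{r}-\om r)^2$ with the trapping potential reproduces exactly the radial potential $V_n$ of (\ref{eq:Vn1}), while the quartic term couples the modes. First I would record the one-dimensional facts coming from a harmonic (semiclassical) analysis of the well of $V_n$ near $r=1$, whose curvature is $\sim\om^2$: the ground state $g_{1,n}$ lives on an annulus of width $\sim\om^{-1/2}$, so that $2\pi\int_{\R^+}g_{1,n}^4\,r\,dr\propto\om^{1/2}$ and $\|g_{1,n}^2\|_{L^2(\R^2)}\propto\om^{1/4}$; the spectral gap satisfies $\lambda_{2,n}-\lambda_{1,n}\gtrsim\om$; and $\lambda_{1,n}-\lambda_{1,n^*}\ge c\,(n-n^*)^2$ for a fixed $c>0$, where $n^*=\mathrm{argmin}_n\lambda_{1,n}=\om+O(1)$. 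Since the gap $\gtrsim\om$ dominates the interaction scale $G\om^{1/2}$, first-order perturbation theory gives $\gamma_n=\lambda_{1,n}+2\pi G\int_{\R^+}g_{1,n}^4\,r\,dr+O(G^2)$.

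The upper bound in (\ref{eq0:resultenergie}) is immediate: the trial state $\Psi_{n^*}(r)e^{in^*\theta}$ gives $I_\om\le E_{n^*}(\Psi_{n^*})=\gamma_{n^*}$, and $g_{1,n^*}(r)e^{in^*\theta}$ gives $I_\om\le\lambda_{1,n^*}+2\pi G\int g_{1,n^*}^4\,r\,dr$. The matching lower bound I would obtain in two steps. Writing $m_n=2\pi\int|c_n|^2 r\,dr$ and splitting each $c_n$ into its $g_{1,n}$-component and an orthogonal remainder of mass $m_n^\perp$, the separated quadratic energy is at least $\lambda_{1,n^*}+c\sum_n(n-n^*)^2 m_n+\sum_n(\lambda_{2,n}-\lambda_{1,n})m_n^\perp$. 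Discarding the nonnegative quartic term and comparing with the upper bound yields $\sum_n m_n^\perp\lesssim G\om^{-1/2}$ and $\sum_n(n-n^*)^2 m_n\lesssim G\om^{1/2}$. Because $G^2\ll\om$, the second estimate confines the bulk of the mass to a band $|n-n^*|\ll\om^{1/2}$, on which the wells of $V_n$ --- centered at $r=1+O((n-\om)/\om)$ --- are shifted by $\ll\om^{-1/2}$, so that the profiles $g_{1,n}$ are essentially identical to $g_{1,n^*}$.

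On this band I may therefore factor $u_\om\approx g_{1,n^*}(r)\,\phi(\theta)$ with $\phi=\sum_n\beta_n e^{in\theta}$ and $\tfrac1{2\pi}\int|\phi|^2=1$, and use a Jensen inequality in the angular variable: $\int_{\R^2}|u_\om|^4\ge\big(\int_{\R^+}g_{1,n^*}^4\,r\,dr\big)\int_0^{2\pi}|\phi|^4\,d\theta$, the last factor being $\ge2\pi$ with equality exactly when $|\phi|\equiv1$ (a single mode). The energy is then bounded below by $\lambda_{1,n^*}+c\sum_n(n-n^*)^2|\beta_n|^2+G\big(\int g_{1,n^*}^4\,r\,dr\big)\int_0^{2\pi}|\phi|^4\,d\theta$, and comparison with the upper bound forces both the quadratic spread and the quartic excess over the single-mode value to be $O(G^2)$. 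This proves (\ref{eq0:resultenergie}); moreover, since $\int_0^{2\pi}(|\phi|^2-1)^2\,d\theta=\int_0^{2\pi}|\phi|^4\,d\theta-2\pi\lesssim G\,\om^{-1/2}$, one gets $\||u_\om|^2-g_{1,n^*}^2\|_{L^2(\R^2)}^2=\big(\int g_{1,n^*}^4\,r\,dr\big)\int_0^{2\pi}(|\phi|^2-1)^2\,d\theta\lesssim\om^{1/2}\cdot G\om^{-1/2}=O(G)$, which is (\ref{eq0:resultdensite}); the weak convergence of $|u_\om|^2$ to the uniform measure on $\{|x|=1\}$ then follows because $g_{1,n^*}^2$ concentrates on the shrinking annulus.

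The hard part will be making the factorization $u_\om\approx g_{1,n^*}(r)\phi(\theta)$ rigorous, i.e. reconciling the interaction estimate --- which naturally lives at the level of the angularly averaged density $\rho=\sum_n|c_n|^2$ via $\int_{\R^2}|u_\om|^4\ge2\pi\int_{\R^+}\rho^2\,r\,dr$ --- with the mode decomposition, where the $n$-dependent angular energy $\tfrac{n^2}{r^2}-2n\om$ resides. The conceptual engine is that angular spreading is penalized \emph{simultaneously} by the quadratic and the quartic energy, so that once the crude step has localized the mass to a band of uniform profiles these two penalties combine to pin $u_\om$ to the single mode $n^*$ with the sharp $O(G^2)$ loss. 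The technical backbone --- a priori regularity of $u_\om$ and convergence of its Fourier series, uniform control of $g_{1,n}$ across the relevant band, estimation of the many cross terms generated by the quartic term, and the semiclassical bounds on $V_n$ --- is where most of the labour lies.
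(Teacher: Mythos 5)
Your overall route is the paper's: the angular Fourier decomposition and the decoupling (\ref{eq0:decouple}), the semiclassical analysis of $V_n$ giving the gap $\lambda_{2,n}-\lambda_{1,n}\gtrsim\om$ and the penalty $\lambda_{1,n}-\lambda_{1,n^*}\gtrsim (n-n^*)^2$, the crude confinement estimates $\sum_n m_n^{\perp}\lesssim G\om^{-1/2}$ and $\sum_n(n-n^*)^2m_n\lesssim G\om^{1/2}$, and a final comparison with the single-mode upper bound; your angular Jensen inequality is the factorized special case of the paper's Lemma \ref{theo:interactions}.

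The gap sits exactly at the step you defer, and your heuristic there points in the wrong direction. You assert that angular spreading is penalized \emph{simultaneously} by the quadratic and the quartic energies, the latter via $\int_0^{2\pi}|\phi|^4\,d\theta\geq 2\pi$. That inequality is correct for the factorized ansatz $g_{1,n^*}(r)\phi(\theta)$, but the actual modes satisfy $f_n\propto g_{1,n}$, and the profiles $g_{1,n}$ are Gaussians of width $h_n\propto\om^{-1/2}$ centred at radii $R_n$ with $|R_p-R_q|\propto\om^{-1}|p-q|$. The pairwise overlaps therefore behave like
\[
2\pi\int_{\R^+}g_{1,p}^2\,g_{1,q}^2\,r\,dr\;\approx\;e^{-c\,\om\,(R_p-R_q)^2}\;2\pi\int_{\R^+}g_{1,n^*}^4\,r\,dr\;=\;2\pi\int_{\R^+}g_{1,n^*}^4\,r\,dr\;-\;C\,\om^{-1/2}|p-q|^2+O(1),
\]
so spreading the mass over several modes \emph{lowers} the interaction energy: the quartic term favours spreading, and only the quadratic term opposes it. The ``factorization error'' you propose to treat as a technicality is thus the entire competition, and the lower bound $\int_{\R^2}|u_\om|^4\geq 2\pi\int_{\R^+}g_{1,n^*}^4\,r\,dr$ fails by precisely the quantity at stake. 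What closes the argument (Proposition \ref{theo:borneinfint} together with Lemma \ref{theo:interactions} and Steps 1--2 of the paper's proof) is the quantitative statement that the interaction gain from occupying mode $n$, at most $CG\om^{-1/2}|n-n^*|^2$ per unit mass, is dominated by the linear loss $C|n-n^*|^2$ because $G\ll\om^{1/2}$; this is where the hypothesis $\om\gg G^2$ enters the lower bound, and it is what yields $\sum_n|n-n^*|^2\Vert f_n\Vert_{L^2}^2\leq CG^2$ and the $O(G^2)$ remainder in (\ref{eq0:resultenergie}). Until that Gaussian-overlap computation is carried out, neither the energy lower bound nor your derivation of (\ref{eq0:resultdensite}) (which rests on the quartic excess $\int_0^{2\pi}(|\phi|^2-1)^2\,d\theta$ being both nonnegative and $\lesssim G\om^{-1/2}$) is established.
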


Note that $|g_{1,n^*}|^2$ converges to a Dirac delta function at $|x|= 1$ as a consequence of Theorem \ref{theo:1D1} below, to which we refer for more details on $\lambda_{1,n}$ and $g_{1,n}$. In particular we have 
\[
 \lambda_{1,n^*} \sim \left(2D_{\Om} + 4\right)^{1/2}\om 
\]
and 
\[
2\pi \int_{\R^+}g_{1,n^*}(r) ^4 \sim \frac{\left(2D_{\Om} + 4 \right)^{1/4}}{\sqrt{2}\pi} \om ^{1/2}\int_{\R}\xi_1 ^4 (x) dx,
\]
thus Theorem \ref{theo:densite/energie} says that the energy of the condensate is described to subleading order by a simplified one-dimensional problem, which also gives the density profile.
The second equality in (\ref{eq0:resultenergie}) holds because the energies $F_n$ and $E_n$ are actually very close in the parameter range (\ref{eq0:scaling}), see Theorems \ref{theo:1D1} and \ref{theo:1D1NL} below. Similarly (\ref{eq0:resultdensite}) could be stated with $\Psi_{n^*}$ replacing $g_{1,n^*}$.\\
The energy $E_n$ is obtained by restricting $F_{\om}$ to wave functions of the form $f(r)e^{in\theta}$, so Theorem \ref{theo:densite/energie} is a first step towards the understanding of the giant vortex. However, a result such as (\ref{eq0:resultdensite}) is not sufficient to conclude that there are no vortices in the bulk of the condensate. Actually our method does not give precise enough energy estimates to do so in the whole regime (\ref{eq0:scaling}).\\   

A special case where we can prove much more detailed results is that of a fixed coupling constant: 
\begin{theorem}[Refined asymptotics in the case of fixed $G$]\label{theo:densite/energie2}\mbox{}\\
Let $u_{\om}$ be any solution of (\ref{eq0:Energiemin}) and suppose that there holds
\begin{equation}\label{eq0:Gfixe}
\om \rightarrow +\infty, \quad G\mbox{ is a fixed constant},  \quad D_{\Om}\in (0,1), \quad D_{\Om} \rightarrow D \in (0,1].
\end{equation}
Then there is an integer $n^* = \om + O(1)$ so that for any $\ep>0$ 
\begin{equation}\label{eq0:resultenergie2}
I_{\om}=F_{\om}(u_{\om})=\gamma_{n^*} + O(\om^{-1/4+\ep}).
\end{equation}
Moreover there exists $\alpha \in \R$ such that, along some subsequence 
\begin{equation}\label{eq0:unseulmode}
 \Vert u_{\om} -  e^{i\alpha}\Psi_{n^*} e^{in^* \theta }\Vert_{L^2(\R^2)} \leq C_{\ep} \om^{-1/16+\ep}.
\end{equation}
\end{theorem}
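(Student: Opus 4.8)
The plan is to prove Theorem~\ref{theo:densite/energie2} by bootstrapping the rough information of Theorem~\ref{theo:densite/energie} into the sharper estimates announced here, the key point being that for \emph{fixed} $G$ the interaction term is genuinely lower order and the energy landscape of $F_\om$ is, to high precision, governed by the single one-dimensional mode $\Psi_{n^*}e^{in^*\theta}$. First I would decompose any minimizer $u_\om$ in its angular Fourier series
\begin{equation}\label{eq0:decouple}
u_\om(r,\theta)=\sum_{n\in\Z} c_n(r)e^{in\theta},\qquad 2\pi\sum_n\int_{\R^+}|c_n(r)|^2\,r\,dr=1,
\end{equation}
and observe that the quadratic part of $F_\om$ diagonalizes over the modes, contributing $\sum_n F_n(|c_n|)$ (up to the phase gauge), while the quartic term $G\int|u_\om|^4$ couples them but is nonnegative and, crucially, of size $O(\om^{1/2})$. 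The energy lower bound then reads $I_\om\ge\sum_n\lambda_{1,n}\,m_n$ where $m_n=2\pi\int|c_n|^2 r\,dr$ is the mass in mode $n$, and the spectral analysis of Theorem~\ref{theo:1D1} (giving $\lambda_{1,n}\sim(2D_\Om+4)^{1/2}\om$ with a nondegenerate, quadratically-growing gap in $n$ around the optimal index $n^*$) forces almost all the mass onto the single mode $n^*$.

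The heart of the argument is a \emph{quantitative mass-concentration} estimate. I would establish that the spectral gap $\lambda_{1,n}-\lambda_{1,n^*}$ grows like $c\,\om^{-1}(n-n^*)^2$ (this is where the precise sub-leading behaviour of $\lambda_{1,n}$ from Theorem~\ref{theo:1D1} enters), so that comparing $I_\om$ with the trial state $\Psi_{n^*}e^{in^*\theta}$, which yields the upper bound $I_\om\le\gamma_{n^*}$, and recalling $\gamma_{n^*}=\lambda_{1,n^*}+O(\om^{1/2})$ from Theorems~\ref{theo:1D1}--\ref{theo:1D1NL}, squeezes the excess energy
\begin{equation}\label{eq:excess}
\sum_{n\neq n^*}(\lambda_{1,n}-\lambda_{1,n^*})\,m_n\le I_\om-\lambda_{1,n^*}(1-m_{n^*})-\gamma_{n^*}+\lambda_{1,n^*}+O(\om^{1/2})
\end{equation}
into an $O(\om^{-1/4+\ep})$ bound on $\sum_{n\neq n^*}m_n$. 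Feeding this back controls the interaction energy to the stated precision and gives the refined energy identity~(\ref{eq0:resultenergie2}); the factor $\om^{-1/4+\ep}$ (versus the cruder $O(G^2)=O(\om^{0})$-type bound of the first theorem) is exactly what the vanishing relative size of $G$ buys us here.

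For the $L^2$-convergence~(\ref{eq0:unseulmode}) I would first use the mass estimate $1-m_{n^*}\le C_\ep\om^{-1/4+\ep}$ to reduce to comparing the single-mode part $c_{n^*}(r)e^{in^*\theta}$ with $\Psi_{n^*}e^{in^*\theta}$. Writing $c_{n^*}=\rho_{n^*}e^{i\phi}$ and testing the Euler--Lagrange equation~(\ref{eq0:EELu}) against the corresponding one-dimensional problem~(\ref{eq0:gamman}), the near-saturation of the energy together with the coercivity of $E_{n^*}$ about its nondegenerate minimizer $\Psi_{n^*}$ yields $\|\rho_{n^*}-\Psi_{n^*}\|_{L^2}\lesssim\om^{-1/16+\ep}$; the phase factor $e^{i\alpha}$ and the passage to a subsequence absorb the residual gauge freedom and the possible oscillation of $\phi$. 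The main obstacle I anticipate is making the gap estimate~(\ref{eq:excess}) effective uniformly in $n$: near $n^*$ one needs the sharp \emph{quadratic} lower bound on $\lambda_{1,n}-\lambda_{1,n^*}$, whereas for $|n-n^*|$ large one only needs a crude linear-in-$\om$ coercivity to discard the mass, and stitching these two regimes together — while simultaneously controlling the nonlinear cross terms in $\int|u_\om|^4$ that mix the modes — is the delicate bookkeeping that upgrades the exponent from the rough $O(G^2)$ of Theorem~\ref{theo:densite/energie} to the announced $\om^{-1/4+\ep}$ and $\om^{-1/16+\ep}$ rates.
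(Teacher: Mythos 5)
There is a genuine gap at the heart of your argument. Your key inequality, the ``squeeze'' in~(\ref{eq:excess}), carries an $O(\om^{1/2})$ slack coming from the identification $\gamma_{n^*}=\lambda_{1,n^*}+O(\om^{1/2})$, and the interaction energy $G\int|u_\om|^4\propto G\om^{1/2}$ that you propose to drop in the lower bound is itself of order $\om^{1/2}$ --- enormously larger than the target precision $\om^{-1/4+\ep}$. Since the gap $\lambda_{1,n}-\lambda_{1,n^*}\propto(n-n^*)^2$ is of order one per unit of $(n-n^*)^2$ (not $\om^{-1}(n-n^*)^2$ as you write; see Corollary~\ref{cor:lambdan/n^*}), an $O(\om^{1/2})$ slack only reproduces the crude mass bound of Theorem~\ref{theo:densite/energie} and cannot yield $\sum_{n\neq n^*}m_n=O(\om^{-1/4+\ep})$. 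The whole difficulty of the refined theorem is that the quartic term must be evaluated, not discarded: the paper linearizes the Euler--Lagrange equation around $g_{1,n^*}^2$ (using the density estimate of Theorem~\ref{theo:densite/energie} and improved $L^\infty$ bounds obtained from the Lu--Pan elliptic estimates of Lemma~\ref{theo:LuPan}), derives a second-order expansion $f_n=\langle f_n,g_{1,n}\rangle(g_{1,n}+\om^{-1/2}\Gamma_n)+o$ with explicit correctors $\Gamma_n$ (Lemma~\ref{theo:propgamman}, Proposition~\ref{theo:modes}), and then proves a refined pairwise lower bound on $\int|u_\om|^4$ (Proposition~\ref{theo:refinedlowerbound}) which, mode by mode, reconstructs the \emph{nonlinear} energies $E_n$ and hence $\gamma_n$ up to $O(\om^{-1/4+\ep})$. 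It is the comparison $\sum_n m_n\gamma_n$ versus $\gamma_{n^*}$, via Corollary~\ref{cor:lambdan/n^*NL}, that produces the stated rate; none of this machinery appears in your proposal.

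A second, independent gap concerns~(\ref{eq0:unseulmode}). The mass-concentration estimate only controls $\sum_{n\neq n^*,n^*+1}m_n|n-n^*|^2$: the neighbouring mode $n^*+1$ can satisfy $\gamma_{n^*+1}\geq\gamma_{n^*}-C\om^{-1/2}$ (equation~(\ref{gamman+1})), so it may a priori carry mass of order one, and your reduction ``to the single-mode part $c_{n^*}$'' is not justified. The paper handles this by comparing $u_\om$ with the two-mode ansatz $\dot u_\om=\alpha\Psi_{n^*}e^{in^*\theta}+\beta\Psi_{n^*+1}e^{i(n^*+1)\theta}$ and exploiting the cross interaction term $2\alpha^2\beta^2\int|\Psi_{n^*}|^4\propto\om^{1/2}\alpha^2\beta^2$, which is energetically too expensive unless one of $\alpha,\beta$ vanishes asymptotically; this is also the true source of the subsequence extraction (not the phase oscillation you invoke). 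Finally, the coercivity step you sketch for the radial profile would need to be made quantitative against an $O(\om^{-1/4+\ep})$ energy excess, which again requires the refined expansion of $f_{n^*}$ rather than a soft nondegeneracy argument.
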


Theorem \ref{theo:densite/energie2} is still mainly concerned with energy and density asymptotics, although (\ref{eq0:unseulmode}) allows to identify a global limiting phase. We now state our result about the appearance of the giant vortex, that is an annular condensate with no vortices in the bulk : 
 
\begin{theorem}[The giant vortex state in the case of fixed $G$]\label{theo:vortex}\mbox{}\\
Under the assumptions of Theorem \ref{theo:densite/energie2}
\begin{enumerate}
\item There is a constant $\sigma$ so that the following estimate holds true pointwise along some subsequence
\begin{equation}\label{eq0:estimunif}
 \left| |u_{\om}| -  |\Psi_{n^*}|\right| \leq C_{\ep} \om^{3/16+\ep} e^{-\sigma \om \left(|x|-1\right)^2} + C_{\ep} \om^{-1/32+\ep}.
\end{equation}
In particular
\begin{equation}\label{eq0:estimunif2}
\left\Vert |u_{\om}| -  |\Psi_{n^*}| \right\Vert_{L^{\infty}(\R^2)} \leq C_{\ep} \om^{3/16+\ep} \ll \left\Vert \Psi_{n^*} \right\Vert_{L^{\infty}(\R^2)} \propto \om^{1/4}.
\end{equation}

\item  $|u_{\om}(x)| \rightarrow 0$ uniformly on 
\begin{equation}\label{eq4:Anneau}
\HH_{\om}^{\delta}=\left\lbrace x\in \R^2,\: \left| \vert x \vert-1 \right|^2 \geq \delta \om ^{-1}\ln (\om) \right\rbrace 
\end{equation}
for every $\delta > \frac{1}{4\sigma}$.
\item Suppose that there is some $x_{\om}$ so that
\begin{equation}
u_{\om}(x_{\om}) \rightarrow 0
\end{equation}
then, if $\om$ is large enough, we have for every $\delta < \frac{1}{4\sigma}$
\begin{equation}\label{eq4:pasvortex}
 \left||x_{\om}|-1\right|^2 \geq  \delta \om^{-1} \ln (\om).
\end{equation}
\end{enumerate}
\end{theorem}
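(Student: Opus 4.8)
The plan is to treat item~(1) as the heart of the matter and to deduce items~(2) and~(3) from the pointwise bound (\ref{eq0:estimunif}) by comparison with the Gaussian profile of $\Psi_{n^*}$ supplied by Theorems \ref{theo:1D1} and \ref{theo:1D1NL}. For item~(1), I would first establish a priori bounds on the minimizer. From the energy estimate (\ref{eq0:resultenergie2}), the Euler--Lagrange equation (\ref{eq0:EELu}), elliptic regularity and a maximum-principle argument (via the diamagnetic inequality) using the positivity of the potential $D_{\Om}\frac{\om^2}{2}(|x|^2-1)^2$ and of the nonlinear term $2G|u_\om|^2 u_\om$, I expect the scale-matched bounds $\|u_\om\|_{L^\infty}\leq C\om^{1/4}$ and $\|\nabla u_\om\|_{L^\infty}\leq C\om^{3/4}$, and the analogous bounds for $\Psi_{n^*}$; these reflect the concentration of the mass on an annulus of width $\sim\om^{-1/2}$. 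The second ingredient is an Agmon-type exponentially weighted estimate: multiplying (\ref{eq0:EELu}) by $e^{2\sigma\om(|x|-1)^2}u_\om$ and integrating, the positive effective potential dominates the error created by differentiating the weight once $\sigma$ is small enough, while the Lagrange multiplier $\mu_\om$ and the energy bound fix the absorbed constants. This yields the Gaussian decay of the mass of $|u_\om|^2$ away from $|x|=1$, and the same computation for $\Psi_{n^*}$ isolates the sharp rate $\sigma$ that reappears in the threshold $1/(4\sigma)$ of items~(2) and~(3).

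Writing $w=|u_\om|-|\Psi_{n^*}|$ and using $\big||u_\om|-|\Psi_{n^*}|\big|\leq\big|u_\om-e^{i\alpha}\Psi_{n^*}e^{in^*\theta}\big|$ pointwise, the $L^2$-proximity (\ref{eq0:unseulmode}) gives $\|w\|_{L^2}\leq C_\ep\om^{-1/16+\ep}$. I would then convert this into the pointwise bound (\ref{eq0:estimunif}) through a weighted interpolation of the type ``a Lipschitz function with small weighted $L^2$ norm is pointwise small'': the elementary two-dimensional mechanism is that if $|w(x_0)|=t$ then $|w|\geq t/2$ on a ball of radius $t/(2\|\nabla w\|_{L^\infty})$, so that $\|w\|_{L^2}^2\gtrsim t^4/\|\nabla w\|_{L^\infty}^2$. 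Carried through with the Agmon weight of the previous step, the large-gradient region near the annulus contributes the concentrated term $C_\ep\om^{3/16+\ep}e^{-\sigma\om(|x|-1)^2}$ while the exponential tail contributes the uniform remainder $C_\ep\om^{-1/32+\ep}$. Since $3/16<1/4$, the resulting bound is $\ll\|\Psi_{n^*}\|_{L^\infty}\propto\om^{1/4}$, which is (\ref{eq0:estimunif2}).

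Items~(2) and~(3) then follow from (\ref{eq0:estimunif}) and the profile $|\Psi_{n^*}|\lesssim\om^{1/4}e^{-\sigma\om(|x|-1)^2}$. On $\HH_{\om}^{\delta}$ one has $e^{-\sigma\om(|x|-1)^2}\leq\om^{-\sigma\delta}$, so $|\Psi_{n^*}|\lesssim\om^{1/4-\sigma\delta}\to0$ as soon as $\delta>1/(4\sigma)$, and the two error terms in (\ref{eq0:estimunif}) are bounded there by $\om^{3/16-\sigma\delta}+\om^{-1/32}\to0$ since $\sigma\delta>1/4>3/16$; this proves the uniform decay of $|u_\om|$ on $\HH_{\om}^{\delta}$. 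For item~(3) I argue by contraposition: if $\left||x_\om|-1\right|^2<\delta\om^{-1}\ln(\om)$ with $\delta<1/(4\sigma)$, then $|\Psi_{n^*}(x_\om)|\gtrsim\om^{1/4-\sigma\delta}$ with a positive exponent, and since the error in (\ref{eq0:estimunif}) at $x_\om$ is smaller than $|\Psi_{n^*}(x_\om)|$ by a factor $O(\om^{3/16-1/4})=O(\om^{-1/16})\to0$, we obtain $|u_\om(x_\om)|\geq\tfrac12|\Psi_{n^*}(x_\om)|\to+\infty$, contradicting $u_\om(x_\om)\to0$; hence (\ref{eq4:pasvortex}).

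The main obstacle is the passage from $L^2$ to the pointwise estimate. A naive unweighted interpolation with $\|\nabla w\|_{L^\infty}\sim\om^{3/4}$ and $\|w\|_{L^2}\sim\om^{-1/16}$ only gives $\|w\|_{L^\infty}\lesssim\om^{11/32}\gg\om^{1/4}$, which is useless since it exceeds the amplitude of $\Psi_{n^*}$; the Agmon weight is precisely what trades the large gradient factor for Gaussian decay and brings the bound below $\om^{1/4}$. Making this trade quantitative --- choosing $\sigma$ so that the same constant governs both the decay of $\Psi_{n^*}$ and the threshold $1/(4\sigma)$, and propagating the arbitrarily small $\ep$-losses consistently through the conjugation --- is the delicate point, and is also what restricts the conclusions to holding along a subsequence.
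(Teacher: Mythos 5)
Your reductions of items (2) and (3) to the pointwise estimate (\ref{eq0:estimunif}) are sound and essentially identical to the paper's. The problem is item (1), which is the heart of the theorem: the mechanism you propose for passing from $L^2$ to $L^\infty$ cannot produce the bound $C_\ep\om^{3/16+\ep}$ where it is actually needed, namely on the annulus $\left||x|-1\right|\lesssim \om^{-1/2}$ where the Gaussian weight $e^{-\sigma\om(|x|-1)^2}$ is of order one. There the Agmon conjugation buys you nothing, and your own computation shows that the Lipschitz interpolation $t^4\lesssim \Vert w\Vert_{L^2}^2\Vert\nabla w\Vert_{L^\infty}^2$ with $\Vert w\Vert_{L^2}\sim\om^{-1/16+\ep}$ (which is all that (\ref{eq0:unseulmode}) provides) and $\Vert\nabla w\Vert_{L^\infty}\sim\om^{3/4}$ yields only $\om^{11/32}$; since $3/16=6/32<11/32<1/4+{}$nothing useful, this is worse than the trivial bound $\Vert\Psi_{n^*}\Vert_{L^\infty}\propto\om^{1/4}$. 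To reach $\om^{3/16}$ by this route you would need $\Vert w\Vert_{L^2}\lesssim\om^{-3/8}$, far better than what the single-mode comparison (\ref{eq0:unseulmode}) gives, and no weight can manufacture that improvement in the region where the weight is $O(1)$.

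The paper's proof gets the exponent $3/16$ from an entirely different source. It compares $u_\om$ not with the single mode $\Psi_{n^*}e^{in^*\theta}$ but with the multi-mode approximation $\hat u_\om$ of (\ref{uchapeau}), built from all modes $n$ with $|n-n^*|\leq\om^{7/16}$ via the refined profiles $\phi_n=g_{1,n}+\om^{-1/2}\Gamma_n$ of Proposition \ref{theo:modes}. Because $\hat u_\om$ nearly satisfies the same linearized equation as $u_\om$, the residual $u_\om-\hat u_\om$ is small in $L^2$ at the much better rate $\om^{-9/16+\ep}$, and the elliptic estimates for the Ginzburg--Landau operator (Lemma \ref{theo:LuPan}, as in Proposition \ref{theo:estimsobolev}) upgrade this to $\Vert u_\om-\hat u_\om\Vert_{L^\infty}\leq C_\ep\om^{-1/32+\ep}$ --- this is where your second error term comes from, and it is an interpolation of the type you have in mind, but applied to a quantity with a genuinely small $L^2$ norm and a good equation. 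The term $C_\ep\om^{3/16+\ep}e^{-\sigma\om(|x|-1)^2}$ then arises algebraically, not by interpolation: each subdominant mode contributes at most $|\langle f_n,g_{1,n}\rangle|\,|\phi_n|\lesssim|\langle f_n,g_{1,n}\rangle|\,\om^{1/4}e^{-\sigma\om(|x|-1)^2}$, and the mode-confinement estimate (\ref{confinefinal'}) bounds $\sum_{n\neq n^*}|\langle f_n,g_{1,n}\rangle|$ by $C_\ep\om^{-1/16+\ep}$ via Cauchy--Schwarz against $\sum|n-n^*|^{-2}$, whence $\om^{1/4}\cdot\om^{-1/16}=\om^{3/16}$. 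Without this decomposition into a strongly controlled residual plus an explicit superposition of Gaussian modes, your argument does not close.
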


We remark that whereas there can be a great number (proportional to $G$) of integers $n^*$ satisfying (\ref{eq0:resultenergie}) and (\ref{eq0:resultdensite}), there are at most two integers satisfying (\ref{eq0:resultenergie2}) and (\ref{eq0:unseulmode}). More precisely, two cases can occur : either $\gamma_n$ is minimized at a single $\hat{n}$ and the whole sequence $u_{\om}$ satisfies (\ref{eq0:unseulmode}) and (\ref{eq0:estimunif}) with $n^* = \hat{n}$ or $\gamma_n$ is minimized at two integers $\hat{n}$ and $\hat{n}+1$. This case is very particular and happens only for special values of the parameters. For these particular values of the parameters, the sequence $u_{\om}$ could in principle oscillate indefinitely between $\Psi_{\hat{n}} e^{i\hat{n} \theta }$ and $\Psi_{\hat{n}+1} e^{i(\hat{n} +1)\theta }$, so that we have to extract a subsequence to state the result.\\

We will discuss these results in further details in Section 1.4 and now present some ideas of the proofs. It is natural (see \cite{FJS} for example) to construct a first trial function of the form
\begin{eqnarray}\label{eq0:ftest1}
 v_{test}(r,\theta)=f_{test}(r) e^{i\hat{n} \theta}\\
\nonumber f_{test}(r)=c_{test} \xi \left( \frac{r-1}{\eta} \right) \nonumber 
\end{eqnarray}
where $\xi: \R \rightarrow \R$ is chosen with $\int_{\R} \xi ^2 = 1$, $\hat{n}=[\om]$ is the closest integer to $\om $ and $\eta$ a small parameter to be chosen later on. We want the mass constraint (\ref{eq0:masseu}) to be satisfied, which implies 
 \begin{equation}\label{eq0:ctest1}
c_{test}^2\sim ( \sqrt{2} \pi \eta )^{-1}  
 \end{equation}
when $\eta \rightarrow 0$ under some conditions on $\xi$, see Remark \ref{Rq1} in Section 2 below.\\
Expanding the potential $D_{\Om}\om^2 \left(\vert x\vert^2 - 1 \right)^2$ allows to compute the energy of $v_{test}$. Minimizing it with respect to $\eta$ gives the main scales of the problem: taking
\begin{equation}
\eta = \left(2D_{\Om} +4 \right)^{-1/4}\om^{-1/2} 
\end{equation}
we get
\begin{equation}\label{eq0:energietest}
F_{\om}(u_{\om})\leq F_{\om}(v_{test}) \leq \left(2D_{\Om} +4 \right)^{1/2}\om \int_ {\R}  \left( \xi '(x) ^2 + x^2 \xi ^2(x) \right)dx +O(G\om^{1/2}) 
\end{equation}
so that when $\om \gg G^{2}$ the natural choice is to take for $\xi$ the Gaussian $\xi_1$ achieving the infimum
\begin{equation}
1 = \inf _{\int _{\R} \xi ^2 = 1} \int _{\R} \left( \xi '(x) ^2 +  x^2 \xi ^2(x)  \right)dx.
\end{equation}
The symmetry of $\xi_1$ allows to improve the remainder term in (\ref{eq0:energietest}):
\begin{equation}\label{eq0:energietest'}
F_{\om}(u_{\om})\leq F_{\om}(v_{test})  \leq \left(2D_{\Om} +4 \right)^{1/2} \om + G\om^{1/2}\frac{\left(2D_{\Om}+4\right)^{1/4}}{\sqrt{2}\pi}\int_{\R} \xi_1 ^4 (x)dx + O(1).
\end{equation}
This computation suggests that the dominant terms in the energy will be the kinetic and potential contributions, whose sum will be of the order of $\om$. The interaction energy will appear only at second order, with a contribution proportional to $G\om^{1/2}$, which is much smaller than $\om$ in the extreme rotation regime.\\  
Thus, it is natural to expand $u_{\om}$ in a Fourier series in the angular variable
\begin{equation}\label{eq0:Fourier}
 u_{\om}(r,\theta)= \sum_{n\in \Z} f_n(r) e^{in\theta}.
\end{equation}
Proving that $u_{\om}$ converges to a function of the form (\ref{eq0:cible}) then amounts to showing that $f_n$ converges to $0$ for all $n$ except one, and the crucial fact will be the following decoupling of the quadratic part of the energy (\ref{eq0:modele3}):
\begin{equation}\label{eq0:decouple}
F_{\om}(u_{\om})= \sum_{n\in \Z} F_n(f_n) + G \int_{\R^2} |u_{\om}|^4,
\end{equation}
with the energies $F_n$ being defined by (\ref{eq0:EnergieN1}). The proof of Theorem \ref{theo:densite/energie} is based on the detailed study of the ground states of these energies. In particular, for most values of the parameters,  $\lambda_{1,n}$ is minimized at a unique integer, $n^*$.\\
We prove that because of the properties of the potentials $V_n$, some modes $f_n$ carry too much energy to match our upper bound (\ref{eq0:energietest'}) and therefore can be eliminated. For the other ones, we use a lower bound on the gap between the ground-state energy of (\ref{eq0:EnergieN1}) and the first excited level. This gap is of order $\om$, which is much larger than the interaction energy if $\om \gg G^2$, as shown by the preceding computation. We use this fact to prove that $f_n\propto g_{1,n}$ when $\om$ goes to infinity.\\
This very particular form allows one to bound the quartic part of the energy from below. Combining the lower bound with the analysis of the one dimensional energies gives an estimate of the total energy contribution of each mode. Two effects come into play: concentration of the mass of $u_{\om}$ on the mode $f_{n^*}\propto g_{1,n^*}$ is favorable for the quadratic part of the energy because $\lambda_{1,n^*}$ is the minimum of $\lambda_{1,n}$ with respect to $n$, but it increases the quartic part. We then show, confirming the energy scaling of (\ref{eq0:energietest'}) that, in order to match the subleading order of the energy, it is more favorable for $u_{\om}$ to have its mass concentrated on only one mode. \\
However, the remainder terms that we obtain are not small enough to confirm this rigorously in the whole range of parameters (\ref{eq0:scaling}). The energy expansion (\ref{eq0:resultenergie}) still allows for many modes (a number proportional to $G$) not to be asymptotically $0$ in the limit. We get the density asymptotics (\ref{eq0:resultdensite}) nevertheless because the moduli of these modes are very close to one another.\\
If we fix $G$ we can prove (\ref{eq0:unseulmode}), which is much more precise than (\ref{eq0:resultdensite}) and requires a refinement of the method described above. In particular, using the Euler-Lagrange equation for $u_{\om}$ and the result (\ref{eq0:resultdensite}) we refine the estimate $f_n\propto g_{1,n}$, finding a second order correction. This in turn allows to refine the remainder terms in the energy expansions and prove (\ref{eq0:unseulmode}).\\
The dominant part of the Euler-Lagrange equation (terms corresponding to the variations of the kinetic and potential energies) is linear in the situation we consider. We are thus able, using elliptic estimates for the Ginzburg-Landau operator due to Lu and Pan \cite{LuPan}, to get estimates in stronger norms. The uniform estimate (\ref{eq0:estimunif}) is obtained by this method. It implies that $u_{\om}$ is confined to a shrinking annulus in which there are no zeroes, because so is $\Psi_{n^* }$ (see Theorem \ref{theo:1D1NL}).\\
We remark that a Fourier expansion such as (\ref{eq0:Fourier}) has been used in \cite{BPT} to compute the lowest eigenvalue of the Ginzburg-Landau operator on a disc with an applied field going to infinity, but with no potential term in the energy. In that paper the minimizer gets confined close to the boundary of the disc, whereas in our work it is the potential that forces the confinement to a shrinking annulus. In \cite{BPT}, a rescaling similar to (\ref{eq0:ftest1}) leads to an energy expansion in powers of the small characteristic length of the problem. We perform a similar analysis in Section 2, with the difference that our original problem is posed on the whole space $\R^2$, leading to rescaled eigenvalue problems on a line whereas in \cite{BPT} a compact domain is considered, so that after rescaling one gets eigenvalue problems on a half line.\\

\subsection{Discussion}

Theorems \ref{theo:densite/energie2} and \ref{theo:vortex} confirm the expected qualitative features at high rotation speeds : the condensate is tightly confined on a shrinking annulus in which there are no vortices. We identify a simplified limiting profile with a Gaussian shape (see Theorems \ref{theo:1D1} and \ref{theo:1D1NL} for details on the functions $g_{1,n^*}$ and $\Psi_{n^*}$ respectively). We are also able to identify a global limiting phase and prove that all vortices gather in the central low-density hole, confirming that a giant vortex state appears when $\om$ goes to infinity with $G$ fixed.\\

We prove that the width of the annulus where the mass is concentrated is of order $ \om^{-1/2} \sim \Om^{-3/2} k ^{1/2}$ in the extreme rotation limit. This is larger than the width predicted in former physical studies \cite{FJS,FuZa,KiFe}, and our limiting density is more regular than the Thomas-Fermi type profiles they suggest. This is due to the fact that these studies are made assuming that the interaction energy is the dominant part of the energy, leading to a profile solution of a nonlinear equation. Here we study a different regime of parameters and in particular we show that the Thomas-Fermi approximation for strong interactions breaks down in the extreme rotation limit, where $u_{\om}$ converges to the solution of some linear reduced problem.\\

We used a two-dimensional Gross-Pitaevskii energy instead of the complete 3D energy. A rather natural question would then be to find in which situations the reduction to the two-dimensional model is justified. One can always assume that the condensate is tightly confined in the direction of the rotation axis, so that the problem is essentially 2D, but in such fast rotation regimes it is to be expected that the reduction is valid independently of the strength of the vertical confinement (see \cite{AB-2D} where this is shown for the case of a fast rotating condensate in a harmonic trap).\\ 

The extreme rotation regime can be reached in two ways : either by letting $\Om$ tend to infinity with
\begin{equation}\label{eq0:scaling0}
 \Om \rightarrow +\infty, \quad \Om^3 \gg k, \quad \Om \gg G^{2/3}k.
\end{equation}
or by keeping $\Om$ of the order of a constant (strictly larger than $1$) and letting $k$ tend to zero with 
\begin{equation}\label{eq0:scaling01}
 \Om \rightarrow \Om_0 > 1, \quad k\rightarrow 0,\quad  k \ll G ^{-2}
\end{equation}
where $\Om_0$ is a constant.\\
In a recent series of papers \cite{CRY1,CRY2,CY}, the large $\Om$ limit has been studied for traps related to the one we study (homogeneous traps, $V(x)=|x|^s$ with $s>2$, and flat trap $V(x)=0$ for $|x|\leq 1$ and $V(x)=+\infty$ for $|x|>1$). Their analysis applies to our problem in regimes ``opposite'' to the two described above, namely under the condition that either $\Om \ll G^{2/3}k$ (for the regime opposite to (\ref{eq0:scaling0})) or with $k \gg G ^{-2}$ (for the regime opposite to (\ref{eq0:scaling01})). In this kind of regimes, the energy and density asymptotics are given to leading order by a limiting problem of Thomas-Fermi type. The limiting energy depends on the density only and is obtained from (\ref{eq0:modele4}) by neglecting the kinetic (first) term.\\
If $\Om_0$ is allowed to be $1$ in (\ref{eq0:scaling01}), we have $D_{\Om}\rightarrow 0$ and so the potential term in the energy becomes negligible compared to the kinetic one. A solution $u_{\om}$ should then converge to its projection on the first eigenspace of the Ginzburg-Landau operator $-\left(\nabla - i\om x^{\perp}\right)^2$. This space is called the lowest Landau level and has been widely used for the study of rapidly rotating Bose-Einstein condensates in harmonic traps (see \cite{ABN} and references therein). For the case $\Om_0 =1$ we refer to \cite{Moi} where we have studied a regime of this kind in the lowest Landau level.\\

The main drawback of the method we present here is that it confirms the appearance of the giant vortex state but does not give the critical speed at which the transition would be expected to happen. Finding such a critical speed is an important issue for the condensed matter community (see for example \cite{FuZa,KiFe}). For simplicity and comparison with other results we will consider $k$ as fixed in the following discussion.\\
At rotation speeds much smaller than those we have considered, the prefered state is known \cite{CY} to be a vortex lattice encircling a central low-density hole. The transition between this state and the giant vortex is not expected to happen in the regime $\Om \sim G^{2/3}$. Indeed, the results in \cite{CY} (see also \cite{FB}) allow to conjecture that in the case of a flat trap ($V(x)=0$ for $|x|\leq 1$ and $V(x)=+\infty$ for $|x|>1$) the transition should happen when $\Om \sim G(\log G)^{-1}$. Combined with the scalings in \cite{CRY2} this suggests that for our trap (\ref{eq0:potentielquart}) the vortices will disappear from the condensate when $\Om \sim G^{2/3}(\log G)^{-1}$, before the regime we have studied here is reached.\\  
In a forthcoming work \cite{Nous} we study, in the case of a flat trap, the transition regime $\Om \sim G(\log G)^{-1}$ (corresponding to $\Om \sim G^{2/3}(\log G)^{-1}$  for the trap (\ref{eq0:potentielquart})) using tools from the Ginzburg-Landau theory \cite{BBH,San-Ser}. This completes the analysis in \cite{CY} where it is proved that the vorticity is uniform in the bulk for $\log G \ll \Om \ll G(\log G)^{-1}$ and bridges between the situations considered in \cite{CRY1,CRY2,CY} and the present paper. We do believe that statements such as Theorem \ref{theo:densite/energie2} and \ref{theo:vortex} are valid in the whole regime (\ref{eq0:scaling}), but the proofs should rely on the tools used in \cite{Nous} that allow for a better understanding of the energetic cost of the vortices.\\ 

\bigskip

We make one last remark before turning to the proofs of our results, regarding the validity of the Gross-Pitaevskii description. It is expected to be valid when the number of particles is much larger than the number of occupied states. In that case a significant fraction of these particles must occupy the same energy state, which is the phenomenon of Bose-Einstein condensation. In Section 2 we analyze in some detail the eigenstates of the single-particle Hamiltonian (corresponding to the linear part in (\ref{eq0:EELu})). In particular, we show (Corollary \ref{cor:lambdan/n^*}) that the energy splitting between two neighboring single particle states is bounded below by a constant. On the other hand, the interaction energy per particle (remark that we have taken a unit mass constraint, so that $F_{\om}$ actually represents the energy per particle of the condensate, and $G\int_{\R^2} |u_{\om}|^4$ the interaction energy per particle) is of order $G\om^{1/2}$ as a consequence of Theorem \ref{theo:densite/energie}. The number of occupied states can be computed by dividing the interaction energy by the energy splitting between two neighboring single particle states, so that one should expect the Gross-Pitaevskii description to be valid when
\[
 G\om^{1/2}\ll N
\]
where $N$ is the number of particles.\\
This is interesting because the Gross-Pitaevskii description for rotating bosons is known to break down in some regimes of fast rotation, even at zero temperature. It is the case for a condensate trapped by an harmonic potential of the kind (\ref{eq0:potentielharmonique}) when the rotation speed is too close (in some sense related to the number of particles considered) to the limit value $\min(a_1,a_2)$. In that situation, one must go back to the Hamiltonian for $N$ bosons and the system exhibits strongly correlated states (fractional quantum Hall effect). We refer to \cite{LeSe} and references therein for details on this phenomenon.\\
The fractional quantum Hall effect happens in a regime where the minimization is restricted to the lowest Landau level. The corresponding situation in our setting would be a regime like (\ref{eq0:scaling01}) but with $\Om_0 = 1$, which we do not allow in this paper. We are thus not in a regime where one should expect to get strongly correlated states.\\

\bigskip

The article is organized as follows: in Section 2 we first gather the main notation that is to be used in the paper, then we study the one-dimensional energies (\ref{eq0:EnergieN1}), providing the lower bounds on their ground states and first excited levels that we use in our analysis. In Section 3 we show how the results of Section 2 allow to bound the quartic term in (\ref{eq0:modele3}) from below and prove Theorem \ref{theo:densite/energie}. The proofs of Theorems \ref{theo:densite/energie2} and \ref{theo:vortex} are then presented in Sections 4, with each main step occupying a subsection. 

\section{The one dimensional energies}

We recall the definition of the one dimensional energies we are interested in:
\begin{equation}\label{eq0:EnergieN}
 F_n(f)=2\pi \int _ {\R ^+} \left( |f' (r)|^2 +V_n(r)|f(r)|^2 \right)rdr
\end{equation}
for every $f\in H^1(\R^+,rdr)\cap L^2(\R^+,rV_n(r)dr)$. The potential $V_n$ is given by:
\begin{eqnarray}\label{eq:Vn}
V_n(r)&=& V(r) +\left|\frac{n}{r} -\om r\right|^2 \\
&=&\frac{n^2}{r^2}-2n\om + \left(1-D_{\Om}\right)\om^2 r^2 + D_{\Om}\frac{\om ^2}{2}+D_{\Om}\frac{\om^2}{2} r^4.\nonumber
\end{eqnarray} 
Where 
\begin{equation}\label{eq1:Voriginal}
V(r) = D_{\Om}\frac{\om ^2}{2} \left(r^2-1\right)^2
\end{equation}
is the potential in the energy (\ref{eq0:modele4}).\\
A straigthforward computation shows that $V_n$ has a unique minimum point for $r>0$, which we shall note $R_n$. It is uniquely defined by the equation
\begin{equation}\label{eq0:rn*}
R_n ^6 +\frac{1-D_{\Om}}{D_{\Om}}R_n ^4 = \frac{n^2}{D_\Om \om ^2} 
\end{equation}
and it is interesting to note that if $n=\om$ then $R_n = 1$, which is the minimum point of the original potential $V$.\\ 
It is classical to show that there is an increasing sequence $\left(\lambda_{j,n}\right)_{j=1\cdots+\infty}$ of eigenvalues for the operator associated to $F_n$. The corresponding sequence of normalized eigenfunctions $\left(g_{j,n}\right)_{j=1\cdots+\infty}$ is a Hilbert basis for the space $H^1(\R^+,rdr)\cap L^2(\R^+,rV_n(r)dr)$. Considering the one-dimensional functions $g_{j,n}$ as radial functions defined on $\R ^2$, one has
\begin{equation}\label{EElgjn}
 -\Delta g_{j,n} + V_n g_{j,n} = \lambda_{j,n} g_{j,n}
\end{equation}
and 
\begin{equation}\label{eq0:normalisation}
\int_{\R ^2} g_{j,n}^2(x) dx = 2\pi \int_{\R ^+} g_{j,n}^2(r)rdr =1.
\end{equation}
We want to study the asymptotics of the first two eigenvalues $\lambda_{1,n}$ and $\lambda_{2,n}$ of the operator associated to $F_n$. 
We remark that $g_{1,n}$ is unique up to a multiplicative constant of modulus $1$, that we fix so as to have 
\begin{equation}\label{g1npositive}
g_{1,n}\geq 0.
\end{equation}
$\lambda_{1,n}$ satisfies
\begin{eqnarray}\label{eq1:lambda1n}
\lambda_{1,n}&=& F_n(g_{1,n}) \nonumber\\
&=& \inf \left\lbrace F_n(f), 2\pi \int_{\R ^+} f^2(r)rdr =1  \right\rbrace
\end{eqnarray}
and $\lambda_{2,n}$ 
\begin{eqnarray}\label{eq1:lambda2n}
\lambda_{2,n}&=& F_n(g_{2,n}) \nonumber \\
&=& \inf \left\lbrace F_n(f), 2\pi \int_{\R ^+} f^2(r)rdr =1, \int_{\R^+} f(r)g_{1,n}(r)rdr = 0  \right\rbrace.
\end{eqnarray}
Note that $g_{2,n}$ is not uniquely defined a priori. 

In fact it is not necessary to carry out the analysis for every $n$. Indeed, we have the following result :

\begin{lemma}\label{theo:lemme1}
Let $\lambda_{1,n}$ be defined by (\ref{eq1:lambda1n}). There are constants $a,b>0$ and $c>\sqrt{6}$ so that for
\[
 \vert n-\om \vert > a\:\om^{1/2}
\]
or equivalently
\[
\vert R_n - 1\vert > b\:\om^{-1/2}
\]
we have 
\begin{equation}\label{eq0:premiersmodesborneinf}
\lambda_{1,n} > c\: \om 
\end{equation}
\end{lemma}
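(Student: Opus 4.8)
The plan is to obtain the lower bound on $\lambda_{1,n}$ directly from the potential $V_n$, throwing away the (nonnegative) kinetic term, and then to exploit the nonnegative splitting of $V_n$ recorded in (\ref{eq:Vn}). By the variational characterization (\ref{eq1:lambda1n}) and the mass constraint, dropping $\int |f'|^2$ gives
\[
\lambda_{1,n} \;\geq\; \min_{r>0} V_n(r) \;=\; V_n(R_n),
\]
since $R_n$ is the unique minimizer of $V_n$. Now $V_n = V + |n/r - \om r|^2$ is a sum of two nonnegative terms, so I simply discard the angular-momentum contribution and use (\ref{eq1:Voriginal}) to get
\[
\lambda_{1,n} \;\geq\; V(R_n) \;=\; D_{\Om}\frac{\om^2}{2}\bigl(R_n^2-1\bigr)^2.
\]

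The key elementary estimate is then $(R_n^2-1)^2 = (R_n-1)^2(R_n+1)^2 \geq (R_n-1)^2$, valid because $R_n>0$ forces $(R_n+1)^2>1$. Fixing a constant $D_0>0$ with $D_{\Om}\geq D_0$ (this holds in the regimes considered, where $D_{\Om}\to D\in(0,1]$), I obtain $\lambda_{1,n}\geq \tfrac{D_0\om^2}{2}(R_n-1)^2$, whence
\[
|R_n - 1| > b\,\om^{-1/2} \quad\Longrightarrow\quad \lambda_{1,n} > \frac{D_0 b^2}{2}\,\om.
\]
I then set $c := \tfrac{D_0 b^2}{2}$ and choose $b$ large enough that $c>\sqrt6$. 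This already proves the conclusion under the second (radius) form of the hypothesis. The threshold $\sqrt6$ is exactly what is needed: the near-optimal modes satisfy $\lambda_{1,n}\sim\sqrt{2D_{\Om}+4}\,\om\leq\sqrt6\,\om$, so the point of the lemma is that the discarded modes sit strictly above this value.

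It remains to show that the two hypotheses are equivalent, i.e. to translate the threshold on $|R_n-1|$ into one on $|n-\om|$. For this I study the map $n\mapsto R_n$ defined implicitly by (\ref{eq0:rn*}): writing $n = \om\,\psi(R_n)$ with $\psi(r) = r^2\sqrt{D_{\Om} r^2 + 1 - D_{\Om}}$, the function $\psi$ is strictly increasing with $\psi(1)=1$ and $\psi'(1)=2+D_{\Om}>0$, so $n\mapsto R_n$ is an increasing bijection with $R_{\om}=1$. A first-order expansion at $R_n=1$ gives $n-\om = (2+D_{\Om})\,\om\,(R_n-1) + O\bigl(\om (R_n-1)^2\bigr)$, so $|R_n-1|>b\,\om^{-1/2}$ corresponds to $|n-\om|>a\,\om^{1/2}$ with $a$ proportional to $(2+D_{\Om})b$; by monotonicity the two regions coincide after adjusting the constants.

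The one step requiring genuine care is this last one: controlling $n\mapsto R_n$ through (\ref{eq0:rn*}) and matching the two thresholds with constants uniform in $\om$ (and in $D_{\Om}$ along the relevant scaling). The energy bound itself is a one-line consequence of the nonnegative splitting of $V_n$, the observation being that discarding both the kinetic and the angular-momentum contributions still leaves enough of the double-well potential $V$ to push $\lambda_{1,n}$ above $\sqrt6\,\om$ as soon as $R_n$ is displaced from $1$ by more than $b\,\om^{-1/2}$.
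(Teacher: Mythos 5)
Your proof is correct and follows essentially the same route as the paper: drop the kinetic term to get $\lambda_{1,n}\geq V_n(R_n)\geq V(R_n)$, bound $V(R_n)$ below by a multiple of $\om^2(R_n-1)^2$, and translate between the two threshold conditions via the monotone map $n\mapsto R_n$ defined by (\ref{eq0:rn*}). Your version merely spells out a few elementary steps (the inequality $(R_n^2-1)^2\geq(R_n-1)^2$ and the uniform lower bound on $D_{\Om}$) that the paper leaves implicit.
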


\begin{proof}
For $|R_n-1|>c_1\om^{-1/2}$, one has 
\begin{equation*}
\lambda_{1,n} \geq 2\pi \int_{\R^+} V_n (r)|g_{1,n}(r)|^2 rdr \geq 2\pi V_n (R_n)\int_{\R^+} |g_{1,n}(r)|^2 rdr = V_n(R_n) \geq V (R_n) > c_2\: \om  
\end{equation*}
for two constants $c_1,c_2$.\\
On the other hand one can see from formula (\ref{eq0:rn*}) that $\om \gg |n- \om|$ is equivalent to $1 \gg |R_n-1|$. For such values of $n$ and $R_n$ we have $|R_n-1| \sim C \om^{-1} |n-\om|$, for a certain constant $C$, thus $\om \gg |n- \om| > c_3\om^{1/2}$ is equivalent to $1 \gg |R_n-1|>c_2\om^{-1/2}$. We then note that the function mapping $R_n$ to $n$ is strictly increasing and has a strictly increasing inverse, which allows to deduce that $|n- \om| > c_3\om^{1/2}$ is equivalent to $|R_n-1|>c_2\om^{-1/2}$ for any value of $n$ and $R_n$. There only remains to tune the constants $c_1,c_2,c_3$ to conclude the proof.
\end{proof}

We will denote
\begin{equation}\label{eq0:N3/2}
\mathcal{N}_{1/2}= \left\lbrace n\in \Z ,\quad \left| n-\om \right| \leq a\:\om^{1/2} \right\rbrace 
\end{equation}
and study in detail the ground state and first excited level of the energy (\ref{eq0:EnergieN}) under the mass constraint (\ref{eq0:normalisation}) for $n\in \mathcal{N}_{1/2}$. For $n\notin \NN_{1/2}$, the simple lower bound (\ref{eq0:premiersmodesborneinf})
matched with the upper bound (\ref{eq0:energietest'}) will be enough for our purpose (note that $\left(2D_{\Om} +4 \right)^{1/2}\leq \sqrt{6}$).\\
We now describe the main results concerning the modes $n\in \NN_{1/2}$. As we will see $g_{1,n}$ and $g_{2,n}$ are asymptotically confined on a domain centered on $R_n$ which size is of the order of $\om^{-1/2}$. More precisely we denote
\begin{equation}\label{eq0:h}
 h_n= \left( \frac{2}{V_n '' (R_n)}\right)^{1/4}.
\end{equation}
From (\ref{eq:Vn}) and (\ref{eq0:rn*}) it is straigthforward to see that there is constant $h$ so that for any $n\in \NN_{1/2}$ 
\begin{equation}\label{eq0:h2}
 h_n = h\om^{-1/2}\left( 1 + o(1) \right).
\end{equation}
We define the rescaled functions $\xi_{i,n}$, $i\in\left\lbrace 1,2\right\rbrace $ by
\begin{equation}\label{eq0:rescaling}
 \xi _{i,n}(x)= c_{i,n}^{-1} g_{i,n}\left( R_n + h_n x\right)
\end{equation}
where $\Vert \xi_{i,n}\Vert_{L^2}=1$ and
\begin{equation}\label{eq1:cin}
c_{i,n}^2  =\frac{1}{2\pi h_n R_n + 2\pi h_n^2\int_{-\frac{R_n}{h_n}}^{+\infty} x\xi_{1,n}(x)^2dx}\sim \frac{1}{2\pi h_n R_n}.
\end{equation} 

\begin{remark}\label{Rq1}
In order for the asymptotics in equations (\ref{eq0:ctest1}) and (\ref{eq1:cin}) to be justified, we need that $\xi$ and $\xi_{1,n}$ decay faster than any polynomial when $x$ goes to infinity. A typical example is to take for $\xi$ a Gaussian, or more generally some function with an exponential decay. Very often in the rest of the paper we will have to deal with quantities of the form
\[
 \int ^{+\infty}_{-d_{\om}} F(x)dx
\]
with $F:\R \rightarrow \R$  an exponentially decreasing function ($F(x)=x^k e^{-Cx^2}, \quad k\in \N$ for example), and $d_{\om}= O(\om ^{\kappa}), \: \kappa >0$. We always replace such quantities by the corresponding 
\[
 \int ^{+\infty}_{-\infty} F(x)dx
\]
which is justified because the decay rate of $F$ guarantees that the difference between the two will be exponentially small as $\om$ goes to infinity, whereas all the other remainder terms that we will encounter will behave like powers of $\om$.\\
Note that the decay of $\xi_{1,n}$ is proved in Proposition (\ref{theo:decexpo1d}). 
\end{remark}

\textbf{Notation.} In the rest of the paper, we will always mean by $O(\om^k)$ a quantity bounded by $\om^k$ uniformly with respect to $D_{\Om}$ and $n$. $C$ will be a generic positive constant depending neither on $\om$ or $D_{\Om}$ nor on $n$. When writing that a function is a $O_{L^2}(\om^k)$ for example we mean that the function's $L^2$ norm is bounded by $\om^k$ uniformly with respect to $D_{\Om}$ and $n$. For the rescaled functions of the type (\ref{eq0:rescaling}) we will often use the notation $O_{HO}(\om^k)$, the $HO$ symbol meaning that the $O(\om^k)$ is taken in the norm associated with the harmonic oscillator (\ref{eq0:OH}), namely the norm in $H^1(\R)\cap L^2(\R,x^2dx)$.\\
We recall that the notation $a \propto b$ has the usual meaning that $a/b$ converges to some constant.\\ 

From now on and in the rest of this section, we always implicitly consider only those $n$ that belong to $\NN_{1/2}$, which have the following important properties deduced from (\ref{eq:Vn}) and (\ref{eq0:rn*}): 
\begin{itemize}
\item $|n-\om|\leq a \om^{1/2}$
\item $R_n$ and $\frac{1}{R_n}$ are bounded uniformly with respect to $\om$ and $D_{\Om}$, more precisely $|R_n-1|\leq b \om^{-1/2}$. Moreover we have 
\begin{equation}\label{Rp-Rq}
 |R_p - R_q | \propto \om^{-1} |p-q| \mbox{ for any }p,q\in \NN_{1/2}.
\end{equation}
\item $h_n\propto \om^{-1/2}$. Moreover we have 
\begin{equation}\label{hp-hq}
 |h_p - h_q | \propto \om^{-3/2} |p-q|\mbox{ for any }p,q\in \NN_{1/2}.
\end{equation}
\item $V^{(k)}_n (R_n) \propto \om^2$ for any $k\geq 2$.
\end{itemize}

\bigskip

We are able, using the localization property of $g_{1,n}$ and $g_{2,n}$, to expand the corresponding eigenvalues in powers of $h_n$. We have the followings results:\\

\begin{theorem}[Asymptotics for the ground states of the one-dimensional linear problems] \label{theo:1D1} \mbox{}\\
Let $\lambda_{1,n}$ be defined by (\ref{eq1:lambda1n}).\\ 
Suppose that $n\in \NN _{1/2}$ as defined in equation (\ref{eq0:N3/2}) and let $R_n$ be defined in equation (\ref{eq0:rn*}).\\  
We have as $\om \rightarrow \infty$ and $D_{\Om}\rightarrow D$
\begin{equation}\label{eq0:DLenergie1}
\lambda_{1,n}= V_n(R_n) + \sqrt{\frac{V_n '' (R_n)}{2}} + K_n \\
\end{equation}
where $K_n=O(1)$. Let $\xi_1$ be the normalized ground state of the harmonic oscillator (\ref{eq0:OH}) and  $\xi_{1,n}$ be defined by equation (\ref{eq0:rescaling}). We have
\begin{equation}\label{eq0:DLfonction}
\xi_{1,n}=\xi_1 + h_n P_n \xi_1 + h_n^2 Q_n \xi_1 +O_{HO}(\om^{-3/2})\\
\end{equation}
where $h_n$ is defined in equation (\ref{eq0:h}), $P_n$ and $Q_n$ are two polynomials whose coefficients are bounded independently of $\om$ and $D_{\Om}$ and depend continuously on $n$ in the following sense:
\begin{eqnarray}\label{DLxi2}
P_p \xi_1  & =& P_q \xi_1  + O_{HO}(\om^{-1}|p-q|)\\
Q_p \xi_1  & =& Q_q \xi_1  + O_{HO}(\om^{-1}|p-q|)\nonumber
\end{eqnarray}
for any $p,q\in \NN_{1/2}$.
\end{theorem}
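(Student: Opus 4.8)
The plan is to treat this as a harmonic (semiclassical) approximation around the nondegenerate minimum $R_n$ of $V_n$, with small parameter $h_n \propto \om^{-1/2}$, and to run Rayleigh--Schr\"odinger perturbation theory off the one-dimensional harmonic oscillator (\ref{eq0:OH}).

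First I would rescale. Writing the eigenvalue equation (\ref{EElgjn}) for radial functions as $-g_{1,n}'' - r^{-1} g_{1,n}' + V_n g_{1,n} = \lambda_{1,n} g_{1,n}$, substituting $r = R_n + h_n x$, using $V_n'(R_n)=0$ and the definition (\ref{eq0:h}) of $h_n$ (which is tuned precisely so that $\tfrac12 V_n''(R_n)h_n^4 = 1$, turning the quadratic coefficient into exactly $x^2$), and multiplying by $h_n^2$, the rescaled operator becomes
\[
 T_n := -\frac{d^2}{dx^2} + x^2 + h_n \mathcal{L}_{1,n} + h_n^2 \mathcal{L}_{2,n} + \dots,
\]
whose ground-state eigenvalue is $h_n^2\left(\lambda_{1,n}-V_n(R_n)\right)$ with eigenfunction $\xi_{1,n}$. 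Here $\mathcal{L}_{1,n},\mathcal{L}_{2,n}$ are explicit operators with polynomial coefficients, gathering the first-order term $-h_n(R_n+h_nx)^{-1}\tfrac{d}{dx}$ and the successive Taylor coefficients of $V_n$ at $R_n$; one uses $V_n^{(k)}(R_n)\propto\om^2$ to check that the $j$-th Taylor term contributes at order $h_n^{j-2}$, so that the expansion is genuinely in integer powers of $h_n$.

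The leading operator $-d^2/dx^2+x^2$ has $\xi_1$ as ground state with simple eigenvalue $1$ and a gap $2$ to the next level. I would then solve the perturbation hierarchy order by order: positing $\xi_{1,n}=\xi_1 + h_n\phi_1 + h_n^2\phi_2 + \dots$ and $h_n^2(\lambda_{1,n}-V_n(R_n)) = 1 + h_n e_1 + h_n^2 e_2 + \dots$, the corrections satisfy $\left(-\tfrac{d^2}{dx^2}+x^2-1\right)\phi_1 = (e_1 - \mathcal{L}_{1,n})\xi_1$ and an analogous equation for $\phi_2$, with $e_1,e_2$ fixed by the solvability (orthogonality to $\xi_1$) condition. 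Because $\mathcal{L}_{1,n},\mathcal{L}_{2,n}$ have polynomial coefficients and the reduced resolvent of the harmonic oscillator maps $(\text{polynomial})\,\xi_1$ into $(\text{polynomial})\,\xi_1$, the corrections are of the claimed form $\phi_1 = P_n\xi_1$, $\phi_2 = Q_n\xi_1$. Reading off $h_n^{-2}=\sqrt{V_n''(R_n)/2}$ and noting that $e_1=0$ by parity (since $\mathcal{L}_{1,n}$ is odd and $\xi_1$ even) gives $\lambda_{1,n}-V_n(R_n) = \sqrt{V_n''(R_n)/2} + e_2 + O(h_n)$, which is exactly (\ref{eq0:DLenergie1}) with $K_n = e_2 + O(h_n) = O(1)$.

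The technical core --- and the main obstacle --- is to turn this formal scheme into estimates that are rigorous and uniform in $n\in\NN_{1/2}$ in the $HO$ norm. The difficulty is that the Taylor expansion of $V_n$ is only accurate near $R_n$, so the remainder operator is not controlled pointwise on all of $\R$; I would absorb the far-field contribution using the exponential decay of $\xi_{1,n}$ (Proposition \ref{theo:decexpo1d}), which makes the region where the expansion fails exponentially small compared with the polynomial remainders being tracked. Combined with the spectral gap (to invert $-\tfrac{d^2}{dx^2}+x^2-1$ on the orthogonal complement of $\xi_1$) and an energy/elliptic estimate upgrading $L^2$ control to the harmonic-oscillator norm, this yields the $O_{HO}(\om^{-3/2})$ remainder. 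Finally, the continuity statement (\ref{DLxi2}) follows by tracking how the coefficients of $P_n,Q_n$ --- built from $R_n$, $h_n$ and the derivatives $V_n^{(k)}(R_n)$ --- depend on $n$, and invoking the Lipschitz bounds (\ref{Rp-Rq}) and (\ref{hp-hq}); uniformity in $D_{\Om}$ is preserved throughout since every constant depends only on the fixed ratios entering $V_n$.
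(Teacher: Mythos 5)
Your overall architecture --- rescale by $h_n$ around $R_n$, recognize the harmonic oscillator at leading order, and solve a perturbation hierarchy whose corrections are polynomials times $\xi_1$ --- is exactly what the paper does in Steps 3 and 4 of its proof, and your identification of $P_n$, $Q_n$, $K_n$ and of the parity cancellation at first order is correct. But there is a genuine gap in how you propose to make the scheme rigorous, and it is precisely the content of the paper's Steps 1 and 2, which are absent from your plan. The spectral gap you invoke is that of the \emph{unperturbed} operator $-d^2/dx^2+x^2$; it lets you solve the hierarchy equations for $\phi_1,\phi_2$, but it does not tell you that the \emph{true} ground state $\xi_{1,n}$ of the perturbed operator $T_n$ is close to the quasimode $\xi_1+h_nP_n\xi_1+h_n^2Q_n\xi_1$ you have constructed. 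For that you need either (i) the gap $\lambda_{2,n}-\lambda_{1,n}\geq C\om$ of the perturbed operator, which is Proposition \ref{theo:1D2} --- proved in the paper \emph{after}, and using, the present theorem, so you cannot assume it here --- or (ii) an a priori proof that $\xi_{1,n}$ is already close to $\xi_1$ in the $HO$ norm before linearizing the equation around $\xi_1$.

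The paper supplies (ii) by a variational argument that your proposal skips: matching the trial-function upper bound of Lemma \ref{theo:bornesup1d} against the rescaled energy expansion (\ref{eq1:develo3}) forces $\int(\xi_{1,n}'^2+x^2\xi_{1,n}^2)$ to be within $O(\om^{-1/2+\ep})$ of its minimum, whence $\xi_{1,n}=\xi_1+O_{HO}(\om^{-1/4+\ep})$ by non-degeneracy of the oscillator ground state; a bootstrap exploiting the parity of $\xi_1$ to kill the odd moments $\int x^k\xi_{1,n}^2$ and $\int x\xi_{1,n}'^2$ then upgrades this to $O_{HO}(\om^{-1/2+\ep})=O_{HO}(h_n^{1-\ep})$, which is exactly the threshold needed for $\varphi_n=h_n^{-1}(\xi_{1,n}-\xi_1)$ to be bounded and for the right-hand side of its equation to be controlled. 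Without this input (or an independent proof of the perturbed gap), your ``energy/elliptic estimate upgrading $L^2$ control to the harmonic-oscillator norm'' has nothing to upgrade: the formal hierarchy alone only shows that some spectral data of $T_n$ lies near the constructed expansion, not that it is the lowest eigenvalue nor that the ground state function has the claimed form. The plan is completable, but the missing variational step is the load-bearing part of the argument, not a technicality.
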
 

Note that 
\[
 V_n(R_n)+\sqrt{\frac{V_n '' (R_n)}{2}}= O(\om).
\]

From this theorem we deduce as a corollary 
\begin{corollaire} [Variations of $\lambda_{1,n}$ with respect to $n$] \label{cor:lambdan/n^*}\mbox{}\\
Under the assumptions of Theorem \ref{theo:1D1} and for $\om$ large enough, $\lambda_{1,n}$ is minimized for at most two integers $n^*= \om + O(1)$ and $n^*+1$. For any $n\neq n^*,\: n^*+1$ one has
\begin{equation}\label{eq1:dependanceN}
\lambda_{1,n}= \lambda_{1,n^*} + C  \left(n-n^*\right)^2\left(1+ O(\om^{-1/2})\right). 
\end{equation}
\end{corollaire}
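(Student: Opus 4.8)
The plan is to treat $\lambda_{1,n}$, via the expansion (\ref{eq0:DLenergie1}) of Theorem \ref{theo:1D1}, as the restriction to the integers of a smooth function of a continuous variable $n$, and to read off its behaviour near the minimum from the three pieces $V_n(R_n)$, $\sqrt{V_n''(R_n)/2}$ and $K_n$. The continuity statements in Theorem \ref{theo:1D1}, together with the fact (recorded just before the theorem) that $R_n$, $h_n$ and the derivatives $V_n^{(k)}(R_n)$ depend smoothly on $n$, guarantee that this continuous extension is well defined and that its low-order $n$-derivatives are controlled uniformly over $\NN_{1/2}$.

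First I would isolate the dominant term $V_n(R_n)$, whose variation is the only one of quadratic size. Since $R_n$ minimises $r\mapsto V_n(r)$ we have $\partial_r V_n(R_n)=0$, so by the envelope (chain-rule) identity
\begin{equation*}
\frac{d}{dn}V_n(R_n)=(\partial_n V_n)(R_n)=\frac{2n}{R_n^2}-2\om,
\end{equation*}
which vanishes exactly when $n=\om R_n^2$, in particular at $n=\om$ (where $R_n=1$). Differentiating (\ref{eq0:rn*}) gives $\tfrac{dR_n}{dn}=\bigl((D_\Om+2)\om\bigr)^{-1}(1+o(1))$ near $n=\om$, whence
\begin{equation*}
\frac{d^2}{dn^2}V_n(R_n)\Big|_{n=\om}=\frac{2}{R_n^2}-\frac{4n}{R_n^3}\frac{dR_n}{dn}=2-\frac{4}{D_\Om+2}=\frac{2D_\Om}{D_\Om+2}=:2C,
\end{equation*}
a strictly positive constant converging to $2D/(D+2)$. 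Thus $V_n(R_n)=V_\om(1)+C(n-\om)^2\bigl(1+O(\om^{-1/2})\bigr)$ across $\NN_{1/2}$, the relative $O(\om^{-1/2})$ reflecting the variation of the curvature (controlled through the third $n$-derivative) over a window of width $\propto\om^{1/2}$.

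Next I would show that the two remaining pieces perturb this parabola only at subleading order. The same envelope computation gives $\tfrac{d}{dn}\sqrt{V_n''(R_n)/2}=O(1)$ and $\tfrac{d^2}{dn^2}\sqrt{V_n''(R_n)/2}=O(\om^{-1})$; combined with $K_n=O(1)$ and its controlled variation from (\ref{DLxi2}), this means $\sqrt{V_n''(R_n)/2}+K_n$ contributes a term of first derivative $O(1)$ and curvature $O(\om^{-1})$. Hence the continuous extension $\Lambda(n):=V_n(R_n)+\sqrt{V_n''(R_n)/2}+K_n$ is strictly convex on $\NN_{1/2}$ with $\Lambda''=2C\bigl(1+O(\om^{-1/2})\bigr)$, and its unique real minimiser $\bar n$ solves $2C(\bar n-\om)+O(1)=0$, i.e. $\bar n=\om+O(1)$. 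The integer minimiser is then the nearest integer to $\bar n$, unique unless $\bar n$ is a half-integer, which is the exceptional two-minimiser case $n^*,\,n^*+1$; in either case $n^*=\om+O(1)$. Taylor expanding $\Lambda$ to second order about $\bar n$ and re-centring at $n^*$ yields (\ref{eq1:dependanceN}).

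The main obstacle is the uniform bookkeeping of the subleading terms: I must certify that the $O(1)$ remainder $K_n$ and the $O(\om)$-but-slowly-varying term $\sqrt{V_n''(R_n)/2}$ genuinely satisfy the derivative bounds claimed, uniformly in $n\in\NN_{1/2}$ and in $D_\Om$, so that their net effect is merely to displace the minimiser by $O(1)$ and to perturb the curvature by a relative $O(\om^{-1/2})$, leaving the leading coefficient $C$ intact. A secondary point requiring care is the degenerate configuration in which $\bar n$ falls exactly midway between two integers, which is precisely the source of the ``at most two'' minimisers $n^*$ and $n^*+1$ in the statement.
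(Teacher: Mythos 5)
Your proposal is correct and follows essentially the same route as the paper: both arguments write $\lambda_{1,n}$ as the smooth quantity $V_n(R_n)+\sqrt{V_n''(R_n)/2}$ plus the slowly varying remainder $K_n$, Taylor-expand the smooth part to second order about its continuous minimiser (located at $\om+O(1)$, with curvature of order $1$ perturbed at relative order $O(\om^{-1/2})$ over the window $\NN_{1/2}$), and handle the possible two-minimiser degeneracy at the end. The only differences are cosmetic: the paper parametrises by $R_n$ and expands the cost function $\Cc(r)$ around its minimiser $R$ before converting back to $n$ via (\ref{eq0:rn*}), whereas you work directly in the variable $n$ via the envelope identity and, as a bonus, extract the explicit leading constant $C=D_\Om/(D_\Om+2)$, which the paper leaves implicit.
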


Equation (\ref{eq1:dependanceN}) points out the dependence of $\lambda_{1,n}$ with respect to $n$, which is the key point to prove Theorem \ref{theo:densite/energie}. We now state the result giving the gap between $\lambda_{2,n}$ and $\lambda_{1,n}$:\\

\begin{proposition}[Lower bound on the first excited levels of the one-dimensional energies] \label{theo:1D2} \mbox{}\\
Let $\lambda_{2,n}$ be defined by (\ref{eq1:lambda2n}).\\ 
For every $n\in \NN _{1/2}$, we have as $\om \rightarrow +\infty$ and $D_{\Om}\rightarrow D$
\begin{equation}\label{eq0:DLenergie2}
\lambda_{2,n}\geq V_n(R_n) + 3 \sqrt{\frac{V_n '' (R_n)}{2}} - C\om^{1/2} \geq V_n(R_n) + 2 \sqrt{\frac{V_n '' (R_n)}{2}}. 
\end{equation}
\end{proposition}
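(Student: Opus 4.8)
The plan is to reduce $\lambda_{2,n}$ to the harmonic oscillator \eqref{eq0:OH} by the same rescaling used for Theorem \ref{theo:1D1}, and to exploit the fact that the spectral gap of \eqref{eq0:OH} between its ground state $\xi_1$ (eigenvalue $1$) and its first excited state $\xi_2$ (eigenvalue $3$) equals $2$. Since $h_n^{-2}=\sqrt{V_n''(R_n)/2}\propto\om$ by \eqref{eq0:h}, the claimed leading term $3\sqrt{V_n''(R_n)/2}$ is exactly $3h_n^{-2}$, i.e. the bottom of the well $V_n(R_n)$ plus $h_n^{-2}$ times the second harmonic oscillator eigenvalue; the $-C\om^{1/2}$ is a subleading correction coming from the non-quadratic part of $V_n$ and from the curvature of the $r\,dr$ weight. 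The whole point is that these corrections are $O(\om^{1/2})\ll\om\propto\sqrt{V_n''(R_n)/2}$, which yields the first inequality and, a fortiori, the second one for $\om$ large.

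Concretely I would work with the genuine second eigenfunction $g_{2,n}$, so that $\lambda_{2,n}=F_n(g_{2,n})$. A cheap trial function obtained by rescaling $\xi_2$ about $R_n$ and projecting off $g_{1,n}$ gives $\lambda_{2,n}=O(\om)$, after which an Agmon-type estimate (the analogue for $g_{2,n}$ of the decay stated in Proposition \ref{theo:decexpo1d}) shows that the rescaled function $\xi_{2,n}(x)=c_{2,n}^{-1}g_{2,n}(R_n+h_n x)$ decays exponentially, hence has all its moments $\int_\R x^k\xi_{2,n}^2$ bounded uniformly in $\om$. With this moment control I would Taylor expand $V_n$ about $R_n$: the quadratic term reproduces $V_n(R_n)+h_n^{-2}\int x^2\xi_{2,n}^2$, while the cubic term contributes $V_n'''(R_n)h_n^3\int x^3\xi_{2,n}^2=O(\om^{1/2})$ (using $V_n'''(R_n)\propto\om^2$ and $h_n\propto\om^{-1/2}$) and the quartic term only $O(1)$; replacing the weight $r\,dr$ by $R_n\,dr$ likewise costs $O(\om^{1/2})$. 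This gives $\lambda_{2,n}=V_n(R_n)+h_n^{-2}Q(\xi_{2,n})+O(\om^{1/2})$, where $Q(\phi)=\int_\R(|\phi'|^2+x^2\phi^2)\,dx$ is the harmonic oscillator form.

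It then remains to bound $Q(\xi_{2,n})$ from below by $3-o(1)$. The constraint $\int g_{2,n}g_{1,n}\,r\,dr=0$ becomes, after rescaling, orthogonality of $\xi_{2,n}$ to $\xi_{1,n}$ in the weight $(R_n+h_n x)\,dx$; since Theorem \ref{theo:1D1} gives $\xi_{1,n}=\xi_1+O_{HO}(h_n)$ and $\xi_{2,n}$ has bounded first moment, this forces $\langle\xi_{2,n},\xi_1\rangle_{L^2}=O(h_n)$. Writing $\xi_{2,n}=a\xi_1+\psi$ with $\psi\perp\xi_1$ and $|a|=O(h_n)$, the cross term in $Q$ vanishes because $\xi_1$ is an eigenfunction, so $Q(\xi_{2,n})=a^2+Q(\psi)\geq a^2+3(1-a^2)=3-2a^2=3-O(h_n^2)$. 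Multiplying by $h_n^{-2}=\sqrt{V_n''(R_n)/2}\propto\om$ turns the $O(h_n^2)$ into $O(1)$, whence $\lambda_{2,n}\geq V_n(R_n)+3\sqrt{V_n''(R_n)/2}-C\om^{1/2}$, and the second inequality follows from $\sqrt{V_n''(R_n)/2}\gg\om^{1/2}$.

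The main obstacle is the a priori localization: without a bound on the moments $\int x^k\xi_{2,n}^2$ the cubic remainder of $V_n$ cannot be controlled (on a window of size $L h_n$ it could be as large as $L^3\om^{1/2}$), so one genuinely needs the exponential decay of the true eigenfunction $g_{2,n}$ rather than an arbitrary competitor. An alternative that avoids expanding $V_n$ is the ground-state substitution $f=g_{1,n}v$, which turns $F_n(f)-\lambda_{1,n}\|f\|^2$ into the Dirichlet form $2\pi\int g_{1,n}^2|v'|^2 r\,dr$ and identifies $\lambda_{2,n}-\lambda_{1,n}$ with the spectral gap of the weighted Laplacian for the near-Gaussian measure $g_{1,n}^2\,2\pi r\,dr$ of variance $\sim h_n^2/2$, whose gap is $\sim 2h_n^{-2}=2\sqrt{V_n''(R_n)/2}$; but quantifying the error of the Gaussian approximation through a weighted Poincar\'e inequality seems at least as delicate as the direct expansion above.
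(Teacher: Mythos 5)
Your proposal is correct and follows essentially the same route as the paper: a trial function built from $\xi_2$ gives the a priori bound, exponential localization of $g_{2,n}$ yields moment control for $\xi_{2,n}$, the Taylor expansion of $V_n$ reduces everything to the harmonic oscillator form, and the near-orthogonality $\int\xi_1\xi_{2,n}=O(h_n)$ combined with the spectral gap of \eqref{eq0:OH} gives the constant $3$. The only technical wrinkle you gloss over is that $g_{2,n}$ changes sign, so the comparison-principle argument of Proposition \ref{theo:decexpo1d} must be run on $|g_{2,n}|^2$ (via the differential inequality it inherits from \eqref{EElgjn}) rather than on $g_{2,n}$ itself, exactly as the paper does.
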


Note that it is feasible to expand $\xi_{1,n}$ further and to make of Proposition \ref{theo:1D2} a statement as precise as Theorem \ref{theo:1D1}, two improvements that we do not need in the sequel.\\ 
For convenience we sum up some properties of $g_{1,n}$ that will be useful in the rest paper in the following Corollary.

\begin{corollaire}[$L^p$ bounds for $g_{1,n}$]\label{theo:Lpg1n}\mbox{}\\
For any $n\in \NN_{1/2}$ and any $1 \leq p \leq +\infty$ there is a constant $C_p$ depending only on $p$ such that
\begin{equation}\label{eq0:Lpg1n}
\Vert g_{1,n} \Vert_{L^p (\R^+,rdr)} \leq C_p \om^{1/4-\frac{1}{2p}}
\end{equation}
with the convention that $\frac{1}{p}= 0$ if $p=+\infty$.
\end{corollaire}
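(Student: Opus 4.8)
The plan is to reduce everything to uniform $L^p(\R)$ bounds on the rescaled ground states $\xi_{1,n}$ and then read off the powers of $\om$ from the rescaling. Starting from the definition (\ref{eq0:rescaling}), I would invert it to write $g_{1,n}(r) = c_{1,n}\,\xi_{1,n}\!\left(\frac{r-R_n}{h_n}\right)$ and perform the change of variables $x = (r-R_n)/h_n$ in the weighted $L^p$ integral. For $1\le p<\infty$ this gives
\begin{equation*}
\|g_{1,n}\|_{L^p(\R^+,rdr)}^p = c_{1,n}^p\,h_n \int_{-R_n/h_n}^{+\infty} |\xi_{1,n}(x)|^p\,(R_n + h_n x)\,dx .
\end{equation*}
Using (\ref{eq1:cin}) and (\ref{eq0:h2}), $c_{1,n}^p h_n \sim (2\pi R_n)^{-p/2} h_n^{1-p/2}\propto \om^{p/4-1/2}$, so once I show the remaining integral is $O(1)$ uniformly in $n\in\NN_{1/2}$, taking the $p$-th root yields exactly $\|g_{1,n}\|_{L^p}\le C_p\,\om^{1/4-1/(2p)}$.

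The core of the argument is therefore a set of uniform (in $\om$ and in $n\in\NN_{1/2}$) bounds on the $\xi_{1,n}$. For the sup norm I would invoke the one-dimensional Sobolev embedding $H^1(\R)\hookrightarrow L^\infty(\R)$: by Theorem \ref{theo:1D1}, the expansion (\ref{eq0:DLfonction}) gives $\xi_{1,n}=\xi_1+O_{HO}(\om^{-1/2})$ with the $O_{HO}$ controlled uniformly in $n$ (the coefficients of $P_n,Q_n$ being bounded), hence $\|\xi_{1,n}\|_{H^1(\R)}\le C$ and $\|\xi_{1,n}\|_{L^\infty(\R)}\le C$ uniformly. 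Combined with $\|\xi_{1,n}\|_{L^2(\R)}=1$, interpolation gives $\|\xi_{1,n}\|_{L^p(\R)}\le C_p$ for all $2\le p\le\infty$. For $1\le p<2$, and to control the weight factor $R_n+h_n x$, I would use the uniform Gaussian-type decay of $\xi_{1,n}$ furnished by Proposition \ref{theo:decexpo1d} (cf. Remark \ref{Rq1}): this bounds $\int_\R (1+|x|)\,|\xi_{1,n}(x)|^p\,dx$ uniformly, which dominates $\int (R_n+h_n x)|\xi_{1,n}|^p$ on the relevant half-line since $R_n+h_n x\le C(1+|x|)$ there.

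Finally the case $p=+\infty$ is immediate from $\|g_{1,n}\|_{L^\infty}=c_{1,n}\|\xi_{1,n}\|_{L^\infty}\le C\om^{1/4}$, using $c_{1,n}\propto\om^{1/4}$ and the uniform $L^\infty$ bound just established. The one point requiring genuine care --- and the main obstacle --- is the uniformity with respect to $n$ over the whole band $\NN_{1/2}$: I must make sure that the constants coming from the $HO$-expansion, from the exponential decay, and from the asymptotics $c_{1,n}^2\sim(2\pi h_n R_n)^{-1}$, $h_n\propto\om^{-1/2}$, $R_n=1+O(\om^{-1/2})$ are all independent of $n$. This is exactly guaranteed by the uniform statements (\ref{eq0:h2}), (\ref{DLxi2}) and the remaining properties listed for $n\in\NN_{1/2}$, so no new estimate is needed beyond assembling these.
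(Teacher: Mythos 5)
Your proposal is correct and follows essentially the same route as the paper: the paper's proof is precisely the change of variables $r=R_n+h_n x$ together with (\ref{eq0:rescaling}), (\ref{eq1:cin}) and (\ref{eq0:DLfonction}), plus the observation $c_{1,n}\leq C\om^{1/4}$ uniformly in $n$. You merely spell out the uniform control of $\int|\xi_{1,n}|^p(R_n+h_n x)\,dx$ (via the $HO$-expansion, Sobolev embedding, and the Gaussian decay of Proposition \ref{theo:decexpo1d}, as anticipated in Remark \ref{Rq1}), which the paper leaves implicit.
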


\begin{proof}
These estimates are simple consequences of (\ref{eq0:rescaling}), (\ref{eq1:cin}) and (\ref{eq0:DLfonction}). The integrals are computed using the change of variables $r=R_n+h_n x$. Remark that $c_{1,n} \leq C \om^{1/4}$ with $C$ independent of $\om$, $D_{\Om}$ and $n$.
\end{proof}

We proceed to the proof of Theorem \ref{theo:1D1}, assuming that its assumptions hold true for the rest of the section. We begin with an upper bound on  $\lambda_{1,n}$:
\begin{lemma}[Upper bound for $\lambda_{1,n}$]\label{theo:bornesup1d}\mbox{}\\
Under the assumptions of Theorem \ref{theo:1D1} we have  
\begin{equation}\label{eq1:bornesup}
\lambda_{1,n} \leq V_n(R_n) + \sqrt{\frac{V_n '' (R_n)}{2}} + \frac{V_n ^{(4)}(R_n)}{12 V_n '' (R_n)} \int_{\R}x^4 \xi_1 ^2 (x)dx + O(\om^{-1/2}).
\end{equation}
\end{lemma}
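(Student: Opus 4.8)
The goal is an upper bound for the lowest eigenvalue $\lambda_{1,n}$ of the operator associated to $F_n$, and the natural strategy is a variational one: construct an explicit trial function, insert it into the Rayleigh quotient, and expand everything in powers of the small parameter $h_n \propto \om^{-1/2}$. Since the potential $V_n$ has a nondegenerate minimum at $R_n$ with $V_n^{(k)}(R_n) \propto \om^2$, the leading behavior near the well is that of a harmonic oscillator after rescaling $r = R_n + h_n x$, where $h_n = (2/V_n''(R_n))^{1/4}$ is chosen precisely so that the quadratic part of the rescaled potential becomes $x^2$.

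\medskip
\noindent\textbf{Plan.} The plan is to use as trial function the rescaled ground state of the harmonic oscillator, corrected to one higher order, and to track the Taylor expansion of $V_n$ about $R_n$. Concretely, I would set a trial function $f(r) = c\, \xi\!\left(\frac{r-R_n}{h_n}\right)$ where $\xi$ is chosen so as to capture the leading and next-order behavior, and $c$ is fixed by the mass constraint $2\pi \int_{\R^+} f^2 r\,dr = 1$. First I would Taylor-expand $V_n$ around its minimum:
\begin{equation*}
V_n(r) = V_n(R_n) + \tfrac{1}{2}V_n''(R_n)(r-R_n)^2 + \tfrac{1}{6}V_n^{(3)}(R_n)(r-R_n)^3 + \tfrac{1}{24}V_n^{(4)}(R_n)(r-R_n)^4 + \cdots,
\end{equation*}
there being no linear term because $R_n$ is the minimum. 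After the change of variables $r = R_n + h_n x$, the potential energy density picks up $h_n^2 \cdot \tfrac{1}{2}V_n''(R_n) x^2 = x^2$ at leading order (by the definition of $h_n$), while the kinetic term $|f'|^2$ contributes $h_n^{-2}|\xi'|^2$, and the measure $r\,dr = (R_n + h_n x)h_n\,dx$ must be expanded as well.

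\medskip
\noindent\textbf{Key steps in order.} Multiplying through, the dominant contribution to $\lambda_{1,n}$ is $V_n(R_n)$ (from the constant term in $V_n$ integrating against the normalized mass) plus $\sqrt{V_n''(R_n)/2}$, which is $h_n^{-1}$ times the harmonic-oscillator ground-state energy; here I would use that $\xi_1$ achieves $\inf \int_{\R}(\xi'^2 + x^2\xi^2) = 1$. The next step is to control the cubic term $\tfrac{1}{6}V_n^{(3)}(R_n)(r-R_n)^3$: after rescaling it carries a factor $h_n\,V_n^{(3)}(R_n) \propto \om^{-1/2}\cdot\om^2 \cdot h_n^2 = O(\om^{1/2})$, and crucially it integrates against $\xi_1^2$, which is even, so $\int x^3 \xi_1^2 = 0$ and this potentially large contribution vanishes by symmetry. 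This is exactly the symmetry cancellation already exploited in the passage leading to \eqref{eq0:energietest'}. The quartic term $\tfrac{1}{24}V_n^{(4)}(R_n)(r-R_n)^4$ survives and, after rescaling, contributes $\tfrac{1}{24}V_n^{(4)}(R_n)h_n^4 \int x^4 \xi_1^2\,dx$; substituting $h_n^4 = 2/V_n''(R_n)$ gives precisely $\frac{V_n^{(4)}(R_n)}{12 V_n''(R_n)}\int_{\R} x^4 \xi_1^2\,dx$, the stated third term. All remaining pieces—the correction from expanding the measure $(R_n + h_n x)$, the odd-order terms that vanish by parity, the higher Taylor terms, and the tail corrections from replacing $\int_{-R_n/h_n}^{\infty}$ by $\int_{-\infty}^{\infty}$ as in Remark~\ref{Rq1}—must be gathered into the $O(\om^{-1/2})$ remainder.

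\medskip
\noindent\textbf{Main obstacle.} The routine part is the bookkeeping of the expansion; the delicate point is ensuring that the seemingly dangerous cubic term, whose naive size $O(\om^{1/2})$ would destroy the claimed $O(\om^{-1/2})$ remainder, is genuinely controlled. Using the exactly even trial function $\xi_1$ kills its leading contribution by parity, but one must verify that the coupling between the cubic anharmonicity and the non-even corrections coming from the measure $r\,dr$ and from any next-order adjustment of the trial function does not reintroduce an $O(1)$ or larger error into the remainder. Concretely I expect the cleanest route is to take the trial function as $\xi_1$ plus a first-order correction $h_n P_n \xi_1$ chosen to cancel the cross-term between the cubic potential and the measure expansion—this is the same polynomial $P_n$ appearing in \eqref{eq0:DLfonction}—so that the surviving even quartic term stands alone at order $O(1)$ and everything odd cancels to the required order. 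Verifying that these cancellations really bring the error down to $O(\om^{-1/2})$, uniformly in $n \in \NN_{1/2}$ and in $D_{\Om}$, is the heart of the argument; the bounds $V_n^{(k)}(R_n)\propto \om^2$ and $h_n \propto \om^{-1/2}$ together with the exponential decay of $\xi_1$ make each discarded term a controllable power of $\om$.
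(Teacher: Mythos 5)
Your proposal is correct and is essentially the paper's proof: the paper simply inserts the trial function $g_n^{test}(r)=c\,\xi_1\bigl(h_n^{-1}(r-R_n)\bigr)$, Taylor-expands $V_n$ about $R_n$, and uses $h_n=O(\om^{-1/2})$ together with the evenness of $\xi_1$ to kill all odd-order contributions. The extra correction $h_nP_n\xi_1$ you contemplate is not needed: the one surviving cross term you rightly worry about, namely the cubic Taylor term paired with the $h_nx$ part of the measure $r\,dr$, equals $\frac{V_n^{(3)}(R_n)h_n^4}{6R_n}\int_{\R}x^4\xi_1^2\,dx$, which is indeed $O(1)$ but is nonpositive here (since $V_n^{(3)}(R_n)=(12D_{\Om}-24)\,\om^2\bigl(1+o(1)\bigr)<0$ for $D_{\Om}\leq 1$), so it can simply be dropped from the upper bound — and in any case the lemma is only ever invoked downstream at $O(1)$ precision.
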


\begin{proof}
We proceed as for the proof of (\ref{eq0:energietest'}), taking as a trial function 
\begin{equation}
 g_n ^{test}(r)=c\xi _1 \left( h_n^{-1}(r-R_n) \right)
\end{equation}
where $c$ is chosen so that $2\pi\Vert g_n ^{test} \Vert _{L^2 (\R^+ ,rdr)}^2 = 1 $ with 
\begin{equation}
h_n= \left( \frac{2}{V_n '' (R_n)}\right)^{1/4}.
\end{equation}
An important fact for the calculation is that for the range of $n$ we are considering we always have $h_n=O(\om^{-1/2})$. We also use that $\xi_1$ is an even function.

\end{proof}
 
We now show the exponential decay of $g_{1,n}$ and its derivative outside of a shrinking region:

\begin{proposition}[Pointwise estimates for $g_{1,n}$]\label{theo:decexpo1d}\mbox{}\\
Let $g_{1,n}$ be defined by (\ref{eq1:lambda1n}).\\ 
There are positive constants $\sigma_1$ and $\sigma_2$ so that for every $n\in \NN _{1/2}$, every $r\in \R^+$ and for every $\ep >0$ 
\begin{equation}\label{eq1:decexpo}
 |g_{1,n}(r)|\leq C_{\ep}\om^{1/4+\ep} e^{-\sigma_1 \left( \frac{r-R_n}{h_n} \right)^2} 
\end{equation}
and
\begin{equation}\label{eq1:decexpo'}
 |g'_{1,n}(r)|\leq C_{\ep}\om^{3/4+\ep} e^{-\sigma_2 \left( \frac{r-R_n}{h_n} \right)^2}. 
\end{equation}  
Moreover, for any $\alpha < 1$ there is a constant $C_{\alpha}$ so that
\begin{equation}\label{eq1:dec0}
 |g_{1,n}(r)| \leq C e^{-C \om } r^{\frac{n}{\alpha}}
\end{equation}
for any $r \leq C_{\alpha}$. 

\end{proposition}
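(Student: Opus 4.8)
The plan is to obtain the Gaussian bounds by an Agmon-type weighted energy estimate, exploiting that near its unique minimum $R_n$ the potential $V_n$ is essentially a harmonic well of frequency $\propto\om$ while the ground state energy sits only $O(\om)$ above the bottom. Concretely, Lemma~\ref{theo:bornesup1d} together with the trivial lower bound $\lambda_{1,n}\geq V_n(R_n)$ (drop the kinetic term, as in Lemma~\ref{theo:lemme1}) gives $0\leq\lambda_{1,n}-V_n(R_n)=O(\om)$, whereas a Taylor expansion of (\ref{eq:Vn}) yields $V_n(r)-V_n(R_n)\geq\tfrac12 V_n''(R_n)(r-R_n)^2(1+o(1))$ with $V_n''(R_n)\propto\om^2$. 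Hence the classically forbidden region $\{V_n>\lambda_{1,n}\}$ is the complement of an interval of half-width $\propto h_n\propto\om^{-1/2}$ around $R_n$, exactly the scale of (\ref{eq0:rescaling}). First I would write the Agmon identity for the radial equation (\ref{EElgjn}): for any Lipschitz weight $\Phi$,
\[
\int_{\R^+}\big|(e^{\Phi}g_{1,n})'\big|^2 r\,dr+\int_{\R^+}\big(V_n-\lambda_{1,n}-|\Phi'|^2\big)e^{2\Phi}g_{1,n}^2\,r\,dr=0 .
\]
Taking $\Phi(r)=\sigma\big((r-R_n)/h_n\big)^2$, so that $|\Phi'|^2=2\sigma^2V_n''(R_n)(r-R_n)^2$, the convexity bound makes the second integrand nonnegative for $\sigma^2<1/4$ once $|r-R_n|\gtrsim h_n$, while the contribution of $\{|r-R_n|\lesssim h_n\}$ is controlled by $e^{C\sigma}\|g_{1,n}\|_{L^2(\R^+,r\,dr)}^2=O(1)$. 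This yields
\[
\int_{\R^+}e^{2\sigma((r-R_n)/h_n)^2}g_{1,n}^2\,r\,dr\leq C,\qquad \int_{\R^+}e^{2\sigma((r-R_n)/h_n)^2}|g_{1,n}'|^2\,r\,dr\leq C\om .
\]

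To pass to (\ref{eq1:decexpo})--(\ref{eq1:decexpo'}) I rescale by $x=(r-R_n)/h_n$. With $r\,dr\approx R_n h_n\,dx$ and $h_n\propto\om^{-1/2}$, the two displays become, for any $\sigma'<\sigma$, $H^1(\R)$ bounds of size $O(\om^{1/4})$ for $e^{\sigma'x^2}g_{1,n}(R_n+h_n\,\cdot)$ (the loss $\sigma\to\sigma'$ absorbs the factor $x\,e^{(\sigma'-\sigma)x^2}$ produced by differentiating the weight). The embedding $H^1(\R)\hookrightarrow L^\infty(\R)$ then gives $\sup_x e^{\sigma'x^2}|g_{1,n}(R_n+h_nx)|\leq C\om^{1/4}$, which is (\ref{eq1:decexpo}) with $\sigma_1=\sigma'$. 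For the gradient bound I would feed this back into the rescaled form of (\ref{EElgjn})---whose coefficients are $O(1)$ by construction---and use interior elliptic estimates to upgrade the weighted $H^1$ control to weighted $L^\infty$ control of $g_{1,n}'$; undoing the scaling costs $h_n^{-1}\propto\om^{1/2}$ and produces the $\om^{3/4}$. The harmless extra $\om^{\ep}$ is the margin needed because at this stage of the bootstrap (before (\ref{eq0:DLfonction}) is available) the $L^\infty$ size of $g_{1,n}$ is only known up to $\om^{\ep}$. I note that positivity (\ref{g1npositive}) also yields an entirely pointwise alternative for (\ref{eq1:decexpo}): $w(r)=C\om^{1/4}e^{-\sigma((r-R_n)/h_n)^2}$ is a supersolution of (\ref{EElgjn}) on $\{|r-R_n|>\rho h_n\}$ and dominates $g_{1,n}$ on the boundary, so the maximum principle gives $g_{1,n}\leq w$ there.

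For the near-origin estimate (\ref{eq1:dec0}) the Gaussian weight is the wrong one, since there the centrifugal term $n^2/r^2$ in $V_n$ dominates and forces power-law behaviour. I would instead compare $g_{1,n}$ on a fixed interval $(0,C_\alpha)$ with a pure power $w(r)=M r^{\beta}$, where $\beta$ is dictated by the barrier $n^2/r^2$: since
\[
-w''-\tfrac1r w'+(V_n-\lambda_{1,n})w=M r^{\beta-2}\big((V_n-\lambda_{1,n})r^2-\beta^2\big),
\]
and $(V_n-\lambda_{1,n})r^2=n^2-(2n\om+\lambda_{1,n})r^2+O(\om^2r^4)$ stays close to $n^2$ for $r\leq C_\alpha$ with $C_\alpha$ small, the power $w$ is a supersolution for an exponent $\beta$ comparable to $n$, uniformly in $n\in\NN_{1/2}$. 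Fixing $M$ so that $w(C_\alpha)\geq g_{1,n}(C_\alpha)$---the boundary value being already $\leq e^{-C\om}$ by (\ref{eq1:decexpo}) evaluated at the fixed radius $C_\alpha<1\approx R_n$---and using $g_{1,n}\geq0$ together with $g_{1,n}(r)\to0$ faster than $w$ as $r\to0$, the maximum principle on $(0,C_\alpha)$ yields (\ref{eq1:dec0}).

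The main obstacle is bookkeeping rather than any single hard step. One must check that the Agmon weight is admissible and that every constant ($\sigma$, $\rho$, the size of the matching interval, the $O(\cdot)$ in the energy bounds) is chosen uniformly in $n\in\NN_{1/2}$ and in $D_\Om\in(0,1)$, which is where the uniform facts $V_n^{(k)}(R_n)\propto\om^2$ and $h_n\propto\om^{-1/2}$ recorded after (\ref{eq0:rn*}) are essential. The two delicate points are the precise tracking of the powers of $\om$ through the rescaling and the Sobolev/elliptic step---this is what pins down the exponents $1/4$ and $3/4$ and the slightly reduced rates $\sigma_1,\sigma_2<\sigma$---and the transition near the origin, where the Gaussian regime of (\ref{eq1:decexpo}) must be correctly matched to the centrifugal power-law regime of (\ref{eq1:dec0}).
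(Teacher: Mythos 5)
Your proof is correct in substance but reaches (\ref{eq1:decexpo}) by a genuinely different route. The paper first establishes the global bound $\Vert g_{1,n}\Vert_{L^{\infty}(]\eta,+\infty[)}\leq C_{\ep}\om^{1/4+\ep}$ by interpolating between the $L^2$ normalization and the $H^1$ bound coming from the energy upper bound, and then obtains the Gaussian decay by exhibiting $C_{\ep}\om^{1/4+\ep}e^{-\sigma_1((r-R_n)/h_n)^2}$ as an explicit supersolution of (\ref{eq1:sousol}) on $S_n^c$ and invoking the comparison principle --- exactly the ``pointwise alternative'' you mention in passing. You instead run an Agmon-type weighted energy identity with weight $\sigma((r-R_n)/h_n)^2$ and convert the resulting weighted $H^1$ control into an $L^{\infty}$ bound by rescaling and the one-dimensional Sobolev embedding; this bypasses the interpolation step entirely and in fact yields the cleaner exponent $\om^{1/4}$ without the $\ep$ loss, at the price of having to justify the unbounded weight (truncate and pass to the limit) and to treat the region near $r=0$ separately, where $r\,dr$ and $dr$ are not comparable --- the same issue the paper sidesteps by restricting to $]\eta,+\infty[$ in its Step 1. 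For the gradient you use interior ODE estimates on the rescaled equation where the paper uses Gagliardo--Nirenberg on $\Vert\Delta g_{1,n}\Vert_{L^{\infty}}$; these are equivalent in strength. Your treatment of (\ref{eq1:dec0}) coincides with the paper's Step 3.

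Two small points to tighten. First, in the Agmon identity the inner-region contribution is not $O(1)$ as you claim but $O(\om)$, since $|V_n-\lambda_{1,n}-|\Phi'|^2|\leq C\om$ on $\{|r-R_n|\lesssim h_n\}$; this is harmless because on the outer region the positive integrand is itself bounded below by $c\,\om^2(r-R_n)^2\geq c'\om$, so the weighted $L^2$ bound $O(1)$ survives, but the bookkeeping should be stated that way. Second, the local Taylor expansion $V_n(r)-V_n(R_n)\geq\tfrac12 V_n''(R_n)(r-R_n)^2(1+o(1))$ is not what you need globally; what makes the argument work on all of $\R^+$ is the uniform convexity $V_n''(r)\geq c\,\om^2$ for every $r>0$ (read off from (\ref{eq:Vn}): the centrifugal term controls small $r$, the quartic term controls $r$ bounded away from $0$), which gives the quadratic lower bound on the whole half-line and lets you fix $\sigma$ once and for all, uniformly in $n\in\NN_{1/2}$ and $D_{\Om}$. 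Finally, in the comparison argument for (\ref{eq1:dec0}) it is cleaner to work on the two-dimensional ball $B_{C_{\alpha}}$ (as the paper does), where only the outer boundary condition is needed, rather than on the interval $(0,C_{\alpha})$, which forces you to invoke the a priori vanishing rate of $g_{1,n}$ at the origin.
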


\begin{proof}

The proof is done in four steps. \\
\emph{Step 1.} We first need a global $L^{\infty}$ bound for $g_{1,n}$. We claim that 
\begin{equation}\label{eq1:borneLinf}
\forall \eta >0, \quad \forall \ep >0, \quad \exists C_{\ep}>0 \mbox{ so that } \Vert g_{1,n} \Vert_{L^{\infty}(\left]\eta,+\infty \right[ )} \leq C_{\ep} \om^{1/4+\ep}.
\end{equation}
We first note that (\ref{eq1:bornesup}) implies 
\begin{equation}\label{eq1:bornesupsimpl}
\lambda_{1,n} = F_n(g_{1,n})\leq V_n(R_n) + \sqrt{V_n '' (R_n)}
\end{equation}
because both $V''_n(R_n)\propto \om^2$ and $V_n^{(4)}(R_n)\propto \om^2$ when $n\in \NN_{1/2}$. Thus, using (\ref{eq1:bornesupsimpl}) and (\ref{eq0:normalisation}):
\[
2\pi \int _ {\R ^+} \left( |g_{1,n}' (r)|^2 +\left(V_n(r)-V_n(R_n)\right)|g_{1,n}(r)|^2 \right)rdr \leq \sqrt{V_n '' (R_n)}
\]
but $V_n(r)-V_n(R_n)\geq 0$ so
\begin{eqnarray*}
2\pi \eta \int_{\left]\eta,+\infty \right[} g' _{1,n}(r) ^2 dr & \leq & 2\pi \int _{\left]\eta,+\infty \right[} |g'_{1,n}(r)|^2rdr \\
&\leq& 2\pi \int _{\R ^+} |g'_{1,n}(r)|^2rdr \\
&\leq& \sqrt{V_n '' (R_n)}
\end{eqnarray*}
and recalling that $V_n '' (R_n)=O(\om ^2)$  for $n\in \NN_{1/2}$, using again (\ref{eq0:normalisation})
\begin{equation}\label{eq1:borneH1}
\Vert g_{1,n} \Vert_{H^{1}(\left]\eta,+\infty \right[ )}\leq C\om^{1/2}. 
\end{equation}
Now, an interpolation argument, using (\ref{eq0:normalisation}) and (\ref{eq1:borneH1}) yields
\begin{equation}
\forall \ep >0 \quad \Vert g_{1,n} \Vert_{H^{\frac{1+3\ep}{2}}(\left]\eta,+\infty \right[ )} \leq C_{\ep} \om^{1/4+\ep} 
\end{equation}
which implies (\ref{eq1:borneLinf}) by Sobolev imbedding.\\

\bigskip

\emph{Step 2.} 
We have
\[
 V_n(r) \geq V(r) = \frac{D_{\Om}\om^2 }{2} \left(r^2-1\right)^2
\]
and $V_n(R_n) + \sqrt{V_n '' (R_n)} = O(\om)$, thus there is a $\gamma >0$ so that 
\begin{equation}\label{eq1:domaineVn1}
 V_n(r) -V_n(R_n) - \sqrt{V_n '' (R_n)} > 0 \mbox{ for } |r-1| > \gamma \om^{-1/2}.
\end{equation}
Using that $R_n = 1+O(\om^{-1/2})$ and $h_n \propto \om^{-1/2}$ we deduce that there is a $\delta >0$ so that
\begin{equation}\label{eq1:domaineVn}
 V_n(r) -V_n(R_n) - \sqrt{V_n '' (R_n)} > 0 \mbox{ for } |r-R_n| > \delta h_n.
\end{equation}
We denote
\begin{equation}\label{eq1:segment}
 S_{n}=\left\lbrace r\in \R^+,\quad |r-R_n| \leq \delta h_n  \right\rbrace
\end{equation}
and prove an exponential decay property for $g_{1,n}$ on $S_{n}^c$. We recall the Euler-Lagrange equation for $g_{1,n}$ (considered as a radial function defined on $\R ^2$)
\begin{equation}\label{eq1:eqEL2}
 -\Delta g_{1,n} + V_n g_{1,n} = \lambda_{1,n} g_{1,n}.
\end{equation}
Using (\ref{eq1:bornesupsimpl}) we have for $\om$ large enough
\begin{equation}\label{eq1:sousol}
 -\Delta g_{1,n} + \left( V_n(r) -V_n(R_n) - \sqrt{V_n '' (R_n)}\right)g_{1,n}\leq 0.
\end{equation}
We now apply the comparison principle on $S_{n}^c$: the right-hand side of (\ref{eq1:decexpo}) is a supersolution for (\ref{eq1:sousol}) on $S_{n}^c$ with a proper choice of $\sigma_1$, the boundary condition being fulfilled thanks to (\ref{eq1:borneLinf}). Thus the result holds on $S_{n}^c$. On the other hand, on $S_n$ the inequality is true for $C_{\ep}$ large enough. Indeed, we have the global upper bound (\ref{eq1:borneLinf}) and on $S_n$ the function $e^{-\sigma_1 \left( \frac{r-R_n}{h_n} \right)^2}$ is bounded below by some constant.\\

\emph{Step 3.} We prove (\ref{eq1:dec0}). It is straigthforward to see that the function on the right-hand side of (\ref{eq1:dec0}) is a supersolution for equation (\ref{eq1:sousol}) on $B_{C_{\alpha}}$ for some constant $C_{\alpha}$ depending only on $\alpha$. We then use the maximum principle on $B_{\min(C_{\alpha},\frac{1}{2})}$ to conclude that (\ref{eq1:dec0}) holds. The boundary condition is fulfilled thanks to (\ref{eq1:decexpo}).\\ 

\emph{Step 4.} We now prove (\ref{eq1:decexpo'}) by an interpolation argument. Using (\ref{eq1:eqEL2}),  (\ref{eq1:bornesup}), the definition of $V_n$, (\ref{eq1:decexpo}) and (\ref{eq1:dec0}) we have
\begin{equation}\label{eq1:decexpo''}
|\Delta g_{1,n} (x)|\leq C\om^{5/4+\ep} e^{-s_3 \left(\frac{\vert x \vert-R_n}{h_n}\right)^2}  
\end{equation}
for some $s_3$. Using Gagliardo-Nirenberg's inequality (see for example \cite{BBH1,Nir})
\[
 \Vert \nabla g_{1,n} \Vert_{L^{\infty}} \leq C \Vert \Delta g_{1,n} \Vert_{L^{\infty}} ^{1/2} \Vert g_{1,n} \Vert_{L^{\infty}}^{1/2}
\]
on $B_R$ for any $0 \leq R<R_n$ and on $B_R ^c$ for any $R>R_n$, we get from (\ref{eq1:decexpo}) and (\ref{eq1:decexpo''})
\begin{equation*}
|\nabla g_{1,n} (x)|\leq C\om^{3/4+\ep} e^{-s_2 \left(\frac{\vert x \vert-R_n}{h_n}\right)^2}
\end{equation*}
for some $s_2$, which implies (\ref{eq1:decexpo'}).

\end{proof}

We recall the definition of the blow-up function $\xi_{1,n}$:
\begin{equation}\label{eq1:blowup}
 g_{1,n}(r)=c_{1,n}\xi_{1,n}\left( h_n^{-1} \left( r-R_n \right)\right) 
\end{equation}
where $h_n$ is defined in equation (\ref{eq0:h}). \\
Note that this function is only defined on $]-\frac{R_n}{h_n},+\infty[$, but as $-\frac{R_n}{h_n}\rightarrow -\infty$ and $g_{1,n}$ decreases exponentially fast in this limit, we may (remembering Remark 1) abuse notation and consider it as a function defined on $\R$.\\ 
The next lemma states a few estimates that will prove useful to expand $\lambda_{1,n}$. They are consequences of Proposition \ref{theo:decexpo1d}:

\begin{lemma}[Estimates for $\xi_{1,n}$]\label{theo:estimemoments}\mbox{}\\
Let $n$ be in $\NN_{1/2}$ and $C_{k,\ep}$ a generic constant depending only on $k$ and $\ep$. We have, for $\om$ large enough and every $\ep>0$
 \begin{enumerate}
\item $\left|c_{1,n}^2 h_n \int_{\R} x^k \xi_{1,n}(x)^2 \left ( R_n +h_n x \right)dx \right|\leq C_{k,\ep} \om^{\ep}$.\\
\item $c_{1,n}^2 h_n = \frac{1}{2\pi R_n} + O(\om^{-1/2+\ep})$   \\
\item $\left| h_n\int_{\R} x^k\xi_{1,n}(x) ^2 dx \right|\leq C_{k,\ep} \om^{\ep -1/2}$.\\
\item $\left| h_n \int_{\R} x^k \xi_{1,n}'^2(x) dx \right| \leq C_{k,\ep} \om^{\ep -1/2}$.\\
 \end{enumerate}
\end{lemma}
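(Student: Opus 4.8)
The plan is to first pin down the order of magnitude of the normalisation constant $c_{1,n}$ — this is the only place where circularity threatens — and then read off all four estimates from the almost uniform Gaussian decay of the rescaled function. The starting observation is that, by (\ref{eq0:rescaling}), one has $c_{1,n}^2\,\xi_{1,n}(x)^2=g_{1,n}(R_n+h_nx)^2$ and $c_{1,n}^2\,\xi_{1,n}'(x)^2=h_n^2\,g_{1,n}'(R_n+h_nx)^2$, while the normalisation $\Vert\xi_{1,n}\Vert_{L^2(\R)}=1$ together with the change of variables $r=R_n+h_nx$ yields the identity
\[
c_{1,n}^2h_n=\int_{\R^+}g_{1,n}(r)^2\,dr
\]
(extending the integral to all of $\R$ costs only an exponentially small error, cf. Remark \ref{Rq1}). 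The crucial feature is that the right-hand side carries no factor $c_{1,n}$, so I can estimate it directly from the pointwise bound (\ref{eq1:decexpo}) of Proposition \ref{theo:decexpo1d}.

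Rewriting the mass constraint (\ref{eq0:normalisation}) as $\tfrac1{2\pi}=R_n\int_{\R^+}g_{1,n}^2\,dr+\int_{\R^+}(r-R_n)g_{1,n}^2\,dr$ and inserting (\ref{eq1:decexpo}), the remainder $\int_{\R^+}(r-R_n)g_{1,n}^2\,dr$ is bounded by $C_\ep\om^{1/2+\ep}\int_{\R}|r-R_n|\,e^{-2\sigma_1((r-R_n)/h_n)^2}dr=O(\om^{-1/2+\ep})$, since the substitution $s=(r-R_n)/h_n$ produces a factor $h_n^2\propto\om^{-1}$. Combined with the identity above this gives item (2), namely $c_{1,n}^2h_n=\tfrac1{2\pi R_n}+O(\om^{-1/2+\ep})$; in particular $c_{1,n}^2h_n\propto1$, hence $c_{1,n}\propto\om^{1/4}$ because $h_n\propto\om^{-1/2}$.

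With $c_{1,n}^{-1}\le C\om^{-1/4}$ now available, the rescaled versions of (\ref{eq1:decexpo}) and (\ref{eq1:decexpo'}) become almost uniform in $\om$: one finds $|\xi_{1,n}(x)|\le C_\ep\om^{\ep}e^{-\sigma_1x^2}$ and $|\xi_{1,n}'(x)|\le C_\ep\om^{\ep}e^{-\sigma_2x^2}$. Every moment is then controlled by a Gaussian integral, $\int_\R|x|^k\xi_{1,n}^2\,dx\le C_{k,\ep}\om^{\ep}$ and likewise for $\xi_{1,n}'^2$; multiplying by $h_n\propto\om^{-1/2}$ and relabelling $\ep$ gives items (3) and (4). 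For item (1) I expand the weight and use $c_{1,n}^2h_n=O(1)$ from item (2): writing $c_{1,n}^2h_n\int x^k\xi_{1,n}^2(R_n+h_nx)\,dx=(c_{1,n}^2h_n)\big[R_n\int x^k\xi_{1,n}^2+h_n\int x^{k+1}\xi_{1,n}^2\big]$, the bracket is $O(\om^{\ep})$, so the whole quantity is $O(\om^{\ep})$.

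The only genuine difficulty is the circularity just described: the pointwise control of $\xi_{1,n}$ needs $c_{1,n}\propto\om^{1/4}$, yet $c_{1,n}$ is itself defined through $\xi_{1,n}$ (see (\ref{eq1:cin})). The device that breaks it is the identity $c_{1,n}^2h_n=\int_{\R^+}g_{1,n}^2\,dr$, whose right-hand side can be estimated using the \emph{unscaled} bound (\ref{eq1:decexpo}) alone — carrying no $c_{1,n}$ — so that item (2) is obtained before any moment of $\xi_{1,n}$ is invoked. Everything afterwards is a routine Gaussian computation; the only point requiring care is the uniformity of all constants with respect to $n\in\NN_{1/2}$ and $D_\Om$, which is guaranteed by the corresponding uniformity in Proposition \ref{theo:decexpo1d} and by $R_n\propto1$, $h_n\om^{1/2}\propto1$.
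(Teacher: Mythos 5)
Your proof is correct and follows essentially the same route as the paper's: both rest on the pointwise Gaussian decay of Proposition \ref{theo:decexpo1d}, the identity $c_{1,n}^2 h_n = \int_{\R^+} g_{1,n}^2\,dr$ (which breaks the circularity exactly as you describe), and Gaussian moment integrals after the change of variables $r = R_n + h_n x$. The only difference is cosmetic ordering — the paper proves item (1) first by noting that $c_{1,n}$ cancels in that particular combination, whereas you establish item (2) first and then transfer the decay to the rescaled functions; both organisations are valid.
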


\begin{proof}
We start with the proof of 1. Using (\ref{eq1:decexpo}) we have for every $\ep>0$ 
\begin{multline*}
\left|c_{1,n}^2 h_n \int_{\R} x^k \xi_{1,n}(x)^2 \left( R_n +h_n x \right)dx\right| = \left| \int_{\R^+} \left(\frac{r-R_n}{h_n}\right)^k g_{1,n}^2 (r)rdr\right|\\
\leq C_{\ep}\int_{\R^+} \left|\frac{r-R_n}{h_n}\right|^k \om^{1/2+\ep} e^{-2 s_1 \left( \frac{r-R_n}{h_n} \right)^2} rdr.
\end{multline*}
A change of variables $r=R_n + h_nx$ then yields
\begin{equation}\label{eq1:mom2bis}
\left| \int_{\R^+} \left(\frac{r-R_n}{h_n}\right)^k g_{1,n}^2 (r)rdr\right|\leq C_{\ep} \om^{\ep} \int_{\R} x^k e^{-2 s_1 x^2} dx\leq C_{k,\ep} \om^{\ep}.  
\end{equation}
We turn to the proof of 2. We have
\begin{equation}\label{eq1:c1n1}
c_{1,n} ^2 h_n = c_{1,n} ^2 h_n \int_{\R} \xi_{1,n}^2 = \int_{\R^+} g_{1,n} ^2 (r)dr
\end{equation}
but
\begin{eqnarray*}
\int_{\R^+} g_{1,n} ^2 (r)dr &=& \frac{1}{R_n}\int_{\R^+}g_{1,n}^2(r)rdr+\frac{1}{R_n}\int_{\R^+} \left(R_n -r\right)g_{1,n}^2(r)dr \\
&=& \frac{1}{2\pi R_n}+\frac{1}{R_n}\int_{S_{n}} \left(R_n -r\right)g_{1,n}^2(r)dr  \\ &&+ \frac{1}{R_n}\int_{S_{n}^c} \left(R_n -r\right)g_{1,n}^2(r)dr.
\end{eqnarray*}
Using (\ref{eq1:decexpo}) again one gets
\begin{equation}\label{eq1:mom3}
\left|\int_{S_{n}^c} \left(R_n -r\right)g_{1,n}^2(r)dr \right|\leq C \om^{-1/2+\ep}. 
\end{equation}
Now
\begin{equation}\label{eq1:mom4}
-h_n\int_{S_{n}}g_{1,n}^2(r)dr \leq \int_{S_{n}} \left(R_n -r\right)g_{1,n}^2(r)dr \leq h_n \int_{S_{n}}g_{1,n}^2(r)dr
\end{equation}
and gathering (\ref{eq1:mom3}) and (\ref{eq1:mom4}) we get 
\[
 \left| \int_{\R^+} g_{1,n}^2(r)dr - \frac{1}{2\pi R_n}\right|\leq C \om^{-1/2+\ep} 
\]
for some constant $C$. Using (\ref{eq1:c1n1}) we have proved 2.\\
The proof of 3 uses the same kind of computations:
\begin{eqnarray*}
 \left| h_n\int_{\R}x^k\xi_{1,n}^2(x)dx\right|&=&\left| \frac{1}{c_{1,n} ^2} \int_{\R^+} \left( \frac{r-R_n}{h_n}\right)^k g_{1,n}^2(r)dr\right| \\
&\leq & \frac{C_{k,\ep}}{c_{1,n}^2}\om^{\ep}\\
&\leq&   C_{k,\ep}\om^{\ep} h_n.
\end{eqnarray*}
We have used (\ref{eq1:decexpo}) for the first inequality and the point 2 for the last one. We prove 4 exactly like 3, using (\ref{eq1:decexpo'}) instead of (\ref{eq1:decexpo}).

\end{proof}

We are now able to present the 
\begin{proof}[Proof of Theorem \ref{theo:1D1}] We split the proof in four steps.\\
\emph{Step 1.} We use the blow-up (\ref{eq1:blowup}) and expand $V_n(r)$ around $R_n$ in the expression of $\lambda_{1,n}$. Using the fact that
\begin{equation}\label{eq1:derivees}
\forall j\in \N, \quad \forall n\in \NN_{1/2} \quad  V^{(j)}(R_n)=O(\om^2)
\end{equation}
and $h_n=O(\om^{-1/2})$ we get 
\begin{multline}\label{eq1:develo1}
\lambda_{1,n}=F_n(g_{1,n})= 2\pi\frac{c_{1,n} ^2}{h_n}\int_{\R} \xi_{1,n}'(x)^2 \left(R_n + h_nx\right)dx
\\ + 2\pi c_{1,n}^2 \sum_{k=0}^N \frac{V_n^{(k)}(R_n)}{k!}\int_{\R} h_n^{k+1}x^k \xi_{1,n}(x)^2 \left(R_n + h_n x\right)dx \\ + O(\om^{2-N/2}) \int_{\R} \left| x^N \xi_{1,n}(x)^2 \left(R_n + h_nx\right)\right| dx
\end{multline}
where 
\begin{equation}\label{eq1:develo2}
c_{1,n}^2 = \frac{1}{2\pi h_n \int \xi_{1,n}(x)^2 \left(R_n + h_nx\right)dx } =\frac{1}{2\pi h_n R_n + 2\pi h_n^2\int_{\R} x\xi_{1,n}(x)^2dx}
\end{equation} 
Then we use the estimates of Lemma \ref{theo:estimemoments} and (\ref{eq1:derivees}) again to bound the coefficients of $\lambda_{1,n}$ in its expansion in powers of $h_n$ (\ref{eq1:develo1}). Recalling the definition of $h_n$ (\ref{eq0:h}) we obtain
\begin{equation}\label{eq1:develo3}
\lambda_{1,n}= V_n(R_n) +\sqrt{\frac{V_n '' (R_n)}{2}} \int _{\R} \left(\xi_{1,n} '(x) ^2 +x^2 \xi_{1,n} (x)^2 \right)dx+ O(\om^{1/2+\ep}).
\end{equation}
for every $\ep>0$. Combining this estimate with (\ref{eq1:bornesup}) and dividing by $\sqrt{\frac{V_n '' (R_n)}{2}}=O(\om)$ we find
\begin{equation}\label{eq1:energieOH}
\int _{\R} \left(\xi_{1} '(x) ^2 +x^2 \xi_{1} (x)^2 \right) + O(\om^{-1/2+\ep}) \geq \int _{\R} \left(\xi_{1,n} '(x) ^2 +x^2 \xi_{1,n} (x)^2 \right) \geq \int _{\R} \left(\xi_{1} '(x) ^2 +x^2 \xi_{1} (x)^2 \right),
\end{equation}
but $\xi_1$ is the unique normalized minimizer of the energy associated to (\ref{eq0:OH}), so we deduce 
\begin{equation}\label{eq1:xipremieraprox}
 \xi_{1,n} = \xi_1 + O_{HO}(\om^{-1/4+\ep})
\end{equation}
for every $\ep>0$. The next steps consist in improving (\ref{eq1:xipremieraprox}), using first a sharper energy expansion then an equation satisfied by $\xi_{1,n}$.

\bigskip

\emph{Step 2.} We progressively improve the remainder term in (\ref{eq1:develo3}) by a bootstrap argument. Going back to (\ref{eq1:develo1}) and using again the results of Lemma \ref{theo:estimemoments} we may be more explicit:
\begin{multline}\label{eq1:develo4}
\lambda_{1,n}\geq V_n(R_n) +\sqrt{\frac{V_n '' (R_n)}{2}} \int _{\R} \left(\xi_{1,n} '(x) ^2 +x^2 \xi_{1,n} (x)^2 \right)dx\\+ C\om^{1/2+\ep}\left(\int_{\R} x\xi_{1,n}(x)^2dx+\int_{\R} x^3\xi_{1,n}(x)^2dx + \int_{\R} x\xi_{1,n}'(x)^2dx\right)-C\om^{\ep} 
\end{multline}
for any $\ep >0$. If, for some $\beta$ we have 
\begin{equation}\label{eq1:xipremieraproxbis}
 \xi_{1,n} = \xi_1 + O_{HO}(\om^{\beta}),
\end{equation}
then it is easy to show that for every $\ep >0$
\begin{equation}\label{eq1:estimoment1} 
\left| \int_{\R} x^3\xi_{1,n}(x)^2dx-\int_{\R} x^3\xi_{1}(x)^2dx \right|\leq C \om ^{\beta+\ep}, 
\end{equation}
using (\ref{eq1:xipremieraproxbis}), (\ref{eq1:decexpo}) and the fact that $\xi_1$ is a Gaussian. Now, $\xi_1$ being an even function we deduce
\begin{equation}\label{eq1:estimoment2} 
\left| \int_{\R} x^3\xi_{1,n}(x)^2dx\right|\leq C \om ^{\beta+\ep}, 
\end{equation}
and arguing likewise to bound $\left| \int_{\R} x\xi_{1,n}(x)^2dx\right|$ and $\left| \int_{\R} x\xi_{1,n}'(x)^2dx\right|$ we improve (\ref{eq1:develo4}) into
\begin{equation}\label{eq1:develo5}
\lambda_{1,n}\geq V_n(R_n) +\sqrt{\frac{V_n '' (R_n)}{2}} \int _{\R} \left(\xi_{1,n} '(x) ^2 +x^2 \xi_{1,n} (x)^2 \right)dx+ C\om^{1/2+\ep+\beta}
\end{equation}
and the same argument as that used with (\ref{eq1:energieOH}) to get (\ref{eq1:xipremieraprox}) yields
\begin{equation}\label{eq1:xipremieraproxter}
 \xi_{1,n} = \xi_1 + O_{HO}(\om^{\frac{-1/2+\ep+\beta}{2}}),
\end{equation}
for every $\ep>0$. The fixed point of the function
\[
 \beta \mapsto -\frac{1}{4}+\frac{\beta}{2}+\frac{\ep}{2}
\]
being $\beta = -\frac{1}{2}+\ep $, we deduce by induction that for every $\ep >0$
\begin{equation}\label{eq1:xideuxaprox}
 \xi_{1,n} = \xi_1 + O_{HO}(\om^{-1/2+\ep}),
\end{equation}
using (\ref{eq1:xipremieraprox}) as a starting point for the induction.

\bigskip

\emph{Step 3.} We are going to improve the expansion (\ref{eq1:xideuxaprox}) using the equation satisfied by $\xi_{1,n}$. We claim that 
\begin{equation}\label{eq1:xitroisaprox}
 \xi_{1,n} = \xi_1 + h_n P_n \xi_1 + O_{HO}(\om^{-1+\ep})
\end{equation}
where $P_n$ is an odd polynomial of degree 3. Let us write the equation for $\xi_{1,n}$. We write (\ref{eq1:eqEL2}) in radial coordinates:
\begin{equation}\label{eq1:eqELbis}
-g_{1,n} ''(r) - \frac{g_{1,n}'(r)}{r}+ V_n(r) g_{1,n}(r) = \lambda_{1,n} g_{1,n}(r).
\end{equation}
Then, making the change of variables $r=R_n+h_n x$, expanding $V_n$ and $\frac{1}{r}$ and multiplying by $h_n^2 c_{1,n} ^{-1}$ we get
\begin{multline}\label{eq1:eqELxi0}
-\xi_{1,n}'' - \frac{h_n}{R_n}\xi_{1,n}'+h_n ^2 V_n(R_n)\xi_{1,n}+\frac{V_n ''(R_n)}{2}h_n^4x^2\xi_{1,n}+\frac{V_n '''(R_n)}{6}h_n^5x^3\xi_{1,n}
\\ = \lambda_{1,n}h_n ^2 \xi_{1,n}+O_{L^2}(\om^{-1+\ep}).
\end{multline}
We have used the estimates of Lemma \ref{theo:estimemoments} again to evaluate the remainder. Taking $\beta = -1/2 +\ep$ in (\ref{eq1:develo5}) we have
\[
\lambda_{1,n}=V_n(R_n)+\sqrt{\frac{V_n '' (R_n)}{2}} +O(\om^{\ep}),
\]
so, using the definition of $h_n$ (\ref{eq0:h}), (\ref{eq1:eqELxi0}) reduces to
\begin{equation}\label{eq1:eqELxi}
-\xi_{1,n}'' - \frac{h_n}{R_n}\xi_{1,n}'+ x^2\xi_{1,n}+\frac{V_n '''(R_n)}{6}h_n^5x^3\xi_{1,n}= \xi_{1,n}+O_{L^2}(\om^{-1+\ep}).
\end{equation}
Note that $\frac{V_n ''(R_n)}{2}h_n^4=1$ by definition. Now, if we write $\xi_{1,n}$ as
\begin{equation}\label{eq1:xitroisaproxbis}
\xi_{1,n} = \xi_1 + h_n\varphi_n
\end{equation}
and insert this in (\ref{eq1:eqELxi}), using the Euler-Lagrange equation for $\xi_1$ and the estimate
\[
\varphi_n = O_{HO}(\om^{\ep}) \quad \forall \ep >0 , 
\]
deduced from (\ref{eq1:xideuxaprox}) and (\ref{eq0:h2}), we get an equation satisfied by $\varphi_n$, namely
\begin{equation}\label{eq1:eqELphi}
-\varphi_n '' + x^2\varphi_n - \varphi_n = \frac{1}{R_n}\xi_{1}'-\frac{V_n '''(R_n)}{6}h_n^4x^3\xi_{1}+O_{L^2}(\om^{-1/2+\ep}).
\end{equation}
On the other hand, the mass constraint for $\xi_{1,n}$ allows us to deduce from (\ref{eq1:xitroisaproxbis})
\begin{equation}\label{eq1:orthogonalite}
\left| \int  \xi_1 \varphi_n \right| \leq C\om^{-1/2+\ep} 
\end{equation} 
Using an expansion on the basis of normalized eigenfunctions of the harmonic oscillator (\ref{eq0:OH}), one can see that the problem 
\begin{equation}\label{probPn}
\begin{cases} 
  -u '' + x^2 u - u = \frac{1}{R_n}\xi_{1}'-\frac{V_n'''(R_n)}{6}h_n^4x^3\xi_{1}\\ 
 \int  \xi_1 u  =0
\end{cases}
\end{equation}
has a unique solution in $H^1(\R)\cap L^2(\R,x^2 dx)$, which is of the form $u=P_n\xi_1$ where $P_n$ is an odd polynomial of degree 3. Substracting (\ref{eq1:eqELphi}) from the first line of (\ref{probPn}) and using (\ref{eq1:orthogonalite}) it is easy to show that
\begin{equation}\label{eq1:phi}
 \varphi_n = P_n\xi_1 + O_{HO}(\om ^{-1/2+\ep})
\end{equation}
thus proving (\ref{eq1:xitroisaprox}). The first line of (\ref{DLxi2}) follows by inspection of the right-hand side of (\ref{eq1:eqELphi}). \\

\bigskip

\emph{Step 4.} We complete the proof of (\ref{eq0:DLfonction}). The argument is similar to that of Step 3. We can compute from (\ref{eq1:xitroisaprox}) that
\begin{equation}\label{lambdandl}
 \lambda_{1,n}=V_n(R_n)+\sqrt{\frac{V_n '' (R_n)}{2}} +K_n,
\end{equation}
where $K_n = O(1)$. We write
\begin{equation}\label{eq1:xiquatreaproxbis}
\xi_{1,n}=  \xi_1 + h_n P_n\xi_1 + h_n^2 \psi_n
\end{equation}
and arguing as in Step 3 we get an equation for $\psi_n$:
\begin{multline}\label{eq1:eqELpsi}
-\psi_n '' + x^2\psi_n - \psi_n = K_n \xi_1 -\frac{x}{R_n^{2}}\xi_{1}'-\frac{V_n^{ (4)}(R_n)}{4!}h_n^4 x^4\xi_{1}\\+\frac{\left(P_n \xi_1\right)'}{R_n}-\frac{V_n ^{(3)}(R_n)}{4!}h_n^4x^3 P_n\xi_{1}+O_{L^2}(\om^{-1/2+\ep}) 
\end{multline}
and the condition
\begin{equation}\label{eq1:orthogonalite2}
\int  \xi_1 \psi_n = -\frac{1}{2}\int P_n ^2 \xi_1 ^2+O(\om^{-1/2}).  
\end{equation} 
Note that, multiplying (\ref{eq1:eqELpsi}) by $\xi_1$, integrating and using (\ref{eq1:orthogonalite2}) we get 
\begin{equation}\label{Knprime}
K_n = K'_n +O(\om^{-1/2+\ep})
\end{equation}
where 
\begin{equation}\label{Knprime2}
K'_n:=  \int \xi_1 \left(\frac{x}{R_n^{2}}\xi_{1}'+\frac{V_n^{ (4)}(R_n)}{4!}h_n^4 x^4\xi_{1}-\frac{\left(P_n \xi_1\right)'}{R_n}+\frac{V_n ^{(3)}(R_n)}{4!}h_n^4x^3 P_n\xi_{1} \right).
\end{equation}
Solving (\ref{eq1:eqELpsi}) we find
\begin{equation}\label{eq1:psi}
\psi_n = Q_n\xi_1 + O_{HO}(\om ^{-1/2+\ep}) 
\end{equation}
for every $\ep>0$, where $Q_n$ is a polynomial and $Q_n \xi_1$ is the unique solution of the problem 
\begin{equation}\label{probQn}
\begin{cases} 
  -u '' + x^2 u - u = K'_n \xi_1 -\frac{x}{R_n^{2}}\xi_{1}'-\frac{V_n^{ (4)}(R_n)}{4!}h_n^4 x^4\xi_{1}+\frac{\left(P_n \xi_1\right)'}{R_n}-\frac{V_n ^{(3)}(R_n)}{4!}h_n^4x^3 P_n\xi_{1}\\ 
 \int  \xi_1 u  = -\frac{1}{2}\int P_n ^2 \xi_1 ^2.
\end{cases}
\end{equation} This concludes the proof of (\ref{eq0:DLfonction}). We deduce (\ref{eq0:DLenergie1}) by straightforward calculations. 
\end{proof}

We present the
\begin{proof}[Proof of Corollary \ref{cor:lambdan/n^*}]
 
At this stage, we know that 
\[
\lambda_{1,n}= \Cc(R_n) + K_n\\ 
\]
where the function $\Cc$ is given by
\begin{equation}\label{eq1:Fcout}
\Cc (R_n) = V_n(R_n) + \sqrt{\frac{V_n '' (R_n)}{2}}
\end{equation}
and $K_n=O(1)$. Note that $K_n - K_m = O(\om^{-1/2})$ for any $m,n\in \NN_{1/2}$, as a consequence of (\ref{DLxi2}), which allows to focus on $\Cc$. It is a function of $R_n$ only, remembering the one-to-one relation (\ref{eq0:rn*}) between $R_n$ and $n$. Recalling (\ref{eq:Vn}) we have explicitly:
\begin{multline}\label{Fcout3}
\Cc (R_n) = D_{\Om}\frac{\om ^2}{2}R_n ^4 +\left(1-D_{\Om}\right)\om ^2 R_n ^2 + D_{\Om}\om ^2 -2\om ^2 \sqrt{D_{\Om}}R_n ^2 \sqrt{R_n ^2 +\frac{1-D_{\Om}}{D_{\Om}}}  \\  + \sqrt{12D_{\Om}\om ^2 R_n ^2 + 12\om ^2 \left(1-D_{\Om}\right)}.
\end{multline}
Studying the variations of $\Cc(r)$ with respect to $r\in \R^+$ we can see that $\Cc$ is minimized at a unique $R \in \R^{+}$ depending continuously on $D_{\Om}$ and $\om$. We have no explicit expression for $R$ but one can see that $R=1+O(\om ^{-1})$. Expanding $\Cc$ around $R$ we have for any $n\in \NN_{1/2}$
\begin{equation}\label{eq1:Fcout2}
\Cc(R_n) = \Cc(R) + \frac{\Cc '' (R)}{2} \left(R_n - R\right)^2\left(1+O(\om^{-1/2})\right) 
\end{equation}
because $\Cc ^{(k)} (R) = O(\om^2)$ for $k\geq 2$ and $R_n-R = O(\om^{-1/2})$ when $n\in \NN_{1/2}$. On the other hand, using (\ref{eq0:rn*}) we get that for any $n\in \NN_{1/2}$, $\om |R_n-R|\propto |N-n|$  where $N=D_{\Om}^{1/2}\om \left(R^6+\frac{1-D_{\Om}}{D_{\Om}}R^4\right)^{1/2}$ so
\begin{equation}\label{eq1:Fcout3}
\Cc(R_n) - \Cc(R) \propto  \frac{\Cc '' (R)}{2 \om^2} \left(n - N \right)^2\left(1+O(\om^{-1/2})\right) .
\end{equation}
We want to define $n^*$ as the minimizer in $\NN_{1/2}$ of $ \left(n - N \right)^2$. Only two cases can occur : either $N$ is exactly half an integer, then $[N]$ and $[N]+1$ both minimize $ \left(n - N \right)^2$ in $\NN_{1/2}$, or there is exactly one minimizer. \\
In both cases we pick one minimizer $n^*$ and (\ref{eq1:dependanceN}) follows by inspection of (\ref{eq1:Fcout3}). An important point is that $|n^*-N|\leq 1/2$.

%
%
\end{proof}
 
We now turn to the proof of Proposition \ref{theo:1D2}. We begin with an upper bound on $\lambda_{2,n}$:
\begin{lemma}[Upper bound for $\lambda_{2,n}$]\label{theo:bornesup1d2}\mbox{}\\
Let $\lambda_{2,n}$ and $g_{2,n}$ be defined by (\ref{eq1:lambda2n}).\\ 
We have for every $n\in\NN_{1/2}$, as $\om \rightarrow \infty$ and $D_{\Om}\rightarrow D$
\begin{equation}\label{eq1:bornesup2}
\lambda_{2,n} = F_n(g_{2,n})\leq V_n(R_n) + 3 \sqrt{\frac{V_n '' (R_n)}{2}}+O(\om^{1/2}).
\end{equation}
\end{lemma}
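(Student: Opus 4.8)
The plan is to repeat the variational upper-bound argument of Lemma~\ref{theo:bornesup1d}, but building the trial function from the \emph{second} normalized eigenfunction $\xi_2$ of the harmonic oscillator (\ref{eq0:OH}), whose eigenvalue is $3$ rather than $1$. Since $\lambda_{2,n}$ is characterized by (\ref{eq1:lambda2n}) as the infimum of $F_n$ over mass-one functions orthogonal to $g_{1,n}$, the trial function must respect that orthogonality. I would therefore start from
\[
g_n^{test,2}(r)=c\,\xi_2\!\left(h_n^{-1}(r-R_n)\right),
\]
with $h_n$ given by (\ref{eq0:h}) and $c$ chosen so that $2\pi\|g_n^{test,2}\|_{L^2(\R^+,rdr)}^2=1$, then orthogonalize by setting $f=g_n^{test,2}-\alpha\,g_{1,n}$ with $\alpha=2\pi\int_{\R^+}g_n^{test,2}\,g_{1,n}\,rdr$, and finally renormalize.

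First I would estimate $\alpha$. Using the blow-up $g_{1,n}=c_{1,n}\xi_{1,n}(h_n^{-1}(r-R_n))$, the change of variables $r=R_n+h_nx$, and the expansion $\xi_{1,n}=\xi_1+h_nP_n\xi_1+\dots$ from Theorem~\ref{theo:1D1}, the leading contribution to $\alpha$ is a multiple of $\int_{\R}\xi_1\xi_2=0$, and every surviving term carries at least one extra factor $h_n$. Because $c,c_{1,n}\propto\om^{1/4}$ and $h_n\propto\om^{-1/2}$, this gives $\alpha=O(\om^{-1/2})$; the vanishing of the leading term is exactly the parity cancellation between the even $\xi_1$ and the odd $\xi_2$. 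With $\alpha$ so small the orthogonalization costs almost nothing: since $g_{1,n}$ solves the Euler--Lagrange equation, the associated bilinear form $a$ satisfies $a(g_n^{test,2},g_{1,n})=\lambda_{1,n}\alpha$, so
\[
F_n(f)=F_n(g_n^{test,2})-\alpha^2\lambda_{1,n}=F_n(g_n^{test,2})+O(1),
\]
using $\alpha^2=O(\om^{-1})$ and $\lambda_{1,n}=O(\om)$. The renormalization factor is $(1-\alpha^2)^{-1}=1+O(\om^{-1})$, which perturbs the $O(\om)$ energy by a further $O(1)$.

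The main computation is then the energy of $g_n^{test,2}$ itself, carried out precisely as in the derivation of (\ref{eq1:develo1}): blow up, Taylor-expand $V_n$ and $1/r$ around $R_n$, and bound the resulting moments using the (trivial, Gaussian) $\xi_2$-analogue of Lemma~\ref{theo:estimemoments} together with Remark~\ref{Rq1}. The $k=0$ term reproduces $V_n(R_n)$ by the mass normalization (\ref{eq0:normalisation}); the $k=1$ term vanishes because $V_n'(R_n)=0$; and the $k=2$ term combined with the kinetic term yields
\[
\sqrt{\frac{V_n''(R_n)}{2}}\int_{\R}\left(\xi_2'(x)^2+x^2\xi_2(x)^2\right)dx=3\sqrt{\frac{V_n''(R_n)}{2}},
\]
since $\xi_2$ has harmonic-oscillator energy $3$. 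As $\xi_2^2$ is even, the parity cancellations kill the leading part of every odd-weight remainder (in particular the $k=3$ term and the corrections coming from the weight $r=R_n+h_nx$), so all remainders are $O(1)$, comfortably within the claimed $O(\om^{1/2})$.

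I expect the only delicate point to be the bookkeeping of the orthogonalization, namely checking that enforcing $\int_{\R^+} f\,g_{1,n}\,rdr=0$ does not spoil the coefficient $3$ in front of $\sqrt{V_n''(R_n)/2}$; this is precisely where the smallness $\alpha=O(\om^{-1/2})$ and the ground-state identity $a(\,\cdot\,,g_{1,n})=\lambda_{1,n}\langle\,\cdot\,,g_{1,n}\rangle_{L^2}$ do the work. Everything else reduces to Gaussian-moment integrals of the type already handled in Lemma~\ref{theo:estimemoments}.
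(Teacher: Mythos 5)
Your proposal is correct and follows essentially the same route as the paper: the trial function $c\,\xi_2(h_n^{-1}(r-R_n))$, orthogonalization against $g_{1,n}$ controlled via the Euler--Lagrange identity $a(\cdot,g_{1,n})=\lambda_{1,n}\langle\cdot,g_{1,n}\rangle$, the bound $\alpha=O(\om^{-1/2})$ from $\int_{\R}\xi_1\xi_2=0$, and the blow-up/Taylor expansion giving the coefficient $3$. The only cosmetic difference is that you track the $-\alpha^2\lambda_{1,n}$ correction explicitly while the paper simply discards it as nonpositive.
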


\begin{proof}
We begin with a test function constructed as that in the proof of Lemma \ref{theo:bornesup1d}:
\begin{equation}\label{eq1:ftest21}
 G_n ^{test}(r)=c\xi _2 \left( h_n^{-1}(r-R_n) \right)
\end{equation}
where $c$ is chosen so that $2\pi\Vert G_n ^{test} \Vert _{L^2}^2 = 1 $.
Then we define 
\begin{equation}\label{eq1:ftest22}
\tilde{G}_n ^{test}= G_n ^{test} - \left(\int_{\R^+} g_{1,n}(r)G_n ^{test}(r)rdr\right) g_{1,n} 
\end{equation}
and, as $\int_{\R^+}\tilde{G}_n ^{test}g_{1,n}rdr=0$ we have
\begin{equation}\label{lam21}
 F_n(g_{2,n})\leq \frac{1}{\Vert \tilde{G}_n ^{test} \Vert_{L^{2}(\R^+,rdr)} ^2}F_n(\tilde{G}_n ^{test}).
\end{equation}
Using the Euler-Lagrange equation (\ref{eq1:eqEL2}) for $g_{1,n}$ one proves that
\begin{equation}\label{lam22}
 F_n(\tilde{G}_n ^{test})=F_n(G_n ^{test})- \left(\int_{\R^+} g_{1,n}(r)G_n ^{test}(r)rdr\right) ^2 \lambda_{1,n} \leq F_n(G_n ^{test}).
\end{equation}
Computing $F_n(G_n ^{test})$ as (\ref{eq0:energietest'}), we obtain
\begin{multline}\label{lam23}
 F_n(G_n ^{test}) = V_n(R_n)+\sqrt{\frac{V_n '' (R_n)}{2}} \int_{\R}\left(\xi_2' (x) ^2 + x^2 \xi_2 (x) ^2 \right)dx + O(\om ^{1/2}) 
\\  = V_n(R_n) + 3 \sqrt{\frac{V_n '' (R_n)}{2}}+O(\om^{1/2}).
\end{multline}
because $\xi_2$ is the second normalized eigenfunction of the harmonic oscillator (\ref{eq0:OH}), associated to the eigenvalue $3$. On the other hand
\begin{multline}\label{lam24}
 \Vert \tilde{G}_n ^{test} \Vert_{L^{2}(\R^+,rdr)} ^2 = 1-\left(\int_{\R^+} g_{1,n}(r)G_n ^{test}(r)rdr\right) ^2 
\\\geq 1 -C \left( \int_{\R} \xi_{1,n} (x) \xi_2(x) (R_n +h_n x )dx \right) ^2 \geq 1-C \om^{-1} 
\end{multline}
using (\ref{eq0:DLfonction}) and $\int_{\R} \xi_1 \xi_2 =0$. Gathering equations (\ref{lam21}) to (\ref{lam24}) we get the result, remembering that $V_n(R_n) + 3 \sqrt{\frac{V_n '' (R_n)}{2}}=O(\om)$.

\end{proof}

\begin{proof}[Proof of Proposition \ref{theo:1D2}]
Starting from the upper bound (\ref{eq1:bornesup2}), we can prove the equivalents of Proposition \ref{theo:decexpo1d} and Lemma \ref{theo:estimemoments} for $g_{2,n}$. We omit the detailed calculations since they are easy modifications of those we used for $g_{1,n}$. Just note that $g_{2,n}$ is not positive so we have to rely on the inequality 
\begin{equation}\label{sousolg2n}
 -\Delta |g_{2,n}| ^2 + 2 V_n |g_{2,n}|^2 \leq 2 \lambda_{2,n} |g_{2,n}|^2
\end{equation}
to prove the equivalents of (\ref{eq1:decexpo}) and (\ref{eq1:dec0}). (\ref{sousolg2n}) is a consequence of (\ref{EElgjn}).\\ 
Then we can argue as in Step 1 of the proof of Theorem \ref{theo:1D1} and obtain
\begin{equation}\label{eq1:develo23}
\lambda_{2,n}= V_n(R_n) +\sqrt{\frac{V_n '' (R_n)}{2}} \int _{\R} \left(\xi_{2,n} '(x) ^2 +x^2 \xi_{2,n} (x)^2 \right)dx+ O(\om^{1/2+\ep}).
\end{equation}
But
\begin{equation*}
\int_{\R^+} g_{1,n}(r)g_{2,n}(r) rdr  = 0 
\end{equation*}
and this implies, using (\ref{eq0:DLfonction}), 
\begin{equation}\label{ortho}
 \int_{\R} \xi_1(x) \xi_{2,n} (x)dx = O(\om^{-1/2})
\end{equation}
We obtain (\ref{eq0:DLenergie2}) from (\ref{eq1:develo23}) and (\ref{ortho}). The second inequality holds because $V_n '' (R_n) = O(\om ^2)$. 
\end{proof}

\section{Lower bound on the interaction energy of $u_{\om}$, proof of Theorem \ref{theo:densite/energie}}

In this section we show how the estimates of Section 2 on the one-dimensional problems allow one to bound the quartic term in the energy (\ref{eq0:modele3}) from below. We then deduce the energy and density asymptotics of Theorem \ref{theo:densite/energie}.\\

We begin with a property of exponential decay for $u_{\om}$:

\begin{proposition}[Exponential decay for $u_{\om}$]\label{theo:decexpou}\mbox{}\\
There is a constant $\sigma$ so that 
\begin{equation} \label{decexpou}
|u_{\om}(x)|\leq C \om^{1/2} e^{-\sigma \om ||x|-1|^2}\mbox{ for any } x \in \R^2.
\end{equation}
\end{proposition}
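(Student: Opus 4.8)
The plan is to mimic the one-dimensional Proposition~\ref{theo:decexpo1d}, but working directly with the modulus $|u_{\om}|$ of the full two-dimensional minimizer. First I would invoke the diamagnetic (Kato) inequality: for a solution of the magnetic equation (\ref{eq0:EELu}) one has, in the distributional sense,
\[
\Delta |u_{\om}| \geq \mathrm{Re}\left( \frac{\overline{u_{\om}}}{|u_{\om}|}\left(\nabla - i\om x^{\perp}\right)^2 u_{\om}\right).
\]
Substituting $\left(\nabla - i\om x^{\perp}\right)^2 u_{\om} = V u_{\om} + 2G|u_{\om}|^2 u_{\om} - \mu_{\om}u_{\om}$ from (\ref{eq0:EELu}), with $V$ as in (\ref{eq1:Voriginal}), and discarding the nonnegative term $2G|u_{\om}|^3$, I obtain the scalar subsolution inequality
\[
-\Delta |u_{\om}| + \left(V(x) - \mu_{\om}\right)|u_{\om}| \leq 0.
\]
This is exactly the structure exploited in Proposition~\ref{theo:decexpo1d}, with $V-\mu_{\om}$ in place of $V_n - V_n(R_n) - \sqrt{V_n''(R_n)}$. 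The crucial difference is that passing to $|u_{\om}|$ discards the angular (magnetic) energy, so there is no centrifugal barrier $n^2/r^2$ near the origin; this is what makes $|x|\to 0$ the delicate point.

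Before comparing, I would record two facts. Testing (\ref{eq0:EELu}) against $u_{\om}$ gives $\mu_{\om} = I_{\om} + G\int_{\R^2}|u_{\om}|^4$, and since the quadratic part of $F_{\om}$ is nonnegative, both terms are bounded by $I_{\om}=O(\om)$ (by the trial-function bound (\ref{eq0:energietest'})); hence $\mu_{\om} = O(\om)$. Next, from the weaker inequality $-\Delta|u_{\om}| \leq \mu_{\om}|u_{\om}|$ together with $\||u_{\om}|\|_{L^2}=1$, a scaled local elliptic (De Giorgi--Nash--Moser) estimate for nonnegative subsolutions, applied at the natural length scale $\mu_{\om}^{-1/2}\propto\om^{-1/2}$, yields $\|u_{\om}\|_{L^{\infty}} \leq C\om^{1/2}$. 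This supplies both the prefactor in (\ref{decexpou}) and the boundary data for the comparison.

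The core is then a comparison with the Gaussian supersolution $w^+(x) = C\om^{1/2}e^{-\sigma\om(|x|-1)^2}$. Since $\mu_{\om}=O(\om)$ and $D_{\Om}$ is bounded below, there is a large $\gamma$ so that $V(x)-\mu_{\om}>0$ as soon as $||x|-1|>\gamma\om^{-1/2}$, where in fact $V-\mu_{\om}\gtrsim\om^2(|x|-1)^2$. A direct computation of $-\Delta w^+ + (V-\mu_{\om})w^+$ shows it is nonnegative in this region provided $\sigma$ is small (roughly $\sigma^2<D/8$, using $D_{\Om}\to D>0$) and $\gamma$ is large (so the gain $\gtrsim\gamma^2\om$ beats $\mu_{\om}$). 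On the thin annulus $||x|-1|\leq\gamma\om^{-1/2}$ the estimate (\ref{decexpou}) follows for free from the $L^{\infty}$ bound after enlarging $C$, since $w^+\geq C\om^{1/2}e^{-\sigma\gamma^2}$ there. On the exterior $\{|x|>1+\gamma\om^{-1/2}\}$ the comparison principle applies with boundary data from the $L^{\infty}$ bound on the inner circle and decay at infinity (using $V\to+\infty$ and $u_{\om}\in L^2$), giving $|u_{\om}|\leq w^+$.

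The main obstacle is the inner disk $\{|x|<1-\gamma\om^{-1/2}\}$: comparison there is attractive because a disk has no inner boundary (only the already-controlled outer-circle data is needed), but the radial term $r^{-1}\partial_r w^+$ is singular at the origin and, absent a centrifugal barrier, spoils the supersolution property on a tiny ball of radius $\propto\om^{-1}$. I would resolve this by replacing $w^+$ on the disk by the symmetrized supersolution $\hat w(x)=C\om^{1/2}\big(e^{-\sigma\om(|x|-1)^2}+e^{-\sigma\om(|x|+1)^2}\big)$, whose radial derivative vanishes at $0$, so that the singular contributions of $r^{-1}\partial_r$ from the two Gaussians cancel and $\hat w$ is a genuine supersolution on the whole disk, the large potential $V-\mu_{\om}\propto\om^2$ absorbing the remaining finite terms. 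Since $e^{-\sigma\om(|x|+1)^2}\leq e^{-\sigma\om(|x|-1)^2}$ for $|x|\geq0$, one has $\hat w\leq 2w^+$, so comparison on the disk gives $|u_{\om}|\leq 2w^+$ there. Combining the three regions and absorbing the $O(1)$ constants into $C$ and $\sigma$ yields (\ref{decexpou}).
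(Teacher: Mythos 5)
Your proof is correct in substance but departs from the paper's argument at three points, each legitimately. First, you pass to $|u_{\om}|$ via the magnetic Kato inequality, whereas the paper works with $U=|u_{\om}|^2$, for which the equation (\ref{eq2:EELU}) has only nonnegative extra terms and the same subsolution inequality (\ref{eq2:sousol}) drops out without invoking Kato. Second, your $L^\infty$ bound comes from a rescaled De Giorgi--Nash--Moser local estimate at scale $\om^{-1/2}$ combined with $\Vert u_{\om}\Vert_{L^2}=1$; the paper instead uses Farina's argument ($\tilde U = U - \mu_{\om}/2G$ satisfies $-\Delta \tilde U^+ + 4G(\tilde U^+)^2\le 0$, hence $\tilde U^+\equiv 0$), which gives $U\le \mu_{\om}/(2G)\le C\om$ but crucially needs $G$ bounded below --- your route avoids that hypothesis, which is a genuine (if unneeded here) gain. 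Third, for the singularity of $r^{-1}\partial_r$ at the origin, the paper simply uses the comparison function $C_1\om\, e^{-\sigma_1\om(|x|^2-1)^2}$ on $\{|x|\le 1\}$, which is a smooth function of $|x|^2$ and satisfies $(|x|^2-1)^2\ge(|x|-1)^2$; your symmetrized double Gaussian achieves the same end with slightly more work.

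One step in your version is stated too loosely: on the disk $\{|x|<1-\gamma\om^{-1/2}\}$ it is \emph{not} true that $V-\mu_{\om}\propto\om^2$ uniformly --- near the inner edge of the annulus one only has $V-\mu_{\om}\gtrsim \gamma^2\om$. There the reflected Gaussian $w_-=e^{-\sigma\om(|x|+1)^2}$ contributes, through $-w_-''$, a negative term of size $4\sigma^2\om^2(|x|+1)^2 w_-\sim 16\sigma^2\om^2 w_-$ which the potential cannot absorb term by term. The supersolution property still holds because $w_-\le e^{-2\sigma\om}w_+$ on $\{|x|\ge 1/2\}$, so this defect is exponentially small compared with the positive surplus $\gtrsim \gamma^2\om\, w_+$ carried by the principal Gaussian; and on $\{|x|\le 1/2\}$ one does have $V-\mu_{\om}\ge C\om^2$, which absorbs everything for $\sigma^2$ small relative to $D$. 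You should make this two-region splitting explicit; as written, the claim that the $\om^2$ potential does all the work is false where it is most needed.
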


\begin{proof}
 
We recall the Euler-Lagrange equation for $u_{\om}$:
\begin{equation}\label{eq2:EELu}
-\left(\nabla -i\om x^{\perp} \right)^2 u_{\om} +D_{\Om}\frac{\om^2}{2} \left(\vert x\vert^2 - 1 \right)^2 u_{\om} + 2 G \vert u_{\om}\vert ^2 u_{\om}=\mu _{\om} u_{\om} 
\end{equation} 
where $\mu_{\om}$ is the Lagrange multiplier associated with the mass constraint. From this we deduce an equation satisfied by 
$U:=\vert u_{\om}\vert^2$,
\begin{equation}\label{eq2:EELU}
-\Delta U +4 G U^2 +2\left| \nabla u_{\om} -i\om x^{\perp}u_{\om} \right|^2+2 U \left(D_{\Om}\frac{\om^2}{2} \left(\vert x\vert^2 - 1 \right)^2-\mu_{\om} \right)=0.
\end{equation}
As in the proof of Proposition \ref{theo:decexpo1d}, we need first a global $L^{\infty}$ bound. We claim that
\begin{equation}\label{eq2:borneLinf}
\Vert U \Vert_{L^{\infty}} \leq C \om, 
\end{equation}
which is proved using a method due to Alberto Farina \cite{Far}. We introduce
\[
 \tilde{U}:= U-\frac{\mu_{\om}}{2G}.
\]
We use Kato's inequality and (\ref{eq2:EELU})
\begin{eqnarray*}
\Delta (\tilde{U} ^+) &\geq& \one_{\tilde{U}\geq 0} \Delta \tilde{U}\\ &=& \one_{\tilde{U}\geq 0} \left(4G U^2+2\left| \nabla u_{\om} -i\om x^{\perp}u_{\om} \right|^2 + 2 U \left(D_{\Om}\frac{\om^2}{2} \left(\vert x\vert^2 - 1 \right)^2-\mu_{\om} \right) \right)\\
&\geq& \one_{\tilde{U}\geq 0} \left(4G U^2 -2\mu_{\om} U \right)\\
&=& \one_{\tilde{U}\geq 0} 4G \tilde{U}^2 \geq 4G (\tilde{U}^+ )^2.
\end{eqnarray*}
We have $-\Delta (\tilde{U}^+) + 4G(\tilde{U}^+)^2\leq 0$ which implies $\tilde{U}^+ \equiv 0$ (see \cite{Bre}). We multiply the Euler-Lagrange equation (\ref{eq0:EELu}) by $u_{\om}^*$ and integrate over $\R^2$ to get
\begin{equation}\label{potchim0}
\mu_{\om}= I_{\om} + G \int_{\R^2} |u_{\om}|^4
\end{equation}
and thus
\begin{equation}\label{potchim}
\mu_{\om} \leq 2 I_{\om} \leq C\om 
\end{equation}
because of the upper bound (\ref{eq0:energietest'}). Using (\ref{potchim}) and the fact that $G$ is larger than some constant, the claim is proved.\\
The rest of the proof is an application of the maximum principle. We define
\begin{equation}\label{Anneau}
\A_{\om,\delta}=\left\lbrace x\in \R^2,\: \left| \vert x \vert-1 \right|^2 \leq \delta \om ^{-1} \right\rbrace. 
\end{equation}
We have
\begin{equation}\label{eq2:sousol}
 -\Delta U + 2 U \left(D_{\Om}\frac{\om^2}{2} \left(\vert x\vert^2 - 1 \right)^2-\mu_{\om} \right) \leq 0
\end{equation}
and $\left(D_{\Om}\frac{\om^2}{2} \left(\vert x\vert^2 - 1 \right)^2-\mu_{\om} \right)>0$ on $\A_{\om,\delta}^c$ if $\delta$ is large enough. We use two comparison functions:
\[
 F_1 (x):= C_1\om e^{-\sigma_1 \om \left(|x|^2 - 1\right)^2}
\]
which is a supersolution to (\ref{eq2:sousol}) on $\A_{\om,\delta}^c  \cap \left\lbrace |x|\leq 1 \right\rbrace$ if $\sigma_1$ is chosen small enough and
\[
 F_2 (x):= C_2\om e^{-\sigma_2 \om \left(|x| - 1\right)^2}
\]
which is a supersolution to (\ref{eq2:sousol}) on $\A_{\om,\delta}^c  \cap \left\lbrace |x|\geq 1 \right\rbrace$ if $\sigma_2$ is chosen small enough. The boundary conditions for the application of the comparison principle are fulfilled thanks to (\ref{eq2:borneLinf}) by taking $C_1$ and $C_2$ large enough. There remains to note that 
\[
 F_1(x)\leq C_1\om e^{-\sigma_1 \om \left(|x| - 1\right)^2}
\]
to show that (\ref{decexpou}) holds on $\A_{\om,\delta}^c$ with $\sigma = \frac{1}{2}\min (\sigma_1,\sigma_2)$ and $C=\max (\sqrt{C_1},\sqrt{C_2})$.\\
On the other hand (\ref{eq2:borneLinf}) implies that (\ref{decexpou}) is true (for a large enough constant $C$) on $\A_{\om,\delta}$ because $e^{-\sigma \om \left(|x| - 1\right)^2}$ is bounded below there, which concludes the proof.
\end{proof}

Let us recall that we have the Fourier expansion 
\[
 u_{\om}(r,\theta)= \sum_{n\in \Z} f_n(r) e^{in\theta}.
\]
We denote $\left< .,.\right>$ the scalar product in $L^2(\R^{+},rdr)$:
\[
\left\langle u,v \right\rangle = \int_{\R^+} u(r) v^*(r) rdr
\]
and introduce
\begin{equation}\label{eq2:utilde}
\tilde{u}_{\om}(r,\theta):= \sum_{n\in \NN_{1/2}} \left< f_n,g_{1,n} \right> g_{1,n}(r)e^{in\theta}. 
\end{equation}
As a consequence of our analysis in Section 2, the modes corresponding to $n\notin \NN_{1/2}$ carry very little mass, whereas for $n\in \NN_{1/2}$ we have $f_n \propto g_{1,n}$ in the $L^2$ sense. This can be summed up in the

\begin{proposition}[First estimate on $\Vert u_{\om} -\tilde{u}_{\om}\Vert_{L^2(\R^2)}$]\label{theo:confine1}\mbox{}\\
Let $u_{\om}$ and $\tilde{u}_{\om}$ be defined respectively by (\ref{eq0:Energiemin}) and  (\ref{eq2:utilde}). Under the assumptions of Theorem \ref{theo:densite/energie} we have
\begin{equation}\label{eq2:confine1}
\Vert u_{\om} -\tilde{u}_{\om}\Vert_{L^2(\R^2)} \leq C G^{1/2} \om^{-1/4}=o(1). 
\end{equation}
 \end{proposition}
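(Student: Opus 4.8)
The plan is to combine the exact decoupling \eqref{eq0:decouple} of the quadratic energy with the eigenvalue ordering and spectral gap established in Section 2. First I would fix notation: writing $m_n := \Vert f_n e^{in\theta}\Vert_{L^2(\R^2)}^2$ for the mass carried by the $n$-th Fourier mode, the constraint \eqref{eq0:masseu} reads $\sum_{n}m_n = 1$. Let $p_n := |\langle u_\om, g_{1,n}e^{in\theta}\rangle_{L^2(\R^2)}|^2$ be the mass of the orthogonal projection of the $n$-th mode onto the ground state $g_{1,n}$ (the functions $g_{1,n}e^{in\theta}$ being orthonormal in $L^2(\R^2)$ by \eqref{eq0:normalisation}), so that $\tilde u_\om$ defined in \eqref{eq2:utilde} is precisely the projection of $u_\om$ onto $\mathrm{Vect}\{g_{1,n}e^{in\theta}:n\in\NN_{1/2}\}$. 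The Pythagorean identity then gives, with $S:=\sum_{n\in\NN_{1/2}}(m_n-p_n)$ and $M^c:=\sum_{n\notin\NN_{1/2}}m_n$,
\begin{equation*}
\Vert u_\om - \tilde u_\om\Vert_{L^2(\R^2)}^2 = S + M^c,
\end{equation*}
so it suffices to prove $S+M^c\leq C\,G\,\om^{-1/2}$.

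The heart of the argument is a lower bound on $\sum_n F_n(f_n)$. Within each mode the variational characterizations \eqref{eq1:lambda1n} and \eqref{eq1:lambda2n} give $F_n(f_n)\geq \lambda_{1,n}p_n + \lambda_{2,n}(m_n-p_n)$, hence
\begin{equation*}
\sum_n F_n(f_n) \geq \sum_n \lambda_{1,n} m_n + \sum_n (\lambda_{2,n}-\lambda_{1,n})(m_n-p_n).
\end{equation*}
For the first sum I would use that $n^*$ minimizes $\lambda_{1,n}$ globally (Corollary \ref{cor:lambdan/n^*} inside $\NN_{1/2}$, Lemma \ref{theo:lemme1} outside), so $\lambda_{1,n}\geq\lambda_{1,n^*}$ for $n\in\NN_{1/2}$, whereas $\lambda_{1,n}>c\,\om$ with $c>\sqrt6$ for $n\notin\NN_{1/2}$; since $\lambda_{1,n^*}=(2D_\Om+4)^{1/2}\om+O(1)$ and $(2D_\Om+4)^{1/2}\leq\sqrt6<c$, this yields $\sum_n\lambda_{1,n}m_n\geq\lambda_{1,n^*}+C\,\om\,M^c$ for $\om$ large. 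For the gap sum I would keep only the nonnegative terms with $n\in\NN_{1/2}$ and invoke Theorem \ref{theo:1D1} and Proposition \ref{theo:1D2}, which give $\lambda_{2,n}-\lambda_{1,n}\geq\sqrt{V_n''(R_n)/2}-O(1)\geq C\,\om$ there; this bounds the gap sum below by $C\,\om\,S$.

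Putting these together and discarding the nonnegative quartic term $G\int_{\R^2}|u_\om|^4$ in \eqref{eq0:decouple}, I would arrive at
\begin{equation*}
\lambda_{1,n^*}+C\,\om\,(S+M^c)\leq\sum_n F_n(f_n)\leq F_\om(u_\om)\leq(2D_\Om+4)^{1/2}\om+C\,G\,\om^{1/2}+O(1),
\end{equation*}
the last inequality being the trial upper bound \eqref{eq0:energietest'}. Absorbing $O(1)$ into $C\,G\,\om^{1/2}$ (legitimate since $G\geq g$ and $\om\to\infty$) and cancelling the leading $(2D_\Om+4)^{1/2}\om$ terms leaves $C\,\om\,(S+M^c)\leq C\,G\,\om^{1/2}$, i.e. $S+M^c\leq C\,G\,\om^{-1/2}$; taking square roots proves \eqref{eq2:confine1}, and $G^{1/2}\om^{-1/4}=(G\om^{-1/2})^{1/2}=o(1)$ exactly because $\om\gg G^2$ in \eqref{eq0:scaling}. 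I expect the only delicate point to be the bookkeeping of leading orders in the last display: one must know, from the sharp expansion of Theorem \ref{theo:1D1} and the matching upper bound \eqref{eq0:energietest'}, that the two $(2D_\Om+4)^{1/2}\om$ coefficients agree \emph{exactly}, so that the $O(\om)$ contributions cancel and only the subleading $G\om^{1/2}$ governs the remainder. A secondary subtlety is checking that $c-(2D_\Om+4)^{1/2}$ stays bounded away from $0$ uniformly as $D_\Om\to D\in(0,1]$, which holds because $c>\sqrt6$ is fixed while $(2D_\Om+4)^{1/2}\leq\sqrt6$.
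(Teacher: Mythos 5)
Your argument is correct and follows essentially the same route as the paper: decompose the energy via (\ref{eq0:decouple}), bound each $F_n(f_n)$ below by $\lambda_{1,n}p_n+\lambda_{2,n}(m_n-p_n)$, exploit the spectral gap of Proposition \ref{theo:1D2} inside $\NN_{1/2}$ and Lemma \ref{theo:lemme1} outside, and close the estimate against a trial-function upper bound. The only (harmless) difference is that the paper tests with $g_{1,n^*}e^{in^*\theta}$, whose energy is exactly $\lambda_{1,n^*}+2\pi G\int_{\R^+} g_{1,n^*}^4\,rdr$, so the $O(\om)$ terms cancel identically rather than only up to the $O(1)$ discrepancy you must track through Theorem \ref{theo:1D1}.
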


\begin{proof}
We have to prove
\begin{equation}\label{eq2:confine1modes3/2}
\sum _{n\in \NN_{1/2}^c} \int_{\R^+} |f_n|^2 rdr \leq C G\om^{-1/2} 
\end{equation}
and
\begin{equation}\label{eq2:confine1modesautres}
\sum _{n\in \NN_{1/2}} \int_{\R^+} |f_n- \langle f_n,g_{1,n} \rangle g_{1,n}|^2 rdr \leq C G\om^{-1/2}. 
\end{equation}
For $n\in \NN_{1/2}$ we can write $f_n$ as 
\begin{equation}\label{fndecompose}
f_n =  \langle f_n,g_{1,n} \rangle g_{1,n} + \left( \int_{\R^+} |f_n- \langle f_n,g_{1,n} \rangle g_{1,n}|^2 rdr \right)^{1/2}  g^{\perp}_n
\end{equation}
where 
\begin{equation*}
\int_{\R^+} g_{1,n} (r) g^{\perp}_n (r) rdr =0 
\end{equation*}
and 
\begin{equation*}
 2\pi \int_{\R^+} |g^{\perp}_n (r)| ^2 rdr = 1,
\end{equation*}
thus we have
\begin{equation}\label{decomposeFn}
F_n(f_n) \geq  \lambda_{1,n} \left|\langle f_n,g_{1,n} \rangle\right|^2  + \lambda_{2,n}\int_{\R^+}\left| f_n -\langle f_n,g_{1,n} \rangle g_{1,n} \right|^2rdr
\end{equation}
for $n\in \NN_{1/2}$. Using (\ref{eq0:decouple}) and (\ref{decomposeFn}), ignoring the interaction energy, we bound $F_{\om}(u_{\om})$ from below in the following way
\begin{multline*}
F_{\om}(u_{\om})\geq \sum_{n\in \NN_{1/2}} \lambda_{1,n} \left|\langle f_n,g_{1,n} \rangle\right|^2  +\sum_{n\in \NN_{1/2}} \lambda_{2,n}\int_{\R^+}\left| f_n -\langle f_n,g_{1,n} \rangle g_{1,n} \right|^2rdr 
\\+ \sum_{n\in \NN_{1/2}^c} \lambda_{1,n} 2\pi \int_{\R^+} \left|f_n\right|^2 rdr. 
\end{multline*} 
But the energy of $g_{1,n^*}e^{in^* \theta}$ is an upper bound for $F_{\om}(u_{\om})$ and $\lambda_{1,n} \geq \lambda_{1,n^*}$ for any $n$. Combined with the results of Proposition \ref{theo:1D2} and (\ref{eq0:premiersmodesborneinf}) we get 
\begin{eqnarray}\label{eq2:preuveconfine1}
\lambda_{1,n^*}+ 2\pi G \int_{\R^+} g_{1,n^*} ^4(r)rdr &\geq& \sum_{n\in \NN_{1/2}} \left( \left|\langle f_n,g_{1,n} \rangle \right|^2 \lambda_{1,n^*} \right.\nonumber 
\\ &+&\left.  \left(\lambda_{1,n^*} +  \sqrt{\frac{V''_n(R_n)}{2}} \right)  \int_{\R^+}\left| f_n -\langle f_n,g_{1,n} \rangle g_{1,n} \right|^2rdr \right) \nonumber 
\\&+&\sum_{n\in \NN_{1/2}^c} c\: \om 2\pi \int_{\R^+} |f_n|^2rdr
\end{eqnarray}
with $c>\sqrt{6}$. 
On the other hand, we know from the $L^2$ normalization of $u_{\om}$ that
\begin{eqnarray}\label{eq2:reconstruire1}
1=\sum_{n\in\Z} 2\pi \int |f_n|^2 rdr &=&  \sum_{n\in \NN_{1/2}} \left|\langle f_n,g_{1,n} \rangle\right|^2\nonumber \\ &+& \sum_{n\in\NN_{1/2}} \int \left| f_n -\langle f_n,g_{1,n} \rangle g_{1,n} \right|^2rdr \nonumber\\ &+&\sum_{n\in \NN_{1/2}^c} 2\pi\int_{\R^+} |f_n|^2 rdr.
\end{eqnarray}
Combining (\ref{eq2:preuveconfine1}) and (\ref{eq2:reconstruire1}), using that $\lambda_{1,n*}-\sqrt{6}\om < 0$ for $\om$ large enough we get 
\begin{eqnarray*}
2\pi G \int_{\R^+} g_{1,n^*} ^4(r)rdr \geq \sum_{n\in \NN_{1/2}}  \int_{\R^+}\left| f_n -\langle f_n,g_{1,n} \rangle g_{1,n} \right|^2rdr \sqrt{\frac{V''_n(R_n)}{2}} +\sum_{n\in \NN_{1/2}^c} C \: \om 2\pi \int_{\R^+} |f_n|^2rdr
\end{eqnarray*}
and this implies the desired result because $\sqrt{\frac{V''_n(R_n)}{2}} > C \om$ for $n\in \NN_{1/2}$ and 
\[
2\pi \int_{\R^+} g_{1,n^*} ^4(r)rdr \leq C \om^{1/2} 
\]
which is a consequence of (\ref{eq0:DLfonction}). 
\end{proof}

We now prove a simple lemma which states that the total interaction energy of $u_{\om}$ is an upper bound to the pairwise interactions of the modes $f_n$. We state the lemma for a general $\Phi \in L^4 (\R ^2)$:

\begin{lemma}[Generic lower bound on $\int _{\R^2} |\Phi|^4$]\label{theo:interactions}\mbox{}\\
Let $\Phi$ be any $L^4 (\R ^2)$ function with Fourier decomposition
\[
 \Phi (r,\theta)= \sum_{n\in \Z} \phi_n (r)e^{in\theta}.
\]
We have
 \begin{equation}\label{eq2:interactions}
\int _{\R^2} |\Phi|^4 \geq 2\pi \sum_{p,q} \int_{\R^+} |\phi_p|^2|\phi_q|^2 rdr.
\end{equation}

\end{lemma}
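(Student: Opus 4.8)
The plan is to work fibre by fibre in the radial variable, reducing the whole estimate to Parseval's identity on the circle. First I would pass to polar coordinates and write
\[
\int_{\R^2}|\Phi|^4 = \int_{\R^+}\left(\int_0^{2\pi}|\Phi(r,\theta)|^4\,d\theta\right)r\,dr,
\]
so that it suffices to bound the inner angular integral for (almost) every fixed $r$ and then integrate in $r$.

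For fixed $r$ I would regard $g_r(\theta):=|\Phi(r,\theta)|^2$ as a nonnegative function on the circle. Since $\Phi\in L^4(\R^2)$, Fubini guarantees that for a.e. $r$ the slice $\Phi(r,\cdot)$ lies in $L^4$ of the circle, hence $g_r\in L^2$ of the circle and Parseval applies. From the Fourier expansion $\Phi(r,\theta)=\sum_{n}\phi_n(r)e^{in\theta}$ one computes
\[
g_r(\theta)=\sum_{p,q}\phi_p(r)\overline{\phi_q(r)}\,e^{i(p-q)\theta}=\sum_{m\in\Z}c_m(r)\,e^{im\theta}, \qquad c_m(r)=\sum_{p-q=m}\phi_p(r)\overline{\phi_q(r)},
\]
so that the zeroth Fourier coefficient is the real nonnegative quantity $c_0(r)=\sum_p|\phi_p(r)|^2$.

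The \emph{key step} is then Parseval combined with the observation that one may discard every angular mode except $m=0$: since each $|c_m(r)|^2\geq 0$,
\[
\int_0^{2\pi}|g_r(\theta)|^2\,d\theta = 2\pi\sum_{m\in\Z}|c_m(r)|^2 \geq 2\pi\,|c_0(r)|^2 = 2\pi\Bigl(\sum_p|\phi_p(r)|^2\Bigr)^2 = 2\pi\sum_{p,q}|\phi_p(r)|^2|\phi_q(r)|^2.
\]
Multiplying by $r$, integrating over $\R^+$, and exchanging the (nonnegative) double sum with the integral by Tonelli's theorem then yields exactly the claimed inequality.

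I do not expect any genuine obstacle: the heart of the matter is simply the retention of the nonnegative off-diagonal Fourier mass $\sum_{m\neq 0}|c_m(r)|^2$, whose discarding produces the lower bound. The only points deserving a word of care are the measurability and slicewise integrability needed to apply Parseval for a.e.\ $r$ (both furnished by $\Phi\in L^4(\R^2)$ together with Fubini) and the interchange of summation and integration at the end, which is legitimate precisely because all the terms $|\phi_p|^2|\phi_q|^2$ are nonnegative.
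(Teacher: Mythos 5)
Your argument is correct and is essentially the paper's own proof in different clothing: the quantity $\sum_p \phi_p\overline{\phi_{n+p}}$ that the paper isolates after integrating out $\theta$ is exactly your Fourier coefficient $c_{-n}(r)$ of $|\Phi|^2$, and in both cases the inequality comes from discarding all angular modes of $|\Phi|^2$ except the zeroth, whose coefficient is the nonnegative quantity $\sum_p|\phi_p(r)|^2$. Your fibre-wise Parseval presentation is a clean and equivalent way to organize the same computation.
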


\begin{proof}
We expand $|\Phi|^4$:
\begin{equation}
\int _{\R^2} |\Phi|^4 = \sum_{n,p,q,r}\int_{\R^+} \int_0 ^{2\pi}\phi_n \phi_p \overline{\phi_q}\overline{\phi_r}e^{i\left(n+p-q-r\right)\theta}d\theta rdr 
\end{equation}
and the only integrals with respect to $\theta$ that are not zero are those for which $n+p-q-r=0$ so
\begin{eqnarray*}
\int _{\R^2} |\Phi|^4 &=& 2\pi \sum_{n,p,q} \int_{\R^+} \phi_{n+q}\phi_p \overline{\phi_q}\overline{\phi_{n+p}}rdr\\
&=& 2\pi \sum_n \int_{\R^+} \left|\sum_p \phi_p \overline{\phi_{n+p}}\right|^2rdr\\
&\geq& 2\pi \int_{\R^+} \left|\sum_p |\phi_p|^2 \right|^2 rdr\\
&=&2\pi \sum_{p,q} \int_{\R^+} |\phi_p|^2 |\phi_q|^2 rdr
\end{eqnarray*}

\end{proof}

We are now able to prove the lower bound on $\int_{\R^2} \vert u_{\om} \vert^4$, combining the results of Proposition \ref{theo:confine1} and Lemma \ref{theo:interactions}. A very important fact in the computation is that for $n\in \NN_{1/2}$, $f_n \propto g_{1,n}$, and that the appropriate blow-up of $g_{1,n}$ converges to a gaussian (equation (\ref{eq0:DLfonction})).
 
\begin{proposition}[Lower bound on the interaction energy]\label{theo:borneinfint}\mbox{}\\
Let $u_{\om}$ be a solution to (\ref{eq0:Energiemin}) with Fourier expansion
\[
 u_{\om}=\sum_{n\in \Z} f_n (r)e^{in\theta}.
\]
Set
\[
\tilde{u}_{\om}(r,\theta):= \sum_{n\in \NN_{1/2}} \left< f_n,g_{1,n} \right> g_{1,n}(r)e^{in\theta}.
\]
where $g_{1,n}$ is defined by equation (\ref{eq0:lambda1n}).
We have
\begin{multline}\label{eq2:borneinfint}
\int _{\R^2} |u_{\om}|^4 \geq 2\pi \int_{\R^+} |g_{1,n^*}|^4(r)rdr \\-C \om ^{-1/2} \sum_{n\in \NN_{1/2}}|n-n^*| ^2  \left|\langle f_n,g_{1,n} \rangle\right|^2 \\ +O(\om^{1/2}\Vert u_{\om} -\tilde{u}_{\om}\Vert_{L^2})+O(1).
 \end{multline}
\end{proposition}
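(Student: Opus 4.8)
The plan is to feed the Fourier decomposition of $u_{\om}$ into the generic bound of Lemma~\ref{theo:interactions} and then exploit that, for $n\in\NN_{1/2}$, the mode $f_n$ is essentially a multiple of the almost-Gaussian profile $g_{1,n}$. First I would apply Lemma~\ref{theo:interactions} and discard the (nonnegative) contributions of the modes lying outside $\NN_{1/2}$, keeping
\[
\int_{\R^2}|u_{\om}|^4\;\ge\;2\pi\sum_{p,q\in\NN_{1/2}}\int_{\R^+}|f_p|^2|f_q|^2\,rdr .
\]
For $n\in\NN_{1/2}$ I write $f_n=b_n g_{1,n}+r_n$ for the $L^2(\R^+,rdr)$-orthogonal splitting of $f_n$ along $g_{1,n}$, so $b_n$ is proportional to $\langle f_n,g_{1,n}\rangle$; by Proposition~\ref{theo:confine1} and the mass constraint the projected masses obey $\sum_{n\in\NN_{1/2}}|b_n|^2\,\|g_{1,n}\|_{L^2(rdr)}^2=1-O(\|u_{\om}-\tilde{u}_{\om}\|_{L^2}^2)$, while $\sum_{n}\|r_n\|_{L^2}^2$ and the mass outside $\NN_{1/2}$ are both $O(\|u_{\om}-\tilde{u}_{\om}\|_{L^2}^2)$. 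Expanding $|f_p|^2=|b_p|^2 g_{1,p}^2+e_p$, with $e_p$ gathering every term carrying a factor $r_p$, splits the double sum into a principal part $2\pi\sum_{p,q}|b_p|^2|b_q|^2\int g_{1,p}^2 g_{1,q}^2\,rdr$ and cross/remainder terms.

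I would dispose of the cross and remainder terms first. Each contains at least one factor $r_p$; bounding the $g_{1,n}$ factors in $L^\infty$ and $L^4(\R^+,rdr)$ by $C\om^{1/4}$ and $C\om^{1/8}$ respectively via Corollary~\ref{theo:Lpg1n}, then using Cauchy--Schwarz (pairing each $b_q$ against $\|r_q\|_{L^2}$) together with $\sum_p|b_p|^2=O(1)$ and $\sum_p\|r_p\|_{L^2}^2\le C\|u_{\om}-\tilde{u}_{\om}\|_{L^2}^2$, every such contribution is seen to be $O(\om^{1/2}\|u_{\om}-\tilde{u}_{\om}\|_{L^2})+O(\om^{1/2}\|u_{\om}-\tilde{u}_{\om}\|_{L^2}^2)$. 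Since $\om\gg G^2$ makes $\|u_{\om}-\tilde{u}_{\om}\|_{L^2}=o(1)$ in the regime \eqref{eq0:scaling}, the quadratic piece is absorbed into the linear one, so all these terms fit in the advertised $O(\om^{1/2}\|u_{\om}-\tilde{u}_{\om}\|_{L^2})$ remainder.

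The heart of the matter is the principal part, and the key is the overlap estimate
\[
\int_{\R^+}g_{1,p}^2 g_{1,q}^2\,rdr\;\ge\;\Big(1-C\om^{-1}|p-q|^2\Big)\int_{\R^+}g_{1,n^*}^4\,rdr\;-\;(\text{absorbable corrections}),
\]
valid uniformly for $p,q\in\NN_{1/2}$. To obtain it I would blow up around $R_p$ using \eqref{eq1:blowup}, replace $\xi_{1,n}$ by the Gaussian $\xi_1$ with the help of \eqref{eq0:DLfonction}, and extend the integrals to all of $\R$ as permitted by the exponential decay \eqref{eq1:decexpo} and Remark~\ref{Rq1}. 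The product of the two shifted Gaussians integrates explicitly to a factor $\exp\!\big(-c(R_p-R_q)^2/h_n^2\big)$; since $(R_p-R_q)^2/h_n^2\propto\om^{-1}|p-q|^2$ by \eqref{Rp-Rq} and \eqref{eq0:h2}, the elementary inequality $e^{-x}\ge 1-x$ yields the stated linear correction with the correct (lower-bound) sign. The discrepancies between the normalisations $c_{1,p}^2c_{1,q}^2$ and $c_{1,n^*}^4$, between the centres $R_p,R_q,R_{n^*}$, and between the profiles $\xi_{1,p},\xi_{1,q},\xi_1$ are controlled by \eqref{eq1:cin}, \eqref{Rp-Rq}, \eqref{hp-hq} and the continuity estimate \eqref{DLxi2}, and produce only lower-order corrections.

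Finally I would sum the overlap estimate against $|b_p|^2|b_q|^2$. The leading constant becomes $2\pi\big(\sum_p|b_p|^2\big)^2\int_{\R^+}g_{1,n^*}^4\,rdr=2\pi\int_{\R^+}g_{1,n^*}^4\,rdr$ up to a term $O(\om^{1/2}\|u_{\om}-\tilde{u}_{\om}\|_{L^2}^2)$ (again absorbed, using $\int g_{1,n^*}^4\,rdr\le C\om^{1/2}$), while the correction is handled by $|p-q|^2\le 2(|p-n^*|^2+|q-n^*|^2)$ and $\sum_q|b_q|^2=O(1)$, which collapse the double sum into the single sum $C\om^{-1/2}\sum_{n}|n-n^*|^2|\langle f_n,g_{1,n}\rangle|^2$. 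The remaining lower-order corrections of the previous paragraph are dominated, via Young's inequality, by $C\om^{-1/2}\sum_n|n-n^*|^2|\langle f_n,g_{1,n}\rangle|^2+O(1)$; collecting everything gives \eqref{eq2:borneinfint}. I expect the main obstacle to be proving the overlap estimate with a clean lower-bound sign uniformly in $p,q\in\NN_{1/2}$: the dimensionless centre shift $(R_p-R_q)/h_n$ is only $O(1)$ (not small) across $\NN_{1/2}$, so one cannot simply linearise the Gaussian overlap and must instead rely on the global convexity bound $e^{-x}\ge1-x$ while keeping all prefactor and profile corrections under control.
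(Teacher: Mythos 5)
Your argument reproduces the paper's proof essentially step by step: the same use of Lemma \ref{theo:interactions} followed by restriction to $\NN_{1/2}$ and to the annulus $1/2\leq r\leq 3/2$, the same replacement of $f_n$ by its projection onto $g_{1,n}$ with the error measured by $\Vert u_{\om}-\tilde{u}_{\om}\Vert_{L^2}$, the same explicit Gaussian overlap computed after blow-up and bounded below via $e^{-x}\geq 1-x$ (your remark that the dimensionless centre shift is only $O(1)$ across $\NN_{1/2}$, so that a Taylor expansion would not do, is precisely the point), and the same collapse of the double sum through $|p-q|^2\leq 2(|p-n^*|^2+|q-n^*|^2)$. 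The one ingredient you leave implicit is that the cross terms carrying two or more remainder factors from distinct modes (e.g.\ $\int g_{1,p}|r_p|\,|r_q|^2\,rdr$) cannot be closed with the $L^\infty$ and $L^4$ bounds on $g_{1,n}$ from Corollary \ref{theo:Lpg1n} alone: one also needs $L^4$ or $L^6$ control of the modes themselves, obtained from the $H^1$ bounds $\sum_n F_n(f_n)\leq F_{\om}(u_{\om})\leq C\om$ together with interpolation and Sobolev imbeddings, which is exactly how the paper proceeds via (\ref{borneH1zeta}) and (\ref{borneH1xi}).
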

 
\begin{proof}
We begin by using Lemma \ref{theo:interactions}:
\[
 \int_{\R^2} |u_{\om}|^4 \geq 2\pi \sum_{p,q} \int_{\R^+} |f_p|^2 |f_q|^2rdr.
\]
A consequence of Proposition \ref{theo:confine1} is that it is enough to consider only the pairwise interactions of the modes labeled by $n\in \NN_{1/2}$ (see equation (\ref{eq2:reconstruire}) below), so that we use 
\[
\int_{\R^2} |u_{\om}|^4 \geq 2\pi \sum_{p,q\in \NN_{1/2}} \int_{\R^+} |f_p|^2 |f_q|^2rdr
\]
and evaluate the right-hand side.\\
We note that, using (\ref{decexpou}) 
\begin{eqnarray}\label{reducdomaine}
\sum_{p,q\in \NN_{1/2}} \int_{r\geq 3/2} |f_p|^2 |f_q|^2rdr &\leq& C \om ^ 4 \sum_{p,q\in \NN_{1/2}} \int_{r\geq 3/2} e^{- C \om ^2 |r-1|^2} rdr \nonumber
\\ &\leq & C \sum_{p,q\in \NN_{1/2}} e^{-C\om} \int_{r\geq 3/2} e^{- C \om ^2 |r-1|^2} rdr \nonumber
\\ &\leq & C (\sharp (\NN_{1/2}) ) ^2 e^{-C\om} \leq e^{-C \om}.
\end{eqnarray}
We have used that $\sharp (\NN_{1/2}) \leq C \om ^{1/2}$. A similar estimate holds true for $\sum_{p,q\in \NN_{1/2}} \int_{r\leq 1/2} |f_p|^2 |f_q|^2 rdr$ so that it is sufficient to consider integration domains where $ 1/2 \leq r\leq 3/2$, which we implicitly do in the rest of the proof.\\
It is convenient to introduce the following quantities:
\begin{equation}\label{eq0:alphan}
\alpha_n = R_n - R_{n^*} 
\end{equation}
and the rescaled functions $\zeta_{p}$ defined for $p\in \NN_{1/2}$ by
\begin{equation}\label{eq2:blowupfn}
 f_{p}(r)=d_{p}\zeta_{p}\left( h_p^{-1} (r-R_p)\right)
\end{equation}
where $h_p$ is defined by equation (\ref{eq0:h}),
\begin{equation}\label{eq2:dp1}
d_{p}^2 = \frac{\int |f_p|^2 rdr}{ h_p \int \left|\zeta_{p}(x)\right|^2 \left(R_p + h_px\right)dx }
\end{equation}
and $\Vert \zeta_{p} \Vert_{L^2}=1$. In this proof we denote $\beta_n$ the $n$-th term of a generic sequence satisfying   
\begin{equation}\label{betan}
\sum_{n\in \Z} \beta_n ^2 \leq C
\end{equation}
where $C$ is indenpendent of $\om$ and $D_{\Om}$. The actual value of the quantity $\beta_n$ may thus change from line to line.\\ 
Using (\ref{eq2:confine1}), we have
\begin{eqnarray*}
 \sum_{p \in \NN_{1/2}} \int \left| d_p \zeta_p - c_{1,p} \left\langle f_p,g_{1,p}\right\rangle \xi_{1,p} \right| ^2 (R_p + h_p x)dx  &=& \sum_{p \in \NN_{1/2}} h_p ^{-1} \int_{\R^+} \left| f_p -  \left\langle f_p,g_{1,p}\right\rangle g_{1,p} \right| ^2 rdr 
\\&\leq& C \om^{1/2} \Vert u_{\om} - \tilde{u}_{\om}\Vert_{L^2} ^2
\end{eqnarray*}
thus
\begin{equation}\label{eq2:zeta/xi1}
\forall p\in \NN_{1/2} \quad d_p \zeta_p = c_{1,p} \left\langle f_p,g_{1,p}\right\rangle \xi_{1,p} + O_{L^2((R_p+h_p x )dx )}(\beta_p \om^{1/4} \Vert u_{\om}-\tilde{u}_{\om}\Vert_{L^2})
\end{equation}
where $\sum_p \beta_p ^2$ is bounded uniformly with respect to $\om$.\\
On the other hand, using the bounds $ \sum_n F_n (f_n) \leq F_{\om}(u_{\om})\leq C \om$ we have
\begin{equation}\label{borneH1zeta}
 \Vert d_p \zeta_p \Vert _{H^1((R_p+h_px )dx)} \leq \beta_p \om^{1/4}.
\end{equation}
Also
\begin{equation}\label{borneH1xi}
 \Vert c_{1,p} \xi_{1,p} \Vert _{H^1((R_p+h_px )dx)} \leq \om^{1/4}
\end{equation}
follows from (\ref{eq0:DLfonction}).\\
A change of variables $r=R_{n^*} +h_{n^*}x$ yields ($\alpha_n$ is defined by (\ref{eq0:alphan}))
\begin{equation}\label{eq2:calculsint1}
\sum_{p,q \in \NN_{1/2}}2\pi \int |f_p|^2 |f_q|^2 rdr=\sum_{p,q}  h_{n^*}d_p ^2 d_q ^2 \left( \int\left|\zeta_{p}\right|^2\left(\frac{h_{n^*}}{h_p}x+\frac{\alpha_p}{h_p}\right)\left|\zeta_{q}\right|^2\left(\frac{h_{n^*}}{h_q}x+\frac{\alpha_q}{h_ q} \right) \left(R_{n^*}+h_{n^*}x  \right)dx \right).
\end{equation}
Then we have, using (\ref{eq2:zeta/xi1}), (\ref{borneH1zeta}), (\ref{borneH1xi}) and Sobolev imbeddings 
\begin{multline}\label{eq2:calculsint2}
\sum_{p,q} h_{n^*}d_p ^2 d_q ^2 \int\left|\zeta_{p}\right|^2\left(\frac{h_{n^*}}{h_p}x+\frac{\alpha_p}{h_p}\right)\left|\zeta_{q}\right|^2\left(\frac{h_{n^*}}{h_q}x+\frac{\alpha_q}{h_ q} \right) \left(R_{n^*}+h_{n^*}x  \right)dx= 
\\ \sum_{p,q} \left|\left\langle f_p,g_{1,p}\right\rangle \right|^2 \left|\left\langle f_q,g_{1,q}\right\rangle\right|^2
h_{n^*}c_{1,p} ^2 c_{1,q} ^2 \int \xi_{1,p}^2\left(\frac{h_{n^*}}{h_p}x+\frac{\alpha_p}{h_p}\right)\xi_{1,q}^2\left(\frac{h_{n^*}}{h_q}x+\frac{\alpha_q}{h_ q} \right) \left(R_{n^*}+h_{n^*} x  \right)dx 
\\ + O(\om^{1/2}\Vert u_{\om}-\tilde{u}_{\om}\Vert_{L^2}).
\end{multline}
Remark that it is clear from the definition of $c_{1,n}$ (\ref{eq1:cin}) and (\ref{eq0:DLfonction}) that
\begin{equation}\label{estimc1n}
c_{1,n} = \frac{1}{2 \pi h_n R_n} + O(\om^{-1/2})
\end{equation}
because $\xi_1$ is an even function.
Using (\ref{eq0:DLfonction}) and (\ref{estimc1n}) we obtain
\begin{multline}\label{eq2:calculsint22}
\sum_{p,q} h_{n^*}d_p ^2 d_q ^2 \int\left|\zeta_{p}\right|^2\left(\frac{h_{n^*}}{h_p}x+\frac{\alpha_p}{h_p}\right)\left|\zeta_{q}\right|^2\left(\frac{h_{n^*}}{h_q}x+\frac{\alpha_q}{h_ q} \right) \left(R_{n^*}+h_{n^*}x  \right)dx= 
\\ \sum_{p,q} \left|\left\langle f_p,g_{1,p}\right\rangle \right|^2  \left|\left\langle f_q,g_{1,q}\right\rangle\right|^2
\frac{h_{n^*}}{h_p R_p h_q R_q} \int \xi_{1}^2\left(\frac{h_{n^*}}{h_p}x+\frac{\alpha_p}{h_p}\right)\xi_{1}^2\left(\frac{h_{n^*}}{h_q}x+\frac{\alpha_q}{h_ q} \right) \left(R_{n^*}+h_{n^*} x  \right)dx 
\\ + O(\om^{1/2}\Vert u_{\om}-\tilde{u}_{\om}\Vert_{L^2}) + O(1).
\end{multline}
But for any $p\in \NN_{1/2}$ we have from (\ref{hp-hq}) that $\frac{h_{n^*}}{h_p}=1+O(\om^{-1}|n^* - p|) = 1 + O(\om^{-1/2})$ because $n^*= \om +O(1)$, so 
\begin{multline}\label{eq2:calculsint3}
\int\xi_{1}^2\left(\frac{h_{n^*}}{h_p}x+\frac{\alpha_p}{h_p}\right)\xi_{1}^2\left(\frac{h_{n^*}}{h_q}x+\frac{\alpha_q}{h_ q} \right) \left(R_{n^*}+h_n ^* x  \right)dx \\ = \int \xi_{1}^2\left(x+\frac{\alpha_p}{h_p} \right)\xi_{1}^2\left( x+\frac{\alpha_q}{h_ q}\right) \left(R_{n^*}+h_{n^*}x  \right)dx +O (\om^{-1/2})
\end{multline}
and using that $\xi_1$ is a Gaussian we have
\begin{multline}\label{eq2:calculsint4}
\int \xi_{1}^2\left(x+\frac{\alpha_p}{h_p} \right)\xi_{1}^2\left( x+\frac{\alpha_q}{h_ q}\right) \left(R_{n^*}+h_{n^*}x  \right)dx=e^{-(\frac{\alpha_p}{h_p}-\frac{\alpha_q}{h_q})^{2}} \int \xi_1 ^4 \left(x+\frac{\alpha_p}{2h_ p }+\frac{\alpha_q}{2h_ q }\right)\left(R_{n^*}+h_{n^*}x  \right)dx
\\ \geq  \left( 1-C\om\left( \alpha_p-\alpha_q\right)^2\right)\int \xi_1 ^4(x) \left(R_{n^*}+h_{n^*}\left(x -\frac{\alpha_p}{2h_ p }-\frac{\alpha_q}{2h_ q }\right) \right)dx
\\ = \left( 1-C\om\left( \alpha_p-\alpha_q\right)^2\right)\int \xi_1 ^4(x) \left(R_{n^*}+h_{n^*}x  \right)dx +O(\om^{-1/2})
\end{multline}
where we have used that $h_n=O(\om^{-1/2})$ and $|\alpha_n| =|R_n -R_{n^*}|\propto \om^{-1}|n-n^*|=O(\om^{-1/2})$ (see (\ref{Rp-Rq})).\\
On the other hand we have
\begin{equation}\label{eq2:calculsint50}
\forall p,q \in \NN_{1/2}, \quad \frac{1}{h_p h_q} = \frac{1}{h_{n ^*} ^2} + O(\om ^{1/2}) 
\end{equation}
using that $\frac{h_{n^*}}{h_p}=1+O(\om^{-1}|n^* - p|) = 1 + O(\om^{-1/2})$ and
\begin{equation}\label{eq2:calculsint5}
\forall p,q \in \NN_{1/2}, \quad \frac{1}{R_p R_q} = \frac{1}{R_{n^*}^2}+O(\om^{-1/2})
\end{equation}
using that $|R_p - R_{n^*}|\propto \om^{-1}|p-n^*| = O(\om ^{-1/2})$.
Gathering equations (\ref{eq2:calculsint1}) to (\ref{eq2:calculsint5}) we have 
\begin{multline}\label{eq2:intpresque}
\sum_{p,q}2\pi \int |f_p|^2 |f_q|^2 rdr\geq \left(\sum_{p,q\in \NN_{1/2}} \left|\left\langle f_p,g_{1,p}\right\rangle \right|^2 \left|\left\langle f_q,g_{1,q}\right\rangle\right|^2 \right) \frac{1}{2\pi h_{n^*}R_{n^*}^2}
\\ \times  \left(\left( 1-C\om \left( \alpha_p-\alpha_q\right)^2)\right)\int \xi_1 ^4(x) \left(R_{n^*}+h_{n^*}x  \right)dx \right) 
\\  + O(\om^{1/2} \Vert u_{\om}-\tilde{u}_{\om}\Vert_{L^2})+O (1) .
\end{multline}
But (\ref{eq0:DLfonction}) gives
\begin{equation}\label{interactiong1n}
2\pi \int_{\R} g_{1,n^*}(r) ^4 rdr=\frac{1}{2\pi h_{n^*} R_{n^*}^2}\int_{\R} \xi_1(x)^4 \left(R_{n^*}+h_{n^*}x.  \right)dx +O(\om^{-1/2})
\end{equation}
We also have, as a consequence of (\ref{eq2:confine1}) and (\ref{eq0:masseu}),
\begin{multline}\label{eq2:reconstruire}
\sum_{p,q\in \NN_{1/2}} \left|\left\langle f_p,g_{1,p}\right\rangle \right|^2  \left|\left\langle f_q,g_{1,q}\right\rangle\right|^2=\left(\sum_{p\in \NN_{1/2}}  \left|\left\langle f_p,g_{1,p}\right\rangle \right|^2 \right)^2
\\ =\left(1-\Vert u_{\om}-\tilde{u}_{\om}\Vert_{L^2}\right)^{2} \geq 1 - C \Vert u_{\om}-\tilde{u}_{\om}\Vert_{L^2}
\end{multline} 
and
\begin{eqnarray}\label{eq2:reconstruire2}
\sum_{p,q\in \NN_{1/2}} \left(\alpha_p -\alpha_q\right)^2   \left|\left\langle f_p,g_{1,p}\right\rangle \right|^2  \left|\left\langle f_q,g_{1,q}\right\rangle\right|^2 &\leq& 2 \sum_{p,q\in \NN_{1/2}} \left(\alpha_p ^2 +\alpha _q  ^2\right) \left|\left\langle f_p,g_{1,p}\right\rangle \right|^2 \left|\left\langle f_q,g_{1,q}\right\rangle\right|^2 \nonumber
\\ &\leq& 4 \left(\sum_{p \in \NN_{1/2}} \alpha_p ^2 \left|\left\langle f_p,g_{1,p}\right\rangle \right|^2 \right)\left(\sum_{q \in \NN_{1/2}} \left|\left\langle f_q,g_{1,q}\right\rangle \right|^2 \right)  \nonumber
\\ &\leq& 4\sum_{p\in \NN_{1/2}} \alpha_p ^2  \left|\left\langle f_p,g_{1,p}\right\rangle\right| ^2
\end{eqnarray}
because $\sum_{q \in \NN_{1/2}} \left|\left\langle f_q,g_{1,q}\right\rangle \right|^2 \leq 1$.
Combining (\ref{eq2:intpresque}), (\ref{interactiong1n}), (\ref{eq2:reconstruire}) and (\ref{eq2:reconstruire2}) we obtain
\begin{equation*}
\int _{\R^2} |u_{\om}|^4 \geq 2\pi \int_{\R^+} |g_{1,n^*}|^4(r)rdr -C \sum_{p\in \NN_{1/2}}  \om ^{3/2} \alpha_p ^2  \left|\left\langle f_p,g_{1,p}\right\rangle \right|^2 +O(\om^{1/2}\Vert u_{\om} -\tilde{u}_{\om}\Vert_{L^2})+O(1).
 \end{equation*}
We conclude the proof by recalling that $|\alpha_n| = |R_n-R_{n^*}|\propto \om^{-1} |n-n^*|$ using (\ref{eq0:rn*}), so that 
\[
\om ^{3/2} \alpha_n ^2 \geq C \om^{-1/2}|n-n^*|^2.                                                                    \]
\end{proof}

Note that the $n$-th term of the second line of (\ref{eq2:borneinfint}) can be interpreted as an upper bound to the gain in interaction energy per unit mass carried by the mode $f_n$. On the other hand Corollary \ref{cor:lambdan/n^*} provides a lower bound to the loss in kinetic and potential energy per unit mass carried by the mode $f_n$. The comparison of this two quantities is the crucial argument of the 

\begin{proof}[Proof of Theorem \ref{theo:densite/energie}]
 
We proceed in three steps

\emph{Step 1. } We first claim that the following improvement of (\ref{eq2:confine1}) holds
\begin{equation}\label{eq3:confine1}
\Vert u_{\om}- \tilde{u}_{\om}\Vert _{L^2} \leq C G \om^{-1/2}. 
\end{equation}
Combining the results of Theorem \ref{theo:1D1} and Proposition \ref{theo:borneinfint} we get  
\begin{eqnarray}\label{eq3:preuveconfine1}
F_{\om}(u_{\om}) &\geq& 2\pi G \int_{\R^+} |g_{1,n^*}|^4(r)rdr \nonumber
\\ &+& \sum_{n\in \NN_{1/2},} \left|\langle f_n,g_{1,n} \rangle\right|^2 \left( \lambda_{1,n}-C G \om^{-1/2}|n-n^*|^2 \right) \nonumber 
\\&+& \sum_{n\in \NN_{1/2}} \lambda_{2,n}\int_{\R^+}\left| f_n -\langle f_n,g_{1,n} \rangle g_{1,n} \right|^2rdr \nonumber 
\\&+& \sum_{n\in \NN_{1/2}^c}\lambda_{1,n} 2\pi \int_{\R^+} |f_n|^2 rdr+O(G\om^{1/2}\Vert u_{\om} -\tilde{u}_{\om}\Vert_{L^2})+O(G).
\end{eqnarray}
We now use $g_{1,n^*} e^{in^* \theta}$ as a test function and proceed as in the proof of Proposition \ref{theo:confine1}. With the result of Corollary \ref{cor:lambdan/n^*} we get
\begin{multline}\label{eq3:preuveconfine2}
C(G \om^{1/2}\Vert u_{\om} -\tilde{u}_{\om}\Vert_{L^2}+G+G\om^{-1/2} )\\ \geq \sum_{n\in \NN_{1/2},n\neq n^*, n^*+1}  \left|\langle f_n,g_{1,n} \rangle\right|^2\left(C |n-n^*|^2 - G  \om ^{-1/2}  |n-n^*|^2 \right)  
\\+ \sum_{n\in \NN_{1/2}} C \om \int_{\R^+}\left| f_n -\langle f_n,g_{1,n} \rangle g_{1,n} \right|^2rdr
\\+\sum_{n\in \NN_{1/2}^c}C \om  \int_{\R^+} |f_n|^2 rdr.
\end{multline}
Now, for $n\in \NN_{1/2}$, one has 
\[
C |n-n^*|^2 - G  \om ^{-1/2} |n-n^*|^2 \geq 0
\]
because $G\ll \om^{1/2}$, so equation (\ref{eq3:preuveconfine2}) yields
\begin{eqnarray}
C\left(G \om^{1/2}\Vert u_{\om} -\tilde{u}_{\om}\Vert_{L^2} +G + G\om^{-1/2} \right)&\geq& C\om \sum_{n\in \NN_{1/2}}\int_{\R^+}\left| f_n -\langle f_n,g_{1,n} \rangle g_{1,n} \right|^2rdr \nonumber
\\&+&C\om \sum_{n\in \NN_{1/2}^c}  \int_{\R^+} |f_n|^2rdr \nonumber
\\ &\geq& C\om \Vert u_{\om} -\tilde{u}_{\om}\Vert_{L^2}^2 \nonumber
\end{eqnarray}
which implies (\ref{eq3:confine1}). \\
Remark also that we can replace $ \lambda_{2,n}\int_{\R^+}\left| f_n -\langle f_n,g_{1,n} \rangle g_{1,n} \right|^2rdr$ by $\sum_{j\geq 2} \lambda_{j,n} |\left\langle f_n,g_{j,n}\right\rangle|^2$ in (\ref{eq3:preuveconfine1}). Indeed, writing
\begin{equation}\label{projectionfn}
f_n = \sum_{j \geq 1} \left\langle f_n , g_{j,n} \right\rangle g_{j,n}
\end{equation}
one has
\begin{equation}\label{expansionFn}
 F_n (f_n) = \sum_{j\geq 1 } \left|\left\langle f_n , g_{j,n} \right\rangle\right|^2 \lambda_{j,n}.
\end{equation}
This yields 
\begin{equation}\label{souslecoude}
 C(G^2 + G + G\om^{-1/2}) \geq \sum_{n\in \NN_{1/2}} \sum_{j\geq 2} \lambda_{j,n} | \left\langle f_n,g_{j,n}\right\rangle|^2 
\end{equation}
which will be useful in Section 4.\\

\emph{Step 2.} 
We prove (\ref{eq0:resultenergie}). The upper bound is obtained by taking $g_{1,n^*} e^{i n^* \theta}$ as a test function. For the lower bouind we first prove the following essential estimate:
\begin{equation}\label{estimefonda}
 \sum_{n\in \NN_{1/2}}  \Vert f_n \Vert_{L^2 (\R^+,rdr )} ^2 |n-n^*|^2 \leq C G^2 .  
\end{equation} 
We recall that we assume $G^2\geq g^2$ for some constant $g$ so it suffices to prove
\begin{equation}\label{estimefonda'} 
\sum_{n\in \NN_{1/2},n\neq n^*, n^*+1} \Vert f_n \Vert_{L^2 (\R^+,rdr )} ^2 |n-n^*|^2  \leq C G^2.
\end{equation}
Coming back to (\ref{eq3:preuveconfine2}) and using (\ref{eq3:confine1}) we get
\begin{equation}\label{preuveconfine3}
C\left(G^2 + G + G \om^{-1/2} \right) \geq \sum_{n\in \NN_{1/2}, n\neq n^*, n^*+1} \left|\langle f_n,g_{1,n} \rangle\right|^2 \left(|n-n^*|^2 - G  \om ^{-1/2} |n-n^*|^2 \right).    
\end{equation}
But $G\om^{-1/2} \ll 1$, so if $n\neq n^*, n^* +1$ we have, for some constant $C'<C$,  
\[
C |n-n^*|^2 - G  \om ^{-1/2} |n-n^*|^2 \geq C'|n-n^*|^2.
\]
This allows to deduce from (\ref{preuveconfine3}) that 
\begin{equation}\label{estimefonda''} 
\sum_{n\in \NN_{1/2},n\neq n^*, n^*+1} \left|\langle f_n,g_{1,n} \rangle\right|^2 |n-n^*|^2  \leq C G^2.
\end{equation}
On the other hand we already have as a consequence of (\ref{eq3:confine1}) that 
\begin{equation}\label{estimfondatilde}
 \sum_{n\in \NN_{1/2}} |n-n^*|^2  \int_{\R^+}\left|f_n-\left\langle f_n, g_{1,n} \right\rangle\right|^2 rdr \leq C \om \sum_{n\in \NN_{1/2}} \int_{\R^+}\left|f_n-\left\langle f_n, g_{1,n} \right\rangle\right|^2 rdr \leq C G^2
\end{equation}
so (\ref{estimefonda''}) implies (\ref{estimefonda'}) ans thus (\ref{estimefonda}).


Now, according to Proposition \ref{theo:borneinfint} and (\ref{eq3:confine1}) we have 
\begin{equation}\label{lbound1}
F_{\om}(u_{\om})\geq \lambda_{1,n^*} + 2\pi G \int_{\R^+ } g_{1,n^*} ^4 (r)rdr - C G  \om ^{-1/2}  \sum_{n\in \NN_{1/2}}|n-n^*|^2  \Vert f_n \Vert^2 _{L^2(\R^+,rdr)} - C G^2.
\end{equation}
Using (\ref{estimefonda}) and $G\ll \om ^{1/2}$ we get
\[
 F_{\om}(u_{\om})\geq \lambda_{1,n^*} + 2\pi G \int_{\R^+ } g_{1,n^*} ^4 (r)rdr -C G^2 
\]
which was the missing lower bound to prove (\ref{eq0:resultenergie}). Recall that $G\ll \om ^{1/2}$ and $\int_{\R^+} g_{1,n^*} ^4 (r)rdr\propto \om^{1/2}$.\\


\emph{Step 3.} We prove (\ref{eq0:resultdensite}). We have 
\begin{equation}\label{preuvedensite1}
\int_{\R^2} \left(|u_{\om}|^2 -g_{1,n^*}^2\right)^2 = \int_{\R^2} |u_{\om}|^4+g_{1,n^*}^4-2g_{1,n^*}^2|u_{\om}|^2 \leq 2 \int_{\R^2} \left(g_{1,n^*}^4 -g_{1,n^*}^2|u_{\om}|^2\right).
\end{equation}
Indeed
\[
\lambda_{1,n^*} + G \int_{\R^2} |g_{1,n^*}|^4 \geq F_{\om}(u_{\om}) \geq \lambda_{1,n^*} + G \int_{\R^2} |u_{\om}|^4 
\]
so 
\begin{equation}\label{bornesupint}
 \int_{\R^2} |u_{\om}|^4 \leq \int_{\R^2} |g_{1,n^*}|^4.
\end{equation}
Now, using the same calculations as those in the proof of Proposition \ref{theo:borneinfint} and (\ref{eq3:confine1}) we can prove
\begin{multline}\label{varianteborneinf}
\int_{\R^2} g_{1,n^*}^2|u_{\om}|^2 \geq \sum_{n \in \NN_{1/2}} 2\pi \int_{\R^+} |f_n(r)|^2g_{1,n^*}^2(r)rdr \\ 
\geq \int_{\R^2} g_{1,n^*}^4 - C  \sum_{n\in \NN_{1/2}}  \om ^{-1/2} |n-n^*|^2 \Vert f_n \Vert^2 _{L^2(\R^+,rdr)} - C G
\end{multline}
Then plugging (\ref{varianteborneinf}) and (\ref{estimefonda}) into (\ref{preuvedensite1}) we get 
\[
 \int_{\R^2} \left(|u_{\om}|^2 -g_{1,n^*}^2\right)^2 \leq C  \sum_{n\in \NN_{1/2}}  \om ^{-1/2}  |n-n^*|^2 \Vert f_n \Vert^2 _{L^2(\R^+,rdr)} + C G \leq C (G^2 \om^{-1/2}+G) \leq C G 
\]
and the proof is complete.
\end{proof}

\section{The giant vortex state for a fixed coupling constant}

In this section we present the proofs of Theorems \ref{theo:densite/energie2} and \ref{theo:vortex}. They require improvements of the method presented in Sections 2 and 3. In particular we will need an energy expansion with a remainder term going to $0$ as $\om$ goes to infinity. We know from (\ref{eq0:resultenergie}) that 
\[
 F_{\om} (u_{\om}) = \lambda_{1,n^*} + 2\pi G \int_{\R^+} g_{1,n^*} ^4 rdr+O(1)
\]
in the regime where $G$ is a fixed constant. But, for any $n$, by the definition (\ref{eq4:gamman}) of $\gamma_n$ 
\[
 \lambda_{1,n} + 2\pi G \int_{\R^+} g_{1,n} ^4 rdr \geq \gamma_{n}
\]
and one can realize from (\ref{eq0:DLfonction}) and (\ref{eq4:DLfonction}) below that the difference between these two quantities is of order $1$. We thus need to analyze in some details the nonlinear problem defining $\gamma_{n}$, and use the results to provide better upper and lower bounds to $F_{\om} (u_{\om})$. We first present the analysis of the nonlinear one dimensional problems in Subsection 4.1 below.

\subsection{Nonlinear one dimensional problems}

We recall the definition of the one-dimensional energies
\begin{equation}\label{eq4:1Dnonlineaire}
 E_n(f)=2\pi \int _ {\R ^+} \left( |f' (r)|^2 +V_n(r)|f(r)|^2 +G |f(r)|^4\right)rdr
\end{equation}
with
\begin{equation}\label{eq4:gamman}
 \gamma_{n}= E_n(\Psi_{n})= \inf \left\lbrace E_n(f), \: f\in H^1(\R^+,rdr)\cap L^2(\R^+,rV_n(r)dr), 2\pi \int_{\R ^+} f^2(r)rdr =1  \right\rbrace. 
\end{equation}
Note that $\Psi_n$ is uniquely defined up to a multiplicative constant of modulus $1$, that we fix by requiring that 
\begin{equation}\label{Psinpositive}
 \Psi_n \geq 0.
\end{equation}
We also define the rescaled functions $\rho_{n}$ by
\begin{equation}\label{eq4:rescaling}
 \rho _{n}(x)= c_{n}^{-1} \Psi_{n}\left( R_n + h_n x\right)
\end{equation}
where $c_{n}$ is chosen so that $\Vert \rho_{n}\Vert_{L^2}=1$:
\begin{equation}\label{eq4:cn}
c_{n}^2  =\frac{1}{2\pi h_n R_n + 2\pi h_n^2\int_{-\frac{R_n}{h_n}}^{+\infty} x\rho_{n}(x)^2 dx}.
\end{equation}

We have
\begin{theorem}[Asymptotics for the ground states of the one-dimensional nonlinear problems] \label{theo:1D1NL} \mbox{}\\
Let $\gamma_{n}$ be defined by (\ref{eq4:gamman}).\\ 
Suppose that $n\in \NN _{1/2}$ as defined in (\ref{eq0:N3/2}) and let $R_n$  be defined in (\ref{eq0:rn*}). Let $\xi_1$ be the normalized ground state of the harmonic oscillator (\ref{eq0:OH}). \\  
We have as $\om \rightarrow \infty$ and $D_{\Om}\rightarrow D$
\begin{equation}\label{eq4:DLenergie}
\gamma_n = V_n(R_n)+\sqrt{\frac{V''_n (R_n)}{2}}+\frac{G}{2\pi h_n R_n}\int_{\R} \xi_1 (x) ^4 dx + J_n 
\end{equation}
where $J_n = O(1)$.\\
Let  $\rho_{n}$ be defined by equation (\ref{eq4:rescaling}). We have 
\begin{equation}\label{eq4:DLfonction}
\rho_{n}=\xi_1 + h_n \tau_n + h_n^2 \upsilon_n +O_{HO}(\om^{-3/2})\\
\end{equation}
where $h_n$ is defined in (\ref{eq0:h}), $\tau_n$ and $\upsilon_n$ are solutions to linear second order ODEs. They are bounded in harmonic oscillator norm, uniformly with respect to $n$. Also, for any $p,q \in \NN_{1/2}$
\begin{eqnarray}\label{DLzeta2}
\tau _p & =& \tau_q + O_{HO}(\om^{-1}|p-q|)\\
\upsilon _p &=& \upsilon_q + O_{HO}(\om^{-1}|p-q|).\nonumber
\end{eqnarray}
\end{theorem}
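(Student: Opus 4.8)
The plan is to follow the proof of Theorem~\ref{theo:1D1} step by step, treating the quartic term $G|f|^4$ as a perturbation that, after the blow-up (\ref{eq4:rescaling}), enters the rescaled Euler--Lagrange equation only at order $h_n$. First I would establish the upper bound
\[
\gamma_n \leq V_n(R_n)+\sqrt{\frac{V_n''(R_n)}{2}}+\frac{G}{2\pi h_n R_n}\int_{\R}\xi_1^4(x)\,dx +O(\om^{-1/2})
\]
by inserting the mass-normalized trial function $c\,\xi_1(h_n^{-1}(r-R_n))$ into $E_n$, as in Lemma~\ref{theo:bornesup1d}; the only new contribution is $2\pi G\int_{\R^+}|c\,\xi_1(h_n^{-1}(r-R_n))|^4rdr$, which the change of variables $r=R_n+h_nx$ turns into $\frac{G}{2\pi h_n R_n}\int_{\R}\xi_1^4+O(\om^{-1/2})$. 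Next I would prove the analogue of Proposition~\ref{theo:decexpo1d} for $\Psi_n$. Here the nonlinearity helps rather than hinders: since $\Psi_n\geq 0$ solves $-\Delta\Psi_n+(V_n-\mu_n)\Psi_n=-2G\Psi_n^3\leq 0$, the comparison function used there is still a supersolution, so the Gaussian pointwise bounds on $\Psi_n$ and $\Psi_n'$ hold verbatim, and with them the moment estimates of Lemma~\ref{theo:estimemoments} for $\rho_n$.

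With these tools I would reproduce Steps~1 and~2 of the proof of Theorem~\ref{theo:1D1}. After dividing the energy expansion by $\sqrt{V_n''(R_n)/2}=O(\om)$, the quartic contribution is only $O(G\om^{-1/2})=O(\om^{-1/2})$ for fixed $G$, so it does not disturb the variational comparison with the harmonic oscillator; one obtains first $\rho_n=\xi_1+O_{HO}(\om^{-1/4+\ep})$ and then, by the same bootstrap, $\rho_n=\xi_1+O_{HO}(\om^{-1/2+\ep})$. Multiplying the Euler--Lagrange equation by $\Psi_n$ and integrating gives $\mu_n=\gamma_n+2\pi G\int_{\R^+}\Psi_n^4rdr$, which together with the upper bound yields $\mu_n=V_n(R_n)+\sqrt{V_n''(R_n)/2}+O(1)$.

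The heart of the argument is the refined expansion obtained from the Euler--Lagrange equation $-\Delta\Psi_n+V_n\Psi_n+2G\Psi_n^3=\mu_n\Psi_n$. Performing the change of variables $r=R_n+h_nx$, multiplying by $h_n^2c_n^{-1}$, and using $2Gc_n^2h_n^2=\frac{G\,h_n}{\pi R_n}(1+o(1))$, the quartic term is seen to contribute to the rescaled equation only at order $h_n$, through the source $\frac{G}{\pi R_n}\xi_1^3$. Writing $\rho_n=\xi_1+h_n\tau_n+O_{HO}(\om^{-1+\ep})$ and $(\mu_n-V_n(R_n))h_n^2=1+O(h_n)$, the order-$h_n$ balance produces the linear ODE
\[
-\tau_n''+x^2\tau_n-\tau_n=\frac{1}{R_n}\xi_1'-\frac{V_n'''(R_n)}{6}h_n^4x^3\xi_1-\frac{G}{\pi R_n}\xi_1^3+b_n\xi_1,
\]
exactly as (\ref{eq1:eqELphi}) but with the extra, crucial term $-\frac{G}{\pi R_n}\xi_1^3$. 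Since the operator $-u''+x^2u-u$ has kernel $\mathrm{span}\{\xi_1\}$, the Fredholm solvability condition is no longer automatic: the odd terms $\xi_1'$ and $x^3\xi_1$ are orthogonal to $\xi_1$, but $\xi_1^3$ is even, so imposing $\int\xi_1\cdot(\mathrm{RHS})=0$ forces $b_n=\frac{G}{\pi R_n}\int_{\R}\xi_1^4$. This pins down the $O(h_n)$ correction to the rescaled eigenvalue $(\mu_n-V_n(R_n))h_n^2$, hence the constant $J_n$ in (\ref{eq4:DLenergie}). I stress that, because $\xi_1^3\propto e^{-3x^2/2}$ is a narrower Gaussian, $\tau_n$ is genuinely the solution of a linear ODE and not, as in Theorem~\ref{theo:1D1}, a polynomial multiple of $\xi_1$ --- which is exactly why the statement records $\tau_n,\upsilon_n$ as ``solutions to linear second order ODEs''. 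Iterating once more with $\rho_n=\xi_1+h_n\tau_n+h_n^2\upsilon_n+O_{HO}(\om^{-3/2})$ gives the equation for $\upsilon_n$ and the remainder in (\ref{eq4:DLfonction}), and (\ref{eq4:DLenergie}) follows by inserting this into $E_n$.

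Finally, the continuity estimates (\ref{DLzeta2}) follow because every coefficient in the ODEs for $\tau_n$ and $\upsilon_n$ --- namely $1/R_n$, $V_n'''(R_n)h_n^4$, $G/(\pi R_n)$ and $b_n$ --- is Lipschitz in $n$ at the rates supplied by (\ref{Rp-Rq}), (\ref{hp-hq}) and $V_n^{(k)}(R_n)\propto\om^2$; subtracting the equations for indices $p$ and $q$ and inverting $-u''+x^2u-u$ on $\{\xi_1\}^{\perp}$ (a bounded operation in the harmonic oscillator norm) yields the $O_{HO}(\om^{-1}|p-q|)$ bounds. The step I expect to be the main obstacle is the bookkeeping around this Fredholm alternative: because the nonlinearity injects an even function into the right-hand side at each order, one must carefully track the induced correction to the Lagrange multiplier and check, at every stage, that the accumulated source is orthogonal to $\xi_1$ before the corrector ODE can be solved. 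For this the decay and moment estimates for $\Psi_n$ have to be strong enough, and uniform in $n\in\NN_{1/2}$, to legitimately replace the truncated integrals by integrals over all of $\R$ (as in Remark~\ref{Rq1}) when evaluating these orthogonality conditions.
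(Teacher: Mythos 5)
Your proposal is correct and follows essentially the same route as the paper: upper bound by a rescaled Gaussian trial function, Gaussian decay of $\Psi_n$ via the maximum principle (the cubic term having the right sign), the bootstrap to $\rho_n=\xi_1+O_{HO}(\om^{-1/2+\ep})$, and then the rescaled Euler--Lagrange equation producing the corrector ODEs, where the source $-\frac{G}{\pi R_n}\xi_1^3+\frac{G}{\pi R_n}\left(\int\xi_1^4\right)\xi_1$ you derive from the Fredholm condition is exactly the right-hand side of the paper's problem for $\tau_n$ (the $\left(\int\xi_1^4\right)\xi_1$ piece arising there from the chemical potential $\gamma_n+2\pi G\int\Psi_n^4$). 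The only quibble is that your intermediate upper bound on $\gamma_n$ should carry an $O(1)$ remainder rather than $O(\om^{-1/2})$, since the $\frac{V_n^{(4)}(R_n)}{4!}h_n^4x^4$ term in the Taylor expansion of $V_n$ contributes at order $1$; this is harmless because the rest of the argument uses the bound only at $O(1)$ precision.
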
 

From this theorem we deduce as a corollary 
\begin{corollaire} [Variations of $\gamma_{n}$ with respect to $n$] \label{cor:lambdan/n^*NL}\mbox{}\\
Let $n^*$ and $n ^* +1$ be defined as in Corollary \ref{cor:lambdan/n^*}. Under the assumptions of Theorem \ref{theo:1D1NL} and for $\om$ large enough, for any $n\neq n ^*,\: n^* +1$ one has
\begin{equation}\label{eq4:dependanceN}
\gamma_{n}=\gamma_{n^*} +   C  \left(n-n^*\right)^2\left(1+ O(\om^{-1/2})\right). 
\end{equation}
Moreover
\begin{equation}\label{gamman+1}
 \gamma_{n^*+1} \geq \gamma_{n^*} - C\om^{-1/2}.
\end{equation}
\end{corollaire}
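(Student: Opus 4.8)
The plan is to follow the proof of Corollary \ref{cor:lambdan/n^*} almost verbatim, using that result as a black box and isolating the two contributions that distinguish $\gamma_n$ from the linear eigenvalue $\lambda_{1,n}$. Subtracting the linear expansion (\ref{eq0:DLenergie1}) from the nonlinear one (\ref{eq4:DLenergie}) (they share the dominant part $V_n(R_n)+\sqrt{V_n''(R_n)/2}$) gives
\[
\gamma_n = \lambda_{1,n} + W(R_n) + \left(J_n - K_n\right), \qquad W(R_n) := \frac{G}{2\pi h_n R_n}\int_{\R}\xi_1(x)^4\,dx ,
\]
so the whole of the extra $n$-dependence is carried by the interaction term $W(R_n)$ together with the $O(1)$ remainders $J_n,K_n$. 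Since the $n$-dependence of $\lambda_{1,n}$ is already described by (\ref{eq1:dependanceN}), it suffices to show that, for $n\neq n^*,n^*+1$, both $W(R_n)-W(R_{n^*})$ and $(J_n-K_n)-(J_{n^*}-K_{n^*})$ are negligible compared with $C(n-n^*)^2\bigl(1+O(\om^{-1/2})\bigr)$.

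The conceptual point to watch is that $W(R_n)\propto\om^{1/2}$ is itself much larger than the energy differences $\gamma_n-\gamma_{n^*}\asymp(n-n^*)^2$ we are trying to resolve, so a priori it could dominate. The resolution is that $W$ varies extremely slowly in $n$. Writing
\[
W(R_p)-W(R_q)=\frac{G}{2\pi}\left(\int_{\R}\xi_1^4\right)\frac{R_q(h_q-h_p)+h_p(R_q-R_p)}{h_pR_ph_qR_q}
\]
and inserting $|h_p-h_q|\propto\om^{-3/2}|p-q|$ and $|R_p-R_q|\propto\om^{-1}|p-q|$ from (\ref{hp-hq}) and (\ref{Rp-Rq}), together with $h_n\propto\om^{-1/2}$ and $R_n=1+O(\om^{-1/2})$ bounded away from $0$, the numerator is $O(\om^{-3/2}|p-q|)$ while the denominator is $\propto\om^{-1}$, whence
\[
W(R_p)-W(R_q)=O\!\left(\om^{-1/2}|p-q|\right)\qquad\text{for all }p,q\in\NN_{1/2}.
\]

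For the remainders, the continuity estimates (\ref{DLzeta2}) for $\tau_n$ and $\upsilon_n$ give $J_p-J_q=O(\om^{-1/2})$ for $p,q\in\NN_{1/2}$, exactly as $K_p-K_q=O(\om^{-1/2})$ was deduced from (\ref{DLxi2}); hence $(J_n-K_n)-(J_{n^*}-K_{n^*})=O(\om^{-1/2})$. I would then assemble the pieces. For $n\neq n^*,n^*+1$ one has $|n-n^*|\geq1$, so $|n-n^*|\leq(n-n^*)^2$ and therefore $|W(R_n)-W(R_{n^*})|\leq C\om^{-1/2}(n-n^*)^2$, while the remainder contribution $O(\om^{-1/2})$ is dominated by $C(n-n^*)^2\geq C>0$; using (\ref{eq1:dependanceN}) for the linear term, both corrections are absorbed into the relative error and (\ref{eq4:dependanceN}) follows. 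For the excluded index, the analysis of Corollary \ref{cor:lambdan/n^*} yields $\lambda_{1,n^*+1}-\lambda_{1,n^*}\geq-C\om^{-1/2}$, because $n^*$ is the integer nearest $N$ (so $\Cc(R_{n^*+1})\geq\Cc(R_{n^*})$ up to an $O(\om^{-1/2})$ error, and $|K_{n^*+1}-K_{n^*}|=O(\om^{-1/2})$); combining this with the two small estimates above gives $\gamma_{n^*+1}-\gamma_{n^*}\geq-C\om^{-1/2}$, which is (\ref{gamman+1}).

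The only genuinely new input is the slow-variation estimate for $W$ and the transcription $J_p-J_q=O(\om^{-1/2})$, and I expect the latter to be the most delicate point, since it rests on the continuity in $n$ of the second-order expansion of $\rho_n$ in Theorem \ref{theo:1D1NL}. However this is a direct analogue of the linear argument: $\tau_n$ and $\upsilon_n$ solve linear ODEs whose coefficients depend on $n$ only through $R_n$, $h_n$ and the derivatives $V_n^{(k)}(R_n)$, all of which vary at controlled rates over $\NN_{1/2}$. Everything else is a purely quantitative comparison with the already-established Corollary \ref{cor:lambdan/n^*}, the essential observation being that the large interaction term is nevertheless too slowly varying to perturb the quadratic-in-$(n-n^*)$ profile of the energy.
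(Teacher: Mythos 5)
Your proof is correct and follows essentially the same route as the paper: the paper writes $\gamma_n = \Cc(R_n) + \DD(R_n) + J_n$ with $\DD$ your $W$, controls the interaction term's variation via $|\DD'(R_n)|\leq C\om^{1/2}$ and $|R_n-R_{n^*}|\propto\om^{-1}|n-n^*|$ (equivalent to your direct finite-difference bound), uses (\ref{DLzeta2}) for $J_p-J_q=O(\om^{-1/2})$, and absorbs everything into the quadratic profile exactly as you do. No substantive difference.
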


We also state the equivalent of Corollary \ref{theo:Lpg1n}:
\begin{corollaire}[$L^p$ bounds for $\Psi_n$]\label{theo:LpPsin}\mbox{}\\
For any $n\in \NN_{1/2}$ and any $1 \leq p \leq +\infty$ there is a constant $C_p$ depending only on $p$ such that
\begin{equation}\label{eq4:LpPsin}
\Vert \Psi_{n} \Vert_{L^p (\R^+,rdr)} \leq C_p \om^{1/4-\frac{1}{2p}}
\end{equation}
with the convention that $\frac{1}{p}= 0$ if $p=+\infty$.
\end{corollaire}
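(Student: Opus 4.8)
The plan is to mirror exactly the one-line argument used for Corollary~\ref{theo:Lpg1n}, replacing the linear ingredients by their nonlinear counterparts from Theorem~\ref{theo:1D1NL}. The starting point is the blow-up identity~(\ref{eq4:rescaling}), which reads $\Psi_n(r) = c_n \rho_n\big(h_n^{-1}(r - R_n)\big)$, together with the expansion~(\ref{eq4:DLfonction}) and the scale $h_n \propto \om^{-1/2}$ recorded in~(\ref{eq0:h2}). The whole estimate reduces to controlling the normalization constant $c_n$ and then performing a rescaling in the defining integral.

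First I would bound $c_n$. From the definition~(\ref{eq4:cn}), and using that $\rho_n = \xi_1 + O_{HO}(\om^{-1/2})$ with $\xi_1$ even (so that $\int x \rho_n^2\,dx$ is bounded and the term $h_n^2 \int x\rho_n^2\,dx = O(\om^{-1})$ is negligible in front of $h_n$), the denominator equals $2\pi h_n R_n\big(1 + o(1)\big)$. Since $h_n \propto \om^{-1/2}$ and $R_n$ stays bounded away from $0$ and $\infty$ on $\NN_{1/2}$, this gives $c_n^2 \leq C\om^{1/2}$, i.e. $c_n \leq C\om^{1/4}$ uniformly in $n$ and $D_\Om$.

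Next I would change variables $r = R_n + h_n x$ to write
\[
\Vert \Psi_n\Vert_{L^p(\R^+,rdr)}^p = c_n^p\, h_n \int_{\R} |\rho_n(x)|^p (R_n + h_n x)\, dx,
\]
and bound the $x$-integral by a constant. Since the harmonic-oscillator norm controls the embedding $H^1(\R) \hookrightarrow L^\infty(\R)$, the expansion~(\ref{eq4:DLfonction}) yields $\Vert\rho_n\Vert_{L^\infty} \leq C$ and $\Vert\rho_n\Vert_{L^2} = 1$; interpolation then gives $\int |\rho_n|^p\,dx \leq C$, while for the weight one uses Cauchy--Schwarz together with the $L^2(\R,x^2dx)$ bound built into the $HO$ norm to obtain $h_n\int |x||\rho_n|^p\,dx \leq C h_n \leq C$. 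Inserting these with $c_n \leq C\om^{1/4}$ and $h_n \leq C\om^{-1/2}$ produces $\Vert\Psi_n\Vert_{L^p}^p \leq C\om^{p/4 - 1/2}$, hence~(\ref{eq4:LpPsin}) after taking $p$-th roots; the case $p = +\infty$ follows directly from $\Vert\Psi_n\Vert_{L^\infty} = c_n\Vert\rho_n\Vert_{L^\infty} \leq C\om^{1/4}$.

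The only genuinely delicate point is ensuring that the weighted moments involving the factor $h_n x$ stay bounded rather than picking up growing powers of $\om$. Here I would rely on the Gaussian-type decay that $\rho_n$ inherits from $\xi_1$ through~(\ref{eq4:DLfonction}), the correction terms $\tau_n$ and $\upsilon_n$ being polynomial multiples of $\xi_1$ as solutions of the linear ODEs of Theorem~\ref{theo:1D1NL} --- exactly as the pointwise decay of $g_{1,n}$ from Proposition~\ref{theo:decexpo1d} was used in the linear case. If a self-contained bound is preferred, the analogue of Proposition~\ref{theo:decexpo1d} for $\Psi_n$ can be obtained by the same comparison-principle argument applied to the nonlinear Euler--Lagrange equation, the extra quartic term only reinforcing the sub-solution inequality.
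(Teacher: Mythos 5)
Your proposal is correct and follows essentially the same route as the paper: the paper proves the linear analogue (Corollary \ref{theo:Lpg1n}) precisely by the change of variables $r=R_n+h_nx$ combined with the bound $c_{1,n}\leq C\om^{1/4}$ and the expansion of the rescaled profile, and then states that the nonlinear case is proved identically, using the exponential decay of $\Psi_n$ established in the proof of Theorem \ref{theo:1D1NL}. Your additional care with the weighted moments and the case $p<2$ via the Gaussian decay of $\rho_n$ is exactly the ingredient the paper invokes implicitly through Remark \ref{Rq1} and the decay estimates.
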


The proofs are exactly similar to those of the corresponding results in Section 2, so we only give their main steps.\\

\begin{proof}[Proof of Theorem \ref{theo:1D1NL}]
We begin with an upper bound on $\gamma_n$ using a test function of the form
\[
 \Psi_n ^{\mbox{test}} = c_n^{\mbox{test}} \xi_1 \left(\frac{r-R_n}{h_n}\right).
\]
We obtain 
\begin{equation}\label{eq4:bornesup}
\gamma_n \leq V_n(R_n)+\sqrt{\frac{V''_n (R_n)}{2}}+\frac{G}{2\pi h_n R_n}\int_{\R} \xi_1 (x) ^4 dx + O(1)
\end{equation}
We then prove an exponential decay property for $\Psi_n$ and its first derivative exactly similar to that of Proposition \ref{theo:decexpo1d}. Step 1 of the proof of this Proposition needs no modification to apply to $\Psi_n$.\\
We use the maximum principle on the equation for $\Psi_n$ which reads
\begin{equation}\label{eq4:equationPsi}
-\Delta \Psi_n + V_n(r)\Psi_n +2G  \Psi_n ^3 = \left(\gamma_n + 2\pi G \int_{\R^+} |\Psi_n (r)|^4 rdr \right)  \Psi_n.
\end{equation}
Using (\ref{eq4:bornesup}) we get easily
\[
 -\Delta \Psi_n  + \left(V_n(r)-V_n(R_n) - \sqrt{V_n '' (R_n)}\right)\Psi_n \leq 0
\]
and we proceed as in the proof of Proposition \ref{theo:decexpo1d}. We then prove the equivalent of Lemma \ref{theo:estimemoments}, which allows to get an expansion of $\gamma_n$. We use this expansion as in Step 1 of the proof of Theorem \ref{theo:1D1} and get
\[
 \rho_n = \xi_1 + O_{HO} (\om ^{-1/4+\ep}) 
\]
for any $\ep>0$, and follow the method of Step 2 of the same proof to improve the remainder term and get
\begin{equation}
\rho_n = \xi_1 + O_{HO} (\om ^{-1/2+\ep}).
\end{equation}
We deduce an equation for $\rho_n$ starting from (\ref{eq4:equationPsi}):
\begin{multline}\label{eq4:equationrho}
-\rho_n''-\frac{h_n}{R_n} \rho '_n + \frac{V_n ''(R_n)}{2}h_n ^4 x^2 \rho_n +2Gc_n ^2 h_n ^2 \rho_n ^3 +\frac{V_n ^{(3)}(R_n)}{6}h_n ^5 x^3 \rho_n \\ = h_n^2 \left(\gamma_n + 2\pi G \int_{\R^+} |\Psi_n (r)|^4 rdr \right) \rho_n +O_{L^2}(h_n ^2).
\end{multline}
We then deduce from this equation that 
\begin{equation} 
\rho_{n}=\xi_1 + h_n \tau_n + h_n^2 \upsilon_n +O_{HO}(\om^{-3/2}).
\end{equation}
We have denoted $\tau_n$ the solution of the problem
\[
\begin{cases} 
  -u '' + x^2 u - u = \frac{1}{R_n}\xi_{1}'-\frac{V_n^{(3)}(R_n)}{6}h_n^4 x^3 \xi_{1}+ \frac{G}{\pi R_n} \left(\int \xi_1 ^4 \right) \xi_1 - \frac{G}{\pi R_n}\xi_1 ^3\\ 
 \int  \xi_1 u  =0
\end{cases}
\]
and $\upsilon_n$ the solution of the problem
\[
\begin{cases} 
  -u '' + x^2 u - u = J'_n \xi_1 -\frac{x}{R_n ^2}\xi_1' -\frac{V_n ^{(4)}}{4!}h_n ^4 x^4 \xi_1 + \frac{\tau_n '}{R_n}-\frac{V_n^{(3)}(R_n)}{6}h_n ^4 x^3 \tau _n \\ \quad \quad - \frac{3G}{\pi R_n} \tau_n \xi_1 ^3 + \frac{2G}{\pi R_n} \left(\int \xi_1 ^4 \right) \tau_n\\ 
 \int  \xi_1 u  = -\frac{1}{2}\int \tau_n ^2
\end{cases}
\]
where $J'_n=O(1)$ is chosen so that the right-hand side of the above equation is a function orthogonal to $\xi_1$, which implies that the system indeed has a unique solution. In particular one has $J'_n= J_n + O(\om^{-1/2})$.\\
This follows exactly Steps 3 and 4 of the proof of Theorem \ref{theo:1D1}. Remark that the coefficients of the ODEs satisfied by $\tau_n$ and $\upsilon_n$ depend continuously on $R_n$ seen as a continuous variable for the range of $n$ we consider. The esimates (\ref{DLzeta2}) are a consequence of this fact. We deduce (\ref{eq4:DLenergie}) from (\ref{eq4:DLfonction}). 
\end{proof}
 
\begin{proof}[Proof of Corollary \ref{cor:lambdan/n^*NL}]
At this stage, we know that 
\[
\gamma_{n}= \Cc(R_n) + \DD (R_n) + J_n\\ 
\]
where the function $\Cc$ is defined in equation (\ref{eq1:Fcout}), $\DD$ is given by
\begin{equation}\label{eq4:Fcout}
\DD (R_n) = \frac{G}{2\pi h_n R_n}\int_{\R} \xi_1 (x) ^4 dx
\end{equation}
and $J_n=O(1)$. We have $J_n - J_m = O(\om^{-1/2})$ for any $m,n\in \NN_{1/2}$, as a consequence of (\ref{DLzeta2}), and from the explicit expression of $\DD$ one can compute that 
\[
| \DD ' (R_n) | \leq C\om^{1/2}
\]
for any $n\in\NN_{1/2}$. We thus have, using Corollary \ref{cor:lambdan/n^*}
\begin{eqnarray*}
 \gamma_{n} &=& \gamma_{n^*} + C|n-n^*|^2 +O(\om^{1/2})|R_n-R_{n^*}| + O(\om^{-1/2})\\
&=& \gamma_{n^*} + C|n-n^*|^2 + O(\om^{-1/2})|n-n^*| + O(\om^{-1/2})\\
&=& \gamma_{n^*} + C|n-n^*|^2 (1+O(\om^{-1/2}))
\end{eqnarray*}
for any $n\neq n^*,\: n^* +1$.\\
Equation (\ref{gamman+1}) also follows from this discussion. 
\end{proof}

\subsection{Some improved estimates}

In this subsection we aim at obtaining a better $L^{\infty}$ bound for $u_{\om}$, which will be crucial in our analysis. Indeed, the first bound that we obtained (see Proposition \ref{theo:decexpou}) is far from being optimal as one can realize by taking the $L^2$ norm of both sides of (\ref{decexpou}). The right-hand side has a $L^2$ norm proportional to $\om^{1/4}$ whereas $\Vert u_{\om} \Vert_{L^2 (\R^2)} = 1$. A result of the analysis below is the improved bound
\begin{equation}\label{uLinfmieux}
 \Vert u_{\om} \Vert_{L^{\infty} (\R^2)} \leq C \om^{1/4}.
\end{equation}
This will follow from (\ref{decexpoutilde}) and (\ref{estimeLinf}) below and implies (this is a simple modification of the proof of Proposition \ref{theo:decexpou}) that (\ref{decexpou})
can be improved as
\begin{equation} \label{decexpoumieux}
|u_{\om}(x)|\leq C \om^{1/4} e^{-\sigma \om ||x|-1|^2}\mbox{ for any } x \in \R^2.
\end{equation}  
The right-hand side of this inequality is now bounded in $L^2$.\\
We begin with a estimates for $\tilde{u}_{\om}$: 

\begin{proposition}[Estimates for $\tilde{u}_{\om}$]\label{theo:decexpo}\mbox{}\\
There exist a constant $\tilde{\sigma}$ so that the following holds on $\R^2$
\begin{equation}\label{decexpoutilde}
|\tilde{u}_{\om}(x)|\leq C \om^{1/4} e^{-\tilde{\sigma}\om ||x|-1|^2}.
\end{equation}
Moreover, for any $0 \leq p\leq +\infty$
\begin{equation}\label{Lputilde}
 \Vert \tilde{u}_{\om} \Vert_{L^p(\R ^2)} \leq C \om ^{1/4-1/2p } 
\end{equation}
with the convention that $\frac{1}{p} = 0$ if $p=+\infty$.
\end{proposition}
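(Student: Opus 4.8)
The plan is to read off both statements directly from the Fourier expansion (\ref{eq2:utilde}) via the triangle inequality, after which everything reduces to controlling a single scalar sum. Writing $a_n := \langle f_n, g_{1,n}\rangle$, for $x = re^{i\theta}$ one has $|\tilde u_{\om}(x)| \le \sum_{n\in\NN_{1/2}} |a_n|\,|g_{1,n}(r)|$. The decisive preliminary estimate is
\[
\sum_{n\in\NN_{1/2}} |a_n| \le C
\]
with $C$ independent of $\om$. Here the fixed-$G$ hypothesis is essential: a crude Cauchy--Schwarz over $\NN_{1/2}$, which contains $\sim\om^{1/2}$ integers, would only give $\sum|a_n| \le (\#\NN_{1/2})^{1/2}\big(\sum|a_n|^2\big)^{1/2} \lesssim \om^{1/4}$, losing precisely the power one is trying to save. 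Instead I would use the concentration estimate (\ref{estimefonda}): since $|a_n|^2 \le (2\pi)^{-1}\|f_n\|_{L^2(\R^+,rdr)}^2$, it yields $\sum_{n\in\NN_{1/2}} |a_n|^2 |n-n^*|^2 \le CG^2 \le C$. Combining this with $\sum_n |a_n|^2 = \|\tilde u_{\om}\|_{L^2(\R^2)}^2 \le 1 + o(1)$ (the identity uses $2\pi\int g_{1,n}^2\,rdr=1$, the bound uses Proposition \ref{theo:confine1}), isolating the term $n = n^*$ and applying Cauchy--Schwarz against the summable sequence $(|n-n^*|^{-2})_{n\neq n^*}$ then gives the displayed bound.

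To prove (\ref{decexpoutilde}) I would first record the clean pointwise bound $|g_{1,n}(r)| \le C\om^{1/4} e^{-\sigma_1 ((r-R_n)/h_n)^2}$ for each mode. This is (\ref{eq1:decexpo}) with the harmless $\om^{\ep}$ prefactor removed, which is legitimate because the sharp $L^{\infty}$ bound of Corollary \ref{theo:Lpg1n} supplies the boundary datum $C\om^{1/4}$ for the comparison argument of Proposition \ref{theo:decexpo1d}. Since $h_n \propto \om^{-1/2}$ by (\ref{eq0:h2}) and $|R_n - 1| \le b\om^{-1/2}$ for $n\in\NN_{1/2}$, the elementary inequality $(r-R_n)^2 \ge \tfrac12 (r-1)^2 - (R_n-1)^2$ lets me dominate each individual Gaussian by one centred at $r=1$, giving $|g_{1,n}(r)| \le C\om^{1/4} e^{-\tilde\sigma\om(|x|-1)^2}$ uniformly in $n\in\NN_{1/2}$ for a suitable $\tilde\sigma>0$. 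Inserting this into the triangle inequality and factoring out the common Gaussian, the bound $\sum_n|a_n|\le C$ of the previous step immediately produces (\ref{decexpoutilde}).

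The $L^p$ bounds (\ref{Lputilde}) then follow by integrating (\ref{decexpoutilde}): in polar coordinates $\int_{\R^2} e^{-p\tilde\sigma\om(|x|-1)^2}\,dx = 2\pi\int_{\R^+} e^{-p\tilde\sigma\om(r-1)^2} r\,dr \propto \om^{-1/2}$, the Gaussian having width $\om^{-1/2}$ and mass concentrated at $r=1$, so $\|\tilde u_{\om}\|_{L^p(\R^2)}^p \le C\om^{p/4}\om^{-1/2}$, i.e.\ $\|\tilde u_{\om}\|_{L^p} \le C\om^{1/4-1/2p}$; the case $p=+\infty$ is read directly from (\ref{decexpoutilde}). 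I expect the only genuine difficulty to be the first step: recognising that the naive summation over all $\sim\om^{1/2}$ admissible modes ruins the estimate, and that one must feed in the angular concentration (\ref{estimefonda}) (available precisely because $G$ is fixed) to recover $\sum_n|a_n|\le C$. Once this is in hand, the passage to a common Gaussian and the $L^p$ integration use only the uniform control of $R_n$ and $h_n$ over $\NN_{1/2}$ already established in Section 2, and are routine.
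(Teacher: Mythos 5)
Your proposal is correct and follows essentially the same route as the paper: the triangle inequality over the Fourier modes, the bound $\sum_{n}|\langle f_n,g_{1,n}\rangle|\leq C$ obtained by Cauchy--Schwarz against $(|n-n^*|^{-2})$ using (\ref{estimefonda}), the pointwise Gaussian bounds on $g_{1,n}$ recentred at $r=1$ via the uniform control of $R_n$ and $h_n$ over $\NN_{1/2}$, and integration of the resulting Gaussian for the $L^p$ bounds. The only difference is cosmetic (a single global inequality $(r-R_n)^2\geq \tfrac12(r-1)^2-(R_n-1)^2$ in place of the paper's case split on $||x|-1|\gtrless \tilde C\om^{-1/2}$), and your explicit isolation of the $n=n^*$ term is if anything slightly more careful than the paper's.
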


\begin{proof}
We recall that 
\[
 \tilde{u}_{\om}(r,\theta):= \sum_{n\in \NN_{1/2}}  \left< f_n,g_{1,n} \right> g_{1,n}(r)e^{in\theta}
\]
so, using (\ref{eq0:Lpg1n}) we get 
\begin{eqnarray}\label{Linftilde}
 |\tilde{u}_{\om}(x)|&\leq& C \om^{1/4} \sum_{n\in \NN_{1/2}} \left| \left< f_n,g_{1,n} \right> \right| \nonumber
\\&\leq& C \om^{1/4} \left(\sum_{n\in \NN_{1/2}} \left| \left< f_n,g_{1,n} \right> \right| ^2 |n-n^*|^2 \right)^{1/2} \left(\sum_{n\in \NN_{1/2}}|n-n^*|^{-2}\right)^{1/2}\nonumber 
\\&\leq& C \om^{1/4}
\end{eqnarray}
where we have used (\ref{estimefonda}) to pass to the third line. The estimate (\ref{decexpoutilde}) is proved
using the same kind of computations : We remark that if $||x|-1|\geq \tilde{C} \om^{-1/2}$, then $(|x|-R_n) ^2 \geq (1-C\tilde{C})(|x|-1)^2$ for some constant $C$. This is a consequence of the fact that if $n\in \NN_{1/2}$ then $R_n-1=O(\om^{-1/2})$. With a proper choice of $\tilde{C}$ we thus have, using (\ref{eq0:DLfonction}), that for any $ r \in \R $ so that $ |r-1| \geq \tilde{C} \om^{-1/2}$ and any $n\in\NN_{1/2}$
\begin{equation}\label{decexopg1nbis}
|g_{1,n}(r)| \leq C \om^{1/4} e^{-c h_n ^2 |r-R_n|^2}\leq C \om^{1/4} e^{-\tilde{\sigma}\om |r-1|^2}.
\end{equation}
We use this fact and the same trick as in (\ref{Linftilde}) to conclude that (\ref{decexpoutilde}) holds on the domain $\left\lbrace ||x|-1|\geq \tilde{C} \om^{-1/2} \right\rbrace$. On the complement of this domain $e^{-\tilde{\sigma}\om |r-1|^2}$ is bounded below, so (\ref{decexpoutilde}) is a consequence of (\ref{Linftilde}).\\
Finally (\ref{Lputilde}) follows from (\ref{decexpoutilde}).
\end{proof}

We now aim at improving (\ref{eq3:confine1}), giving estimates in stronger norms. As already emphasized the most important result is (\ref{estimeLinf}), but we also state estimates (\ref{estimeH1}) and (\ref{estimeH2}) because they actually imply (\ref{estimeLinf}) (see Step 5 of the proof below).

\begin{proposition}[Estimates in stronger norms]\label{theo:estimsobolev}\mbox{}\\
Recall that $\tilde{u}_\om$ is defined in equation (\ref{eq2:utilde}). The following estimates hold true when $\om\rightarrow + \infty$ and $D_{\Om}\rightarrow D$:
\begin{eqnarray}
\Vert \left| u_{\om} -\tilde{u}_{\om} \right| \Vert_{H^1 (\R^2)} &\leq& C_{\ep} \om ^{\ep} \mbox{ for any } \ep >0 \label{estimeH1} \\
\Vert \left| u_{\om}  -  \tilde{u}_{\om} \right| \Vert_{H^2 (\R^2)} &\leq& C_{\ep} \om ^{1+\ep}\label{estimeH2}\mbox{ for any } \ep > 0 \\  
\Vert u_{\om}  - \tilde{u}_{\om} \Vert_{L^{\infty} (\R^2)} &\leq& C_{\ep} \om ^{\ep} \mbox{ for any } \ep >0. \label{estimeLinf}
\end{eqnarray}
\end{proposition}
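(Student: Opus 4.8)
The plan is to derive a single elliptic equation for $w:=u_{\om}-\tilde u_{\om}$ and then run, in turn, an $H^1$ bound, an $H^2$ bound, and a Sobolev interpolation. The structural observation is that $\tilde u_{\om}$ is the $L^2(\R^2)$-orthogonal projection $\Pi u_{\om}$ of $u_{\om}$ onto $E:=\mathrm{span}\{g_{1,n}e^{in\theta}:\ n\in\NN_{1/2}\}$, and that the linear operator
\[
\mathcal L:=-\left(\nabla-i\om x^{\perp}\right)^2+D_{\Om}\frac{\om^2}{2}\left(|x|^2-1\right)^2
\]
leaves $E$ and $E^{\perp}$ invariant, since $\mathcal L\left(g_{1,n}e^{in\theta}\right)=\lambda_{1,n}g_{1,n}e^{in\theta}$ by (\ref{EElgjn})--(\ref{eq:Vn}). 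Applying $I-\Pi$ to the Euler--Lagrange equation (\ref{eq2:EELu}) removes the linear part exactly and leaves
\[
(\mathcal L-\mu_{\om})\,w=h,\qquad h:=-2G\,(I-\Pi)\!\left(|u_{\om}|^2u_{\om}\right).
\]
Because $\Pi$ is a contraction, $\Vert h\Vert_{L^2}\le 2G\Vert u_{\om}\Vert_{L^6}^3$, and I would control $\Vert u_{\om}\Vert_{L^6}\le\Vert\tilde u_{\om}\Vert_{L^6}+\Vert w\Vert_{L^6}$ using (\ref{Lputilde}) and the two-dimensional embedding $H^1\hookrightarrow L^6$.

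First I would prove (\ref{estimeH1}). Testing the equation against $w$ gives $\langle\mathcal Lw,w\rangle=\mu_{\om}\Vert w\Vert_{L^2}^2+\langle h,w\rangle$; discarding the nonnegative potential term and invoking the diamagnetic inequality $|\nabla|w||\le|(\nabla-i\om x^{\perp})w|$ bounds $\Vert\,|w|\,\Vert_{H^1}$. With $\mu_{\om}=O(\om)$, $\Vert w\Vert_{L^2}=O(\om^{-1/2})$ from (\ref{eq3:confine1}), and the bound on $h$, this becomes a closed inequality
\[
\Vert\,|w|\,\Vert_{H^1}^2\le C+C\om^{-1/2}\left(\om^{1/6}+\Vert\,|w|\,\Vert_{H^1}\right)^3 .
\]
I would treat this as a bootstrap. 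A crude starting bound $\Vert\,|w|\,\Vert_{H^1}=O(\om^{3/8})$ comes from feeding the pointwise estimate of Proposition \ref{theo:decexpou} (which gives $\Vert u_{\om}\Vert_{L^6}=O(\om^{5/12})$, hence $\Vert h\Vert_{L^2}=O(\om^{5/4})$) into the tested equation; since $3/8$ is strictly below the repelling exponent $1/2$, the iteration drives the bound down to $\Vert\,|w|\,\Vert_{H^1}\le C\le C_{\ep}\om^{\ep}$. The only point needing care is that the cubic nonlinearity feeds back into the right-hand side, so one must check the iteration contracts.

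With $\Vert\,|w|\,\Vert_{H^1}=O(1)$ we get $\Vert u_{\om}\Vert_{L^6}=O(\om^{1/6})$, hence $\Vert h\Vert_{L^2}=O(\om^{1/2})$ and $\Vert\mathcal Lw\Vert_{L^2}=\Vert h+\mu_{\om}w\Vert_{L^2}=O(\om^{1/2})$. This is the input for (\ref{estimeH2}), the heart of the argument. Expanding $\Vert\mathcal Lw\Vert_{L^2}^2$ and integrating by parts, the cross term is controlled by the pointwise bound $|\nabla V|^2\le C\om^2V$ (immediate from (\ref{eq1:Voriginal})) together with the already-established $\int|(\nabla-i\om x^{\perp})w|^2=O(1)$ and $\int V|w|^2=O(1)$ (both being pieces of $\langle\mathcal Lw,w\rangle=O(1)$); this yields $\Vert(\nabla-i\om x^{\perp})^2w\Vert_{L^2}+\Vert Vw\Vert_{L^2}=O(\om^{1/2})$, and the elliptic estimates of Lu and Pan \cite{LuPan} upgrade this to a bound on the full magnetic Hessian of $w$. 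One then passes to the modulus through the Madelung-type identity
\[
\Delta|w|=\mathrm{Re}\!\left(e^{-i\phi}\left(\nabla-i\om x^{\perp}\right)^2w\right)+|w|\,\bigl|\nabla\phi-\om x^{\perp}\bigr|^2,\qquad w=|w|e^{i\phi},
\]
together with the identity $\Vert\nabla^2|w|\Vert_{L^2}=\Vert\Delta|w|\Vert_{L^2}$ on $\R^2$. The delicate term is the gauge-invariant phase contribution $|w|\,|\nabla\phi-\om x^{\perp}|^2$, for which only the averaged bound $\int|w|^2|\nabla\phi-\om x^{\perp}|^2=O(1)$ is immediate; controlling it in $L^2$ is exactly where the slack $\om^{\ep}$ in the target $C_{\ep}\om^{1+\ep}$ is spent, and this, together with making the Lu--Pan estimate quantitative in $\om$, is the step I expect to be the main obstacle.

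Finally (\ref{estimeLinf}) follows by interpolation and embedding: for any $s\in(0,1)$,
\[
\Vert\,|w|\,\Vert_{H^{1+s}}\le\Vert\,|w|\,\Vert_{H^1}^{1-s}\,\Vert\,|w|\,\Vert_{H^2}^{s}\le C_{\ep}\om^{s(1+\ep)},
\]
and the two-dimensional Sobolev embedding $H^{1+s}(\R^2)\hookrightarrow L^{\infty}(\R^2)$ gives $\Vert w\Vert_{L^{\infty}}=\Vert\,|w|\,\Vert_{L^{\infty}}\le C_{\ep}\om^{\ep}$ upon letting $s\to0$. Combined with (\ref{Lputilde}) this also yields the announced bound $\Vert u_{\om}\Vert_{L^{\infty}}\le C\om^{1/4}$ of (\ref{uLinfmieux}).
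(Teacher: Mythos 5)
Your $H^1$ argument is correct and in fact streamlines the paper's. The observation that $\tilde u_{\om}=\Pi u_{\om}$ with $\Pi$ the $L^2$-projection onto $\mathrm{span}\{g_{1,n}e^{in\theta}\}$, which commutes with the linear operator, reduces the right-hand side to $-2G(I-\Pi)(|u_{\om}|^2u_{\om})$ and gives $\Vert h\Vert_{L^2}\le 2G\Vert u_{\om}\Vert_{L^6}^3$ in one line; the paper instead keeps the term $\mu_{\om}u_{\om}-\sum\lambda_{1,n}\langle f_n,g_{1,n}\rangle g_{1,n}e^{in\theta}$ in (\ref{RHS}) and must estimate it via (\ref{estimefonda}) and the dichotomy $|\lambda_{1,n}-\mu_{\om}|\le C\max(\om^{1/2},|n-n^*|^2)$, arriving at the same $O(\om^{1/2})$ bound in (\ref{aplLuPan3}). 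Your bootstrap closes because the exponent map $\alpha\mapsto 3\alpha/2-1/4$ has its repelling fixed point at $1/2$ and your starting exponent $3/8$ lies below it; the paper reaches the same conclusion by its two-stage route (an energy identity giving $O(\om^{1/4})$, then testing the equation against $w$ with $\Vert w\Vert_{L^2}\le C\om^{-1/2}$). The final interpolation and Sobolev embedding for (\ref{estimeLinf}) is identical to the paper's Step 5.

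The genuine gap is in the $H^2$ step, precisely where you flagged it. Obtaining the magnetic Hessian bound via Lu--Pan is fine (with $\mathrm{curl}(\om x^{\perp})=2\om$, and the potential term $Vw$ put in $L^2(B_{2R})$ either by your integration-by-parts trick using $|\nabla V|^2\le C\om^2 V$ or, as in the paper, by the exponential decay of $u_{\om}$ and $\tilde u_{\om}$ as in (\ref{aplLuPan1})). But the passage from the magnetic Hessian of $w$ to $\Vert\,|w|\,\Vert_{H^2}$ through the Madelung identity does not close: the term $|w|\,|\nabla\phi-\om x^{\perp}|^2$ equals the squared modulus of the phase part of $(\nabla-i\om x^{\perp})w$ divided by $|w|$, so near a zero of $w$ (and $w=u_{\om}-\tilde u_{\om}$ has no reason not to vanish) it is not controlled in $L^2$ by anything you have established; the available bound $\int|w|^2|\nabla\phi-\om x^{\perp}|^2=O(1)$ is one full power of $|w|$ short, and the $\om^{\ep}$ slack cannot compensate a possible local non-square-integrability. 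The paper never decomposes $w$ into modulus and phase: it applies the diamagnetic inequality to each first-order magnetic derivative, namely $|\nabla|(\partial_k-iA_k)w||\le|(\nabla-iA)(\partial_k-iA_k)w|$ almost everywhere, and combines this with $|\partial_k|w||\le|(\partial_k-iA_k)w|$, so that only gauge-invariant quantities appear and the singular phase term never arises. Replacing your Madelung step by this "diamagnetic inequality applied twice" argument repairs the proof; everything else you wrote survives.
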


The proof uses mainly the Euler-Lagrange equation for $u_{\om}$, combined with elliptic estimates for the Ginzburg-Landau operator. We state the result we are going to use for convenience and refer to \cite{LuPan} or \cite{AB-2D} for a proof.

\begin{lemma}\label{theo:LuPan}
 Let $A\in \left(W^{2,\infty}(\R^2)\right)^2$ be a divergence-free map, $g\in L^2(\R^2)$ and $w$ a solution of
\begin{equation}\label{eq4:GL}
-\left(\nabla -iA\right)^2 w=g \mbox{ in } \R^2. 
\end{equation}
Then for any $R>0$ there exists a constant $C_R>0$ so that
\begin{multline}\label{eq4:GLestim}
 \sum_{j,k=1} ^2 \int_{B_R} \left| (\partial_j-iA_j)(\partial_k-iA_k)w\right|^2 \leq \\ C_R\left(\Vert \Delta A\Vert_{L^{\infty}(B_{2R})}+ \Vert curl A\Vert_{L^{\infty}(B_{2R})}\right) \left(\int_{B_{2R}} |g|^2+|w|^2\right).
\end{multline}
Moreover, $C_R$ remains bounded as $R$ goes to infinity. 
\end{lemma}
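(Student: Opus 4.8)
The statement to prove is the elliptic regularity estimate of Lemma \ref{theo:LuPan} itself: for a divergence-free magnetic potential $A$, a solution $w$ of the magnetic equation $-(\nabla - iA)^2 w = g$ controls its full second-order magnetic derivatives locally in $L^2$ by $g$ and $w$ in $L^2$, with a constant that stays bounded as the ball radius grows. The plan is to reduce everything to the standard Caccioppoli/second-derivative identity for the magnetic Laplacian, obtained by testing the equation against second magnetic derivatives of $w$ cut off by a smooth bump. Write $\nabla_A = \nabla - iA$ and abbreviate $D_j = \partial_j - iA_j$. The whole point is that the commutator $[D_j,D_k] = -i(\partial_j A_k - \partial_k A_j) = -i(\mathrm{curl}\,A)\,\epsilon_{jk}$ is a zeroth-order multiplication operator, so the magnetic second derivatives commute up to a term controlled by $\Vert \mathrm{curl}\,A\Vert_{L^\infty}$, and the integration-by-parts identities of the flat case go through with the error terms this commutator produces.

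\textbf{Key steps.} First I would fix a cutoff $\chi \in C_c^\infty(B_{2R})$ with $\chi \equiv 1$ on $B_R$ and $|\nabla \chi| \leq C/R$, $|\Delta \chi| \leq C/R^2$. Second, I would establish the first-order Caccioppoli bound: multiplying $-\nabla_A^2 w = g$ by $\chi^2 \overline{w}$ and integrating by parts gives
\begin{equation}
\int \chi^2 |\nabla_A w|^2 = \mathrm{Re}\int \chi^2 \overline{w}\, g - 2\,\mathrm{Re}\int \chi\, \overline{w}\,\nabla\chi\cdot\nabla_A w,
\end{equation}
and a Cauchy--Schwarz/Young absorption of the last term yields $\int_{B_R}|\nabla_A w|^2 \leq C(\int_{B_{2R}}|g|^2 + R^{-2}\int_{B_{2R}}|w|^2)$. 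Third, for the second derivatives I would test the equation against $\chi^2 \overline{D_k D_k w}$ summed over $k$ (equivalently, differentiate the equation in the magnetic sense and integrate by parts once more), producing the key identity
\begin{equation}
\sum_{j,k}\int \chi^2 |D_j D_k w|^2 = \mathrm{Re}\int \chi^2 \overline{(-\nabla_A^2 w)}\,(-\nabla_A^2 w) + (\text{commutator and cutoff terms}),
\end{equation}
where the leading term is exactly $\int \chi^2 |g|^2$. The commutator terms are where $[D_j,D_k]w = -i\,\mathrm{curl}\,A\,\epsilon_{jk}\, w$ and $[D_j,[D_j,D_k]]$-type expressions appear; these are bounded by $(\Vert\mathrm{curl}\,A\Vert_{L^\infty} + \Vert\Delta A\Vert_{L^\infty})$ times lower-order quantities $(\int|\nabla_A w|^2 + \int|w|^2)$ over $B_{2R}$. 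Fourth, I would absorb the cutoff-derivative terms (those involving $\nabla\chi$, $\Delta\chi$) into the left-hand side via Young's inequality, using the first-order bound from step two to control the residual $\int_{B_{2R}}|\nabla_A w|^2$; since $R \geq 1$ the factors $R^{-1}, R^{-2}$ are harmless and in fact improve with $R$, which is what gives the uniform boundedness of $C_R$ as $R\to\infty$. Collecting the estimates yields precisely \eqref{eq4:GLestim}.

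\textbf{Main obstacle.} The delicate point is the systematic bookkeeping of the commutator terms in step three. In the non-magnetic case one has the clean Bochner-type identity $\int \chi^2|\nabla^2 w|^2 = \int\chi^2|\Delta w|^2 + (\text{cutoff terms})$ with no curvature contribution, whereas here the noncommutativity of the $D_j$ forces extra terms of the form $\int \chi^2\,\mathrm{curl}\,A\,\overline{D_j w}\,D_k w$ and $\int\chi^2\,(\partial A)\,\overline{w}\,D_j D_k w$; the latter must be split and one copy of $D_j D_k w$ absorbed into the left-hand side, leaving a factor $\Vert \Delta A\Vert_{L^\infty}$ in front of $\int|w|^2$. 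Getting the right powers of the $L^\infty$ norms of $\mathrm{curl}\,A$ and $\Delta A$ — so that the final constant has the stated linear dependence — requires tracking each integration by parts carefully, since a careless estimate could produce quadratic dependence on these norms. Because this is a standard (if somewhat tedious) elliptic computation for the Ginzburg--Landau operator, the cleanest route is simply to cite \cite{LuPan} or \cite{AB-2D}, where the identity is proved in full; the sketch above records the structure of the argument and explains why the constant behaves as claimed.
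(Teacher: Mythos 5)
Your bottom line coincides with the paper's: the paper gives no proof of this lemma at all --- it states the result ``for convenience'' and refers to \cite{LuPan} or \cite{AB-2D} --- which is exactly what your final sentence recommends, so on that level the two approaches agree. Beyond that, your sketch of the commutator/Caccioppoli argument is a structurally correct account of what those references actually do, and your explanation of why $C_R$ stays bounded as $R\to\infty$ (the cutoff derivatives scale like $R^{-1}$) is the right one. One technical point in your Step 3/obstacle paragraph should be corrected, because as written it would produce precisely the quadratic dependence you warn against: the term you display as $\int\chi^2\,(\partial A)\,\overline{w}\,D_jD_kw$, handled by splitting off and absorbing a copy of $D_jD_kw$, leaves a factor $\Vert\nabla A\Vert_{L^\infty}^2\int|w|^2$, not $\Vert\Delta A\Vert_{L^\infty}\int|w|^2$. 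The clean organization is to perform the commutations \emph{before} the final integration by parts: using $\mathrm{div}\,A=0$ one has $\sum_j\partial_j(\partial_jA_k-\partial_kA_j)=\Delta A_k$, so the only derivative-of-$A$ term that survives is $\sum_k\int\chi^2\,i\,\Delta A_k\,w\,\overline{D_kw}$, which pairs $\Delta A_k\,w$ against a \emph{first}-order magnetic derivative; Cauchy--Schwarz together with the first-order Caccioppoli bound from your Step 2 then gives linear dependence on $\Vert\Delta A\Vert_{L^\infty}$ with nothing left to absorb. Note also that, strictly, this computation yields a constant of the form $C_R\left(1+\Vert\Delta A\Vert_{L^\infty}+\Vert\mathrm{curl}\,A\Vert_{L^\infty}\right)$ rather than ``precisely'' the stated one; the discrepancy is immaterial in this paper's application, where $A=\om x^{\perp}$ and hence $\mathrm{curl}\,A=2\om\to\infty$.
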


\begin{proof}[Proof of Proposition \ref{theo:estimsobolev}]
\emph{Step 1.} We claim that  
\begin{equation}\label{eq3:confineH1mieux}
F_{\om}\left( u_{\om}- \tilde{u}_{\om} \right) \leq C\om^{1/2}
\end{equation}
which implies 
\begin{equation}\label{estimeH1'}
 \Vert u_{\om} -\tilde{u}_{\om} \Vert_{H^1 (\R^2)} \leq C \om ^{1/4}.
\end{equation}
We begin by multiplying the Euler-Lagrange equation (\ref{eq0:EELu}) by $\tilde{u}_{\om} ^*$. Integrating and injecting the result into $F_{\om}\left( u_{\om}- \tilde{u}_{\om} \right)$, we obtain
\begin{multline}\label{eq4:Energiedifference}
F_{\om}\left( u_{\om}- \tilde{u}_{\om} \right) = F_{\om}\left( u_{\om}\right) +F_{\om}\left( \tilde{u}_{\om} \right) - \mu_{\om} \int_{\R^2} \left( u_{\om}\tilde{u}_{\om}^* + u_{\om}^*\tilde{u}_{\om}\right) \\
+ G \int_{\R^2}\left( 2 |u_{\om}|^2|\tilde{u}_{\om}|^2+ u_{\om} |u_{\om}|^2 \tilde{u}_{\om}^*+u_{\om}^* |\tilde{u}_{\om}|^2 \tilde{u}_{\om} - u_{\om} \tilde{u}_{\om}  ^{*3}  - u_{\om}^* \tilde{u}_{\om}^{3} \right).
\end{multline}
We note that a consequence of (\ref{bornesupint}) is
\begin{equation}\label{borneEint}
 \int_{\R^2} |u_{\om}|^4 \leq C G \om^{1/2}.
\end{equation}
The terms in the second line of (\ref{eq4:Energiedifference}) are also bounded by $CG \om^{1/2}$ using H\"{o}lder inequalities, (\ref{Lputilde}) and (\ref{borneEint}). On the other hand, equations (\ref{eq0:resultenergie}), (\ref{estimefonda}) and (\ref{decexpoutilde}) yield
\begin{equation}\label{eq3:Energie1}
F_{\om}(u_{\om})= F_{\om}(\tilde{u}_{\om})+O(G \om^{1/2})=\lambda_{1,n^*} + O(G \om^{1/2}).
\end{equation}
From (\ref{potchim0}) and (\ref{borneEint}) we also obtain
\begin{equation}\label{potchimestim}
\mu_{\om} = \lambda_{1,n^*} + O(G \om^{1/2}). 
\end{equation}
Then, using (\ref{eq3:confine1}) we deduce (\ref{eq3:confineH1mieux}) from (\ref{eq4:Energiedifference}), remembering that in this section we assume that $G$ is a fixed constant.\\

\emph{Step 2.}
Using (\ref{Lputilde}) together with (\ref{estimeH1'}) and a Sobolev imbedding implies  
\begin{equation}\label{Lpu}
 \Vert u_{\om} \Vert_{L^p(\R ^2)} \leq C_{\ep} \om ^{1/4+\ep}
\end{equation}
for any $1 \leq p \leq +\infty$. We then interpolate between $L^2$ (remember that $\Vert u_{\om} \Vert_{L^2(\R ^2)}=1$) and $L^p$, make $p\rightarrow + \infty$ to get 
\begin{equation}\label{L6u}
 \Vert u_{\om} \Vert_{L^6(\R ^2)} \leq C_{\ep} \om ^{1/6+\ep} \mbox{ for any } \ep > 0. 
\end{equation}

\emph{Step 3.}
We now turn to the proof of (\ref{estimeH2}). The Euler-Lagrange equation for $g_{1,n}$ can be written
\begin{equation}\label{eq4:EELg}
-\left(\nabla -i\om x^{\perp} \right)^2 g_{1,n}e^{in\theta} +D_{\Om}\frac{\om^2}{2} \left(\vert x\vert^2 - 1 \right)^2 g_{1,n}e^{in\theta} =\lambda_{1,n} g_{1,n}e^{in\theta}, 
\end{equation}
so that we get for $\tilde{u}_{\om}$
\begin{equation}\label{eq4:EELutilde}
-\left(\nabla -i\om x^{\perp} \right)^2 \tilde{u}_{\om} +D_{\Om}\frac{\om^2}{2} \left(\vert x\vert^2 - 1 \right)^2 \tilde{u}_{\om} =\sum_{n\in \NN_{1/2}} \lambda_{1,n}\left\langle f_n,g_{1,n}\right\rangle g_{1,n}e^{in\theta}.
\end{equation} 
We substract this equation from the equation for $u_{\om}$ to obtain
\begin{multline}\label{eq4:EELu-utilde}
-\left(\nabla -i\om x^{\perp} \right)^2 \left(u_{\om} - \tilde{u}_{\om}\right) =  \mu_{\om} u_{\om} - \sum_{n\in \NN_{1/2}} \lambda_{1,n}\left\langle f_n,g_{1,n}\right\rangle g_{1,n}e^{in\theta}\\- D_{\Om}\frac{\om^2}{2} \left(\vert x\vert^2 - 1 \right)^2 \left(u_{\om} - \tilde{u}_{\om}\right)-2G|u_{\om}|^2 u_{\om}.
\end{multline}
We denote
\begin{equation}\label{RHS}
\delta_{\om}:= \mu_{\om} u_{\om} - \sum_{n\in \NN_{1/2}} \lambda_{1,n}\left\langle f_n,g_{1,n}\right\rangle g_{1,n}e^{in\theta}\\- D_{\Om}\frac{\om^2}{2} \left(\vert x\vert^2 - 1 \right)^2 \left(u_{\om} - \tilde{u}_{\om}\right)-2G|u_{\om}|^2 u_{\om}
\end{equation}
and provide a bound in $L^2(B_{2R})$ to this quantity, for some $R>0$. First, using (\ref{eq3:confine1}) and the exponential decay results of Propositions \ref{theo:decexpou} and \ref{theo:decexpo} we get 
\begin{equation}\label{aplLuPan1}
\int_{B_{2R}} D_{\Om}^2 \om ^4 \left(\vert x\vert^2 - 1 \right)^4 \left|u_{\om} - \tilde{u}_{\om}\right|^2 \leq C_{\ep} \om^{1+\ep}.
\end{equation}
On the other hand
\begin{eqnarray}\label{aplLuPan2}
\int_{B_{2R}}  \left|\mu_{\om} u_{\om} - \sum_{n\in \NN_{1/2}} \lambda_{1,n}\left\langle f_n,g_{1,n}\right\rangle g_{1,n}e^{in\theta}\right| ^2 &\leq& C \mu_{\om} ^2 \left( \sum_{n\in \NN_{1/2}} \int_{B_{2R}}\left|f_n - \left\langle f_n,g_{1,n}\right\rangle g_{1,n} \right|^2 + \sum_{n\in \NN_{1/2}^c} \int_{B_{2R}} | f_n |^2 \right) \nonumber
\\ &+& C \sum_{n\in \NN_{1/2}} \left(\lambda_{1,n}-\mu_{\om}\right)^2 \left| \left< f_n,g_{1,n} \right> \right|^2 \nonumber.
\end{eqnarray}
We know that $|\mu_{\om}|\leq C \om$ and $\Vert u_{\om} - \tilde{u}_{\om} \Vert_{L^2(\R^2)}\leq C\om^{-1/2}$ , whereas comparing (\ref{eq0:resultenergie}) and (\ref{eq1:dependanceN}) we get $|\lambda_{1,n}-\mu_{\om}|\leq C \max(\om ^{1/2},|n-n^*|^2)$. Using (\ref{estimefonda}) and recalling that for $n\in \NN_{1/2},\quad |n-n^*| \leq C \om^{1/2}$ we thus have
\begin{multline}\label{aplLuPan3}
\int_{B_{2R}}  \left|\mu_{\om} u_{\om} - \sum_{n\in \NN_{1/2}} \lambda_{1,n}\left\langle f_n,g_{1,n}\right\rangle g_{1,n}e^{in\theta}\right| ^2 \leq C\om ^2 \Vert u_{\om} - \tilde{u}_{\om} \Vert_{L^2(\R^2)} ^2 + C \om \sum_{n\in \NN_{1/2}} \left| \left< f_n,g_{1,n} \right> \right| ^2 \\ + C\om \sum_{n\in \NN_{1/2}}|n-n^*| ^2\left| \left< f_n,g_{1,n} \right> \right| ^2
 \leq C \om. 
\end{multline}
Gathering (\ref{L6u}), (\ref{aplLuPan1}) and (\ref{aplLuPan3}) we have for any $\ep > 0$
\[
 \int_{B_{2R}} |\delta_{\om}|^2 \leq C_{\ep} \om^{1+\ep}. 
\]
Applying Lemma \ref{theo:LuPan} with $A=\om x^{\perp}$ we find for any $R$
\begin{equation}
\sum_{j,k=1} ^2 \int_{B_R} \left| (\partial_j-iA_j)(\partial_k-iA_k) \left( u_{\om}-\tilde{u}_{\om}\right)\right|^2  \leq C_{\ep} \om^{2+\ep}
\end{equation}
with $C_{\ep}$ independent of $R$. Using the diamagnetic inequality (see \cite{LiLo}) twice, we get
\begin{equation}
 \Vert \:\nabla |u_{\om}-\tilde{u}_{\om}|\:\Vert _{H^1(B_R)}\leq C_{\ep}\om^{1+\ep} 
\end{equation}
with $C_{\ep}$ independent of $R$. This concludes the proof of (\ref{estimeH2}).\\

\bigskip
\emph{Step 4.} Let us now prove (\ref{estimeH1}). We multiply (\ref{eq4:EELu-utilde}) by $\left(u_{\om}-\tilde{u}_{\om}\right)^*$, integrate over $\R^2$ and use the diamagnetic inequality to obtain
\begin{equation}\label{H1equation}
 \Vert \left| u_{\om} -\tilde{u}_{\om} \right|\Vert_{H^1 (\R^2)} ^2 \leq 2G \int_{\R^2} |u_{\om}|^3 \left| u_{\om} -\tilde{u}_{\om} \right| + \int_{\R^2} \left|\mu_{\om} u_{\om} - \sum_{n\in \NN_{1/2}} \lambda_{1,n}\left\langle f_n,g_{1,n}\right\rangle g_{1,n}e^{in\theta}\right| \left| u_{\om} -\tilde{u}_{\om} \right|.
\end{equation}
For both terms on the right-hand side of (\ref{H1equation}) we use the Cauchy-Schwarz inequality and (\ref{eq3:confine1}), combined with (\ref{L6u}) for the first one and (\ref{aplLuPan3}) for the second. This yields (\ref{estimeH1}).

\bigskip
\emph{Step 5.}
We interpolate between $H^2$ and $H^{1+\eta}$, make $\eta \rightarrow 0$ and use a Sobolev imbedding to deduce (\ref{estimeLinf}) from (\ref{estimeH1}) and (\ref{estimeH2}).

\end{proof}

\subsection{A refined lower bound on the interaction energy}

The last essential ingredient of our analysis is a refinement of Proposition \ref{theo:borneinfint} that we present in Proposition \ref{theo:refinedlowerbound} below. Its proof uses the same ideas as that of Proposition \ref{theo:borneinfint} and requires to first improve the asymptotics for the modes $f_n$, $n\in \NN_{1/2}$. We know from (\ref{eq3:confine1}) (G is now a fixed constant) that for $n\in \NN_{1/2}$
\begin{equation}\label{fnDLpourri}
 f_n = \left\langle f_n, g_{1,n}\right\rangle g_{1,n} + O_{L^2} (\om^{-1/2}).
\end{equation}
We now have to be more precise and evaluate the term proportional to $\om^{-1/2}$ in the equation above. This is what we do in Proposition \ref{theo:modes}. The key observation is that (\ref{eq0:resultdensite}) and (\ref{uLinfmieux}) allow to linearize the Euler-Lagrange equation for $u_{\om}$, at the price of a relatively small remainder term. Thus, inserting the Fourier expansion of $u_{\om}$ we obtain equations for each mode $f_n$ that allow to improve (\ref{fnDLpourri}). It then turns out that the term proportional to $\om^{-1/2}$ in (\ref{fnDLpourri}) has to be the solution of an elliptic problem that we describe in Lemma \ref{theo:propgamman}.\\  
In this section we will often consider $g_{1,n}$ and $f_n$ (as well as any function defined on $\R^+$) as radial functions defined on $\R^2$.\\
We denote by $\Pi_{1,n}$ and $\Pi_{1,n}^{\perp}$ the othogonal projectors on the space spanned by $g_{1,n}$ and its orthogonal in $H^1 (\R^+,rdr)\cap L^2(\R^+, V_n (r)rdr)$, respectively.\\

In the following Lemma we introduce some functions $\Gamma_n$ that will be useful in the rest of the paper because they appear naturally when writing expansions for the modes $f_n$ (see Proposition \ref{theo:modes} below). We also state the main properties of these functions that we will need in our analysis. In particular the estimate (\ref{gamman-gammam}) will be essential in the proof of Proposition \ref{theo:refinedlowerbound}.

\begin{lemma}[Properties of $\Gamma_n$]\label{theo:propgamman}\mbox{}\\
Let $n\in \NN_{1/2}$. The problem
\begin{equation}\label{defigammabis}
 \begin{cases} 
 -\Delta \Gamma_n +V_n \Gamma_n -\lambda_{1,n} \Gamma_n = \Pi_{1,n} ^{\perp} \left(-2G \om ^{1/2} g_{1,n^*} ^2 g_{1,n} \right) \\ 
 \int_{\R^2} g_{1,n} \Gamma_n  = 0.
 \end{cases}
\end{equation}
has a unique radial solution in $H^1(\R^+,rdr)\cap L^2(\R^+,rV_n (r) dr)$. It satisfies the bounds
\begin{equation}\label{bornesGamma1}
\Vert \Gamma_n \Vert_{ L^{2} (\R^2)} \leq C, \quad \quad  \Vert \Gamma_n \Vert_{ H^{1} (\R^2)} \leq C \om^{1/2} 
\end{equation}
 and, for any $\ep >0$, 
\begin{equation}\label{bornesGamma2}
\Vert \Gamma_n \Vert_{ L^{4} (\R^2)} \leq C_{\ep} \om^{1/8 + \ep}, \quad  \Vert \Gamma_n \Vert_{ L^{6} (\R^2)} \leq C_{\ep} \om^{1/6+\ep}, \quad \Vert \Gamma_n \Vert_{ L^{\infty} (\R^2)} \leq C_{\ep} \om^{1/4 + \ep}.
\end{equation}
Moreover, for any $n,m \in \NN_{1/2}$ and any $\ep>0$ 
\begin{equation}\label{gamman-gammam}
\Vert \Gamma_n -\Gamma_m \Vert_{ L^{2} (\R^2)} \leq C_{\ep} \om^{-1/2+\ep} |n-m|. 
\end{equation}
\end{lemma}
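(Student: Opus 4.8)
The plan is to regard $\Gamma_n$ as a radial function on $\R^2$ and to exploit that $-\Delta+V_n$ has $g_{1,n}$ as its ground state with eigenvalue $\lambda_{1,n}$, while the spectral gap is \emph{large}: combining Theorem \ref{theo:1D1} and Proposition \ref{theo:1D2} with $V_n''(R_n)\propto\om^2$ gives $\lambda_{j,n}-\lambda_{1,n}\geq\lambda_{2,n}-\lambda_{1,n}\geq C\om$ for every $j\geq 2$. Writing $\phi_n:=2G\om^{1/2}g_{1,n^*}^2 g_{1,n}$, the source $-\Pi_{1,n}^{\perp}\phi_n$ lies in the range of $\Pi_{1,n}^{\perp}$, hence is orthogonal to the (one–dimensional) kernel $\mathrm{span}(g_{1,n})$ of $-\Delta+V_n-\lambda_{1,n}$; the Fredholm alternative gives existence, and the constraint $\int g_{1,n}\Gamma_n=0$ pins the free kernel component, giving uniqueness. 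Expanding on the eigenbasis $\{g_{j,n}\}_{j\geq1}$ and dividing by the gap yields $\Vert\Gamma_n\Vert_{L^2}\leq(\lambda_{2,n}-\lambda_{1,n})^{-1}\Vert\Pi_{1,n}^{\perp}\phi_n\Vert_{L^2}$; since $\Vert\phi_n\Vert_{L^2}\leq 2G\om^{1/2}\Vert g_{1,n}\Vert_{L^\infty}\Vert g_{1,n^*}\Vert_{L^4}^2\leq CG\om$ by Corollary \ref{theo:Lpg1n}, dividing by $C\om$ gives $\Vert\Gamma_n\Vert_{L^2}\leq C$ ($G$ being fixed).

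Next I would get the $H^1$ bound from the energy identity: pairing the equation with $\Gamma_n$ gives $F_n(\Gamma_n)-\lambda_{1,n}\Vert\Gamma_n\Vert_{L^2}^2=\langle\phi_n,\Gamma_n\rangle\leq CG\om$, so, using $V_n\geq0$, $\lambda_{1,n}=O(\om)$ and $\Vert\Gamma_n\Vert_{L^2}\leq C$, one has $\int|\nabla\Gamma_n|^2\leq C\om$ and $\Vert\Gamma_n\Vert_{H^1}\leq C\om^{1/2}$. The $L^\infty$ bound is then obtained as in Step 1 of Proposition \ref{theo:decexpo1d}: on $\{r>\eta\}$ the weight $r$ is bounded below, so the radial bounds give $\Vert\Gamma_n\Vert_{H^1(]\eta,\infty[)}\leq C\om^{1/2}$ and $\Vert\Gamma_n\Vert_{L^2(]\eta,\infty[)}\leq C$; interpolating to $H^{1/2+\ep}$ and using the $1$D embedding $H^{1/2+\ep}\hookrightarrow L^\infty$ yields $\Vert\Gamma_n\Vert_{L^\infty(\{r>\eta\})}\leq C_\ep\om^{1/4+\ep}$, while near the origin $V_n\sim n^2/r^2\gg\lambda_{1,n}$ makes $V_n-\lambda_{1,n}>0$ and the maximum principle (the source being exponentially small there, cf.\ \ref{eq1:dec0}) controls $\Gamma_n$. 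The $L^4$ and $L^6$ bounds then follow by interpolation $\Vert\Gamma_n\Vert_{L^p}\leq\Vert\Gamma_n\Vert_{L^2}^{2/p}\Vert\Gamma_n\Vert_{L^\infty}^{1-2/p}$, giving $\om^{1/8+\ep}$ and $\om^{1/6+\ep}$. (Alternatively one may prove the Gaussian pointwise bound $|\Gamma_n(r)|\leq C_\ep\om^{1/4+\ep}e^{-\sigma((r-R_n)/h_n)^2}$ by the comparison argument of Proposition \ref{theo:decexpo1d}, the inhomogeneity being absorbed because $V_n-\lambda_{1,n}\propto\om$ divides its amplitude, and then integrate after the change of variables $r=R_n+h_nx$.)

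The delicate point, which I expect to be the main obstacle, is the continuity estimate \ref{gamman-gammam}. I would subtract the equations for $\Gamma_n$ and $\Gamma_m$ and set $W:=\Gamma_n-\Gamma_m$, so that
$$(-\Delta+V_n-\lambda_{1,n})W=\Pi_{1,n}^{\perp}\phi_n-\Pi_{1,m}^{\perp}\phi_m-(V_n-V_m)\Gamma_m+(\lambda_{1,n}-\lambda_{1,m})\Gamma_m=:\Psi ,$$
and bound $\Vert\Psi\Vert_{L^2}\leq C_\ep\om^{1/2+\ep}|n-m|$ term by term. This needs the $L^2$–continuity of the data: $\Vert g_{1,n}-g_{1,m}\Vert_{L^2}\leq C\om^{-1/2}|n-m|$, which follows from Theorem \ref{theo:1D1} (through \ref{DLxi2}) with $|R_n-R_m|\propto\om^{-1}|n-m|$ and $|h_n-h_m|\propto\om^{-3/2}|n-m|$; the bound $|\lambda_{1,n}-\lambda_{1,m}|\leq C\om^{1/2}|n-m|$ from Corollary \ref{cor:lambdan/n^*}; and the key observation that $V_n-V_m=(n-m)\big(\tfrac{n+m}{r^2}-2\om\big)$, where on the concentration annulus of $\Gamma_m$ both $|r-1|$ and $|n+m-2\om|$ are $O(\om^{1/2})\cdot\om^{-1/2}$-type small, giving $|V_n-V_m|\lesssim\om^{1/2}|n-m|$ there while the Gaussian tails of $\Gamma_m$ absorb the region where $V_n-V_m$ is large. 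Once $\Vert\Psi\Vert_{L^2}\leq C_\ep\om^{1/2+\ep}|n-m|$ is in hand, I project onto $\Pi_{1,n}^{\perp}$ (which commutes with $-\Delta+V_n$) and divide by the gap $\geq C\om$ to get $\Vert\Pi_{1,n}^{\perp}W\Vert_{L^2}\leq C_\ep\om^{-1/2+\ep}|n-m|$; the remaining component is controlled directly, since $\langle W,g_{1,n}\rangle=-\langle\Gamma_m,g_{1,n}-g_{1,m}\rangle$ gives $|\langle W,g_{1,n}\rangle|\leq\Vert\Gamma_m\Vert_{L^2}\Vert g_{1,n}-g_{1,m}\Vert_{L^2}\leq C\om^{-1/2}|n-m|$, whence \ref{gamman-gammam}. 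The real work is the bookkeeping of the $n$–dependence, and the structural reason it closes is that every source-type difference carries a factor $\om^{1/2}|n-m|$ while the spectral gap supplies the compensating $\om^{-1}$.
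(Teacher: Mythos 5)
Your proposal follows essentially the same route as the paper: eigenbasis expansion plus the spectral gap $\lambda_{j,n}-\lambda_{1,n}\geq C\om$ for existence, uniqueness and the $L^2$/$H^1$ bounds; interpolation, Sobolev embedding and the comparison principle for the $L^\infty$ and pointwise estimates; and subtraction of the two equations with a term-by-term $L^2$ bound on the source (using $\Vert g_{1,n}-g_{1,m}\Vert_{L^2}\lesssim \om^{-1/2}|n-m|$, $|\lambda_{1,n}-\lambda_{1,m}|\lesssim\om^{1/2}|n-m|$ and the structure of $V_n-V_m$) followed by projection and division by the gap for (\ref{gamman-gammam}). The only caveat is that the Gaussian pointwise decay of $\Gamma_m$, which you present parenthetically as an optional alternative for the $L^p$ bounds, is in fact indispensable for controlling $\int (V_n-V_m)^2\Gamma_m^2$ away from the annulus $|r-1|\leq C\om^{-1/2+\ep}$ (in particular near $r=0$ where $V_n-V_m$ blows up), exactly as in the paper's estimates (\ref{decexpogamma})--(\ref{dec0gamma}) and (\ref{estimn1}).
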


\begin{proof}
It is easy to show that problem (\ref{defigammabis}) has a unique radial solution in the energy space, for example by expanding both sides of the equation on the basis $(g_{j,n})_{ j=1 ... +\infty}$.\\
Using the expansion
\begin{equation}\label{expansionGamman}
\Gamma_n = \sum_{j\geq 1} \left\langle \Gamma_n, g_{j,n}\right\rangle g_{j,n} 
\end{equation} 
combined with (\ref{EElgjn}) the first line in (\ref{defigammabis}) can be written
\begin{equation}\label{eqgammanbis}
 \sum_{j\geq 2} \left( \lambda_{j,n}-\lambda_{1,n} \right) \left\langle \Gamma_n, g_{j,n}\right\rangle g_{j,n} = \Pi_{1,n} ^{\perp} \left(-2G \om ^{1/2} g_{1,n^*} ^2 g_{1,n} \right).
\end{equation}
Thus, taking the $L^2$ norm of both sides of (\ref{eqgammanbis}) we obtain 
\begin{equation}\label{normeGamman}
 \sum_{j\geq 2} \left(\lambda_{j,n}-\lambda_{1,n}\right) ^{2} \left| \left\langle \Gamma_n, g_{j,n}\right\rangle\right|^2 \leq C\om \Vert g_{1,n^*}^2 g_{1,n} \Vert_{L^2(\R^2)}^2 \leq C \om^2
 \end{equation}
because $g_{1,n^* }^2 \leq C\om^{1/2}$ and $g_{j,n}$ is normalized in $L^2(\R^2)$ for any $j$. Now, using (\ref{eq0:DLenergie1}) and (\ref{eq0:DLenergie2}), one can find a constant $0 <\alpha < 1$ such that
\begin{equation}\label{lambdaj-lambda1}
 \lambda_{j,n}-\lambda_{1,n} = \alpha \lambda_{j,n} + (1-\alpha )\lambda_{j,n} - \lambda_{1,n} \geq \alpha \lambda_{j,n} + (1-\alpha ) \lambda_{2,n} - \lambda_{1,n} \geq \alpha \lambda_{j,n} \geq \alpha \lambda_{1,n} \geq  C \om,
\end{equation}
for any $j\geq 2$. Thus, (\ref{normeGamman}) implies 
\begin{equation}\label{normeGamman2}
\sum_{j\geq 2}  \left|\left\langle \Gamma_n, g_{j,n}\right\rangle\right|^2 \leq  C. 
\end{equation}
This yields the $L^2$ estimate in (\ref{bornesGamma1}) because of the expansion (\ref{expansionGamman}) (recall that by definition $\left\langle \Gamma_n, g_{1,n}\right\rangle = 0$). We also obtain from (\ref{normeGamman}) and (\ref{lambdaj-lambda1})
\begin{equation*}
F_n (\Gamma_n) = \sum_{j \geq 2} \lambda_{j,n} \left|\left\langle \Gamma_n, g_{j,n}\right\rangle\right|^2 \leq C \sum_{j\geq 2}  \left(\lambda_{j,n}-\lambda_{1,n}\right) \left|\left\langle \Gamma_n, g_{j,n}\right\rangle\right|^2 \leq  C  \om,  
\end{equation*}
thus the $H^1$ bound in (\ref{bornesGamma2}) holds. We have used again $\left\langle \Gamma_n, g_{1,n}\right\rangle = 0$.\\ 
We now claim that $\Gamma_n$ satisfies the pointwise estimates
\begin{equation}\label{decexpogamma}
 |\Gamma_{n}(r)|\leq C_{\ep}\om^{1/4+\ep} e^{-\hat{\sigma} \left( \frac{r-R_n}{h_n} \right)^2} 
\end{equation}
for some constant $\hat{\sigma}$, and for any $\alpha < 1$ and $r \leq C_{\alpha}$
\begin{equation}\label{dec0gamma}
 |\Gamma_{n}(r)| \leq C e^{-C \om } r^{\frac{n}{\alpha}}.
\end{equation}
The proofs are very similar to that of Proposition \ref{theo:decexpo1d}, so we give only the main ideas. First we remark that $\Gamma_n$ is radial, so for any $\eta >0$
\[
  C \om ^{1/2} \geq \Vert \Gamma_n (x) \Vert_{H^1 (\R^2,dx)} = \sqrt{2\pi} \Vert \Gamma_n (r) \Vert_{H^1 (\R^+,rdr)} \geq \sqrt{2\pi \eta} \Vert \Gamma_n (r)\Vert_{H^1 ([\eta,+\infty [,dr)}
\]
and thus the bound
\[
 \Vert \Gamma_n (r) \Vert_{L^{\infty}([\eta,+\infty [)} \leq C_{\ep} \eta^{-1/2} \om^{1/4+\ep}
\]
follows by interpolation and Sobolev imbeddings as in Step 1 of the proof of Proposition \ref{theo:decexpo1d}. We deduce from (\ref{defigammabis}) that $|\Gamma_n|^2$ satisfies
\begin{equation}\label{eqgammancarre}
 -\Delta |\Gamma_n|^2 + 2 |\Gamma_n|^2 \left( V_n - \lambda_{1,n} \right) \leq \Gamma_n \Pi_{1,n} ^{\perp} \left(-2G \om ^{1/2} g_{1,n^*} ^2 g_{1,n} \right).
\end{equation}
Using $\Pi_{1,n} ^{\perp} (f) = f - \Pi_{1,n} (f)$ and $g_{1,n^* }^2 \leq C\om^{1/2}$ we have
\[
\left| \Pi_{1,n} ^{\perp} \left(-2G \om ^{1/2} g_{1,n^*} ^2 g_{1,n} \right)\right| \leq C \left| \om^{1/2} g_{1,n^*} ^2 g_{1,n} \right| + C \left| \int_{\R ^2 } \om ^{1/2} g_{1,n^*} ^2 g_{1,n}^2  \right| |g_{1,n}| \leq C\om g_{1,n}.
\]
Defining the domain
\[
 I_n = \left\lbrace x \in \R ^2,\quad |\Gamma_n| \geq g_{1,n} \right\rbrace
\]
we thus deduce from (\ref{eqgammancarre}) that 
\[
 -\Delta |\Gamma_n|^2 + 2 |\Gamma_n|^2 \left( V_n -C\om \right) \leq 0
\]
on $I_n$. We prove that (\ref{decexpogamma}) and (\ref{dec0gamma}) hold in $I_n$ by using the maximum principle as in the proof of Proposition \ref{theo:decexpo1d}. On the other hand, (\ref{decexpogamma}) and (\ref{dec0gamma}) hold on $I_n ^c$ by definition, recalling (\ref{eq1:decexpo}) and (\ref{eq1:dec0}). The $L^p$ bounds (\ref{bornesGamma2}) follow from (\ref{decexpogamma}).\\
We now turn to the proof of (\ref{gamman-gammam}). We begin by noting that because of (\ref{eq0:DLfonction}), a change of variables $r=R_n+h_n x$ and a Taylor expansion yield
\begin{equation}\label{gn-gm}
 \Vert g_{1,n} - g_{1,m} \Vert_{L^2 (\R ^2)} \leq C \om^{-1/2}|n-m|.
\end{equation}
Thus
\[
 \left| \int_{\R^2} \Gamma_m g_{1,n} \right| = \left| \int_{\R^2} \Gamma_m \left(g_{1,m}-g_{1,n}\right) \right| \leq C \om^{-1/2} |n-m|
\]
because $\int_{\R^2} \Gamma_m g_{1,m} =0 $ and $\Gamma_m$ is bounded in $L^2$. This implies
\begin{equation}\label{orthogammanm}
\left|\int_{\R^2} \left(\Gamma_n - \Gamma_m\right) g_{1,n} \right| \leq C \om^{-1/2}|n-m|.
\end{equation}
On the other hand, denoting 
\begin{equation}\label{Snm}
S_{n,m}:= \Gamma_n-\Gamma_m 
\end{equation}
we get from (\ref{defigammabis}) the equation
\begin{multline}\label{eqgamman-m}
-\Delta S_{n,m} +V_n S_{n,m} -\lambda_{1,n} S_{n,m} = \Pi_{1,m} ^{\perp} \left(2G \om ^{1/2} g_{1,n^*} ^2 g_{1,m} \right) - \Pi_{1,n} ^{\perp} \left(2G \om ^{1/2} g_{1,n^*} ^2 g_{1,n} \right) 
\\ +\left(V_m - V_n\right)\Gamma_m +\left(\lambda_{1,m} - \lambda_{1,n} \right)\Gamma_m
\\ = \Pi_{1,m} ^{\perp} \left(2G \om ^{1/2} g_{1,n^*} ^2 \left(g_{1,m}-g_{1,n}\right) \right)+\left(\Pi_{1,m} ^{\perp}-\Pi_{1,n}^{\perp} \right)\left( 2G \om ^{1/2} g_{1,n^*} ^2 g_{1,n}  \right) 
\\ +\left(V_m - V_n\right)\Gamma_m +\left(\lambda_{1,m} - \lambda_{1,n} \right)\Gamma_m.
\end{multline}
We estimate the $L^2$ norm of the right-hand side of this equation. First remark that, using the definition (\ref{eq:Vn}) of the potential $V_n$, we have for $n,m\in \NN_{1/2}$ and $|r-1|\leq C_{\ep} \om^{-1/2+\ep}$
\begin{multline}\label{Vn-Vm}
|V_m(r)-V_n(r)| = \left|\frac{m^2 -n^2 }{r}-2\om (m-n)\right| = |n-m|\left|m+n-2\om +(m+n)O(\om^{-1/2+\ep})\right|
\\ \leq C_{\ep} \om^{1/2+\ep}|m-n|.
\end{multline}
We then write 
\[
 \int_{\R^2} \left(V_m - V_n\right)^2 \Gamma_m ^2 = \int_{|r-1|\leq C\om ^{-1/2+\ep}} \left(V_m - V_n\right)^2 \Gamma_m ^2 + \int_{|r-1| > C\om ^{-1/2+\ep}} \left(V_m - V_n\right)^2 \Gamma_m ^2.
\]
We estimate the first term using (\ref{Vn-Vm}) and the fact that $\Gamma_m$ is bounded in $L^2$. The second one is exponentially small because of (\ref{decexpogamma}) and (\ref{dec0gamma}) (recall that $R_n = 1 + O(\om^{-1/2})$ and $h_n \propto \om^{-1/2}$). The result is
\begin{equation}\label{estimn1}
\Vert \left(V_m - V_n\right) \Gamma_m \Vert_{L^{2}(\R^2)} \leq C_{\ep} \om^{1/2+\ep} |n-m|. 
\end{equation}
Also, using (\ref{eq1:dependanceN}), we have for $n,m \in \NN_{1/2}$
\begin{equation}\label{lambdan-m}
\left|\lambda_{1,m} - \lambda_{1,n} \right| \leq C \om^{1/2}|n-m|. 
\end{equation}
On the other hand
\begin{equation}\label{estimn2}
\left\Vert \Pi_{1,m} ^{\perp} \left(2G \om ^{1/2} g_{1,n^*} ^2 \left(g_{1,m}-g_{1,n}\right) \right) \right\Vert_{L^{2}(\R^2)} \leq \left\Vert 2G \om ^{1/2} g_{1,n^*} ^2 \left(g_{1,m}-g_{1,n} \right)\right\Vert_{L^{2}(\R^2)} \leq C\om^{1/2} |n-m|
\end{equation}
because of (\ref{gn-gm}). We now use the fact that $\Pi_{1,m} ^{\perp}-\Pi_{1,n}^{\perp} = \Pi_{1,n} -\Pi_{1,m}$ to obtain
\[
 \left(\Pi_{1,m} ^{\perp}-\Pi_{1,n}^{\perp} \right)\left( 2G \om ^{1/2} g_{1,n^*} ^2 g_{1,n}  \right)  = \left( \int_{\R^2} 2G g_{1,n^* } ^2 g_{1,n} ^2\right) \left( g_{1,n}-g_{1,m} \right) + \left( \int_{\R^2} 2G g_{1,n^* } ^2 g_{1,n} \left( g_{1,n}-g_{1,m} \right) \right) g_{1,m}
\]
and thus, using (\ref{gn-gm}) again, plus the Cauchy-Schwarz inequality for the second term
\begin{equation}\label{estimn3}
\left\Vert \left(\Pi_{1,m} ^{\perp}-\Pi_{1,n}^{\perp} \right)\left( 2G \om ^{1/2} g_{1,n^*} ^2 g_{1,n}  \right) \right\Vert_{L^{2}(\R^2)}  \leq C \om^{1/2} |n-m|.   
\end{equation}
Arguing exactly as when we obtained (\ref{normeGamman}) from (\ref{defigammabis}), the combination of (\ref{eqgamman-m}) and the estimates (\ref{estimn1}) to (\ref{estimn3}) yield
\begin{equation}\label{normeSnm}
\sum_{j\geq 2} \left(\lambda_{j,n} - \lambda_{1,n}\right) ^2 \left|\left\langle S_{n,m}, g_{j,n}\right\rangle\right|^2 \leq C_{\ep} \om ^{1+\ep} |n-m|^2.
\end{equation}
Using (\ref{lambdaj-lambda1}) we thus have
\[
 \sum_{j\geq 2} \left|\left\langle S_{n,m}, g_{j,n}\right\rangle\right|^2\leq C_{\ep} \om ^{-1+\ep} |n-m|^2
\]
and there only remains to recall (\ref{orthogammanm}) to conclude the proof because $(g_{j,n})_{j=1...+\infty}$ is a Hilbert basis for $H^1 (\R^+, rdr)\cap L^2 (\R^+ , V_n rdr)$. 
\end{proof}

We now show that the second order in (\ref{fnDLpourri}) is given, after an appropriate normalization, by the function $\Gamma_n$ that we have introduced in Lemma \ref{theo:propgamman}.\\
Let us first introduce some notation:
\begin{eqnarray}\label{defiN1/4}
\mathcal{N}_{1/4}&=& \left\lbrace n\in \Z, |n-\om|\leq \om^{1/4} \right\rbrace \\
\mathcal{N}_{1/4,+}&=&\left\lbrace n\in \NN_{1/4}, \left\langle f_n, g_{1,n}\right\rangle \neq 0 \right\rbrace \\
\mathcal{N}_{1/2,+}&=&\left\lbrace n\in \NN_{1/2}, \left\langle f_n, g_{1,n}\right\rangle \neq 0 \right\rbrace  
\end{eqnarray}
In the rest of the paper we will always denote $\beta_n$ the $n$-th term of a generic sequence satisfying   
\begin{equation}\label{betanbis}
\sum_{n\in \Z}  \beta_n^2 \leq C.
\end{equation}
The actual value of the quantity $\beta_n$ may thus change from line to line. Note that $\beta_n$ can depend on $\om$ and $D_{\Om}$ as long as (\ref{betanbis}) is satisfied.

\begin{proposition}[Refined asymptotics for the mode $f_n$]\label{theo:modes}\mbox{}\\
Let $n\in \NN_{1/2,+}$. We have
\begin{equation}\label{fnRafin}
\left\Vert f_n - \left\langle f_n, g_{1,n}\right\rangle \left( g_{1,n} + \om^{-1/2} \Gamma_n \right) \right\Vert_{L^2(\R^+ , rdr)} \leq \beta_n \max\left(\om^{-3/4}, |n-n^*|^2 \om^{-3/2} \right) 
\end{equation}
and
\begin{equation}\label{fnRafinenergie}
F_n \left( f_n - \left\langle f_n, g_{1,n}\right\rangle \left( g_{1,n} + \om^{-1/2} \Gamma_n \right) \right) \leq \beta_n ^2 \max\left(\om^{-1/2},|n-n^*|^4 \om^{-2} \right)
\end{equation}
where $\Gamma_n$ is defined in Lemma \ref{theo:propgamman}.
\end{proposition}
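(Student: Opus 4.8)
The plan is to linearize the Euler--Lagrange equation (\ref{eq2:EELu}) for $u_\om$ mode by mode and to compare the outcome with the defining equation (\ref{defigammabis}) for $\Gamma_n$. Throughout set $t_n := \langle f_n, g_{1,n}\rangle$, $w_n := \Pi_{1,n}^\perp f_n = f_n - t_n g_{1,n}$, and denote by $\Phi\mapsto\Phi_{(n)}$ the extraction of the $n$-th angular Fourier coefficient (a radial function). The quantity to control is $\Theta_n := f_n - t_n(g_{1,n} + \om^{-1/2}\Gamma_n) = w_n - t_n\om^{-1/2}\Gamma_n$, for which (\ref{fnRafin}) and (\ref{fnRafinenergie}) read $\Vert\Theta_n\Vert_{L^2}\leq\cdots$ and $F_n(\Theta_n)\leq\cdots$. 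First I would split the nonlinearity as $2G|u_\om|^2 u_\om = 2G g_{1,n^*}^2 u_\om + 2G(|u_\om|^2 - g_{1,n^*}^2)u_\om$; since the operator $-(\nabla - i\om x^\perp)^2 + D_\Om\frac{\om^2}{2}(|x|^2-1)^2$ sends $f_n(r)e^{in\theta}$ to $\big((-\Delta + V_n)f_n\big)e^{in\theta}$ and $g_{1,n^*}^2$ is radial, projecting (\ref{eq2:EELu}) onto the $n$-th mode and applying $\Pi_{1,n}^\perp$ gives
\[
(-\Delta + V_n - \mu_\om)w_n = -2G t_n \Pi_{1,n}^\perp(g_{1,n^*}^2 g_{1,n}) - 2G\Pi_{1,n}^\perp(g_{1,n^*}^2 w_n) - 2G\Pi_{1,n}^\perp\big((|u_\om|^2 - g_{1,n^*}^2)u_\om\big)_{(n)}.
\]

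The key observation is that the first, dominant, source term coincides with the right-hand side of (\ref{defigammabis}): $t_n\om^{-1/2}\Gamma_n$ solves $(-\Delta + V_n - \lambda_{1,n})(t_n\om^{-1/2}\Gamma_n) = -2G t_n\Pi_{1,n}^\perp(g_{1,n^*}^2 g_{1,n})$. Subtracting this from the equation above, the main terms cancel and $\Theta_n\in\Pi_{1,n}^\perp$ satisfies
\[
(-\Delta + V_n - \mu_\om)\Theta_n = (\mu_\om - \lambda_{1,n})\,t_n\om^{-1/2}\Gamma_n - 2G\Pi_{1,n}^\perp(g_{1,n^*}^2 w_n) - 2G\Pi_{1,n}^\perp\big((|u_\om|^2 - g_{1,n^*}^2)u_\om\big)_{(n)}.
\]
Proposition \ref{theo:1D2} and Theorem \ref{theo:1D1} give $\lambda_{j,n} - \lambda_{1,n}\geq c\om$ for $j\geq 2$, while (\ref{potchimestim}) and $\lambda_{1,n}\geq\lambda_{1,n^*}$ give $\lambda_{1,n} - \mu_\om\geq -C\om^{1/2}$, so that $\lambda_{j,n} - \mu_\om\geq c'\om$ uniformly in $n\in\NN_{1/2}$ and $j\geq 2$. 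Expanding $\Theta_n$ on $(g_{j,n})_{j\geq 2}$ then yields both $\Vert\Theta_n\Vert_{L^2}\leq C\om^{-1}\Vert\mathrm{RHS}\Vert_{L^2}$ and, crucially, $F_n(\Theta_n) = \sum_{j\geq 2}\frac{\lambda_{j,n}}{(\lambda_{j,n}-\mu_\om)^2}|\langle\mathrm{RHS}, g_{j,n}\rangle|^2\leq C\om^{-1}\Vert\mathrm{RHS}\Vert_{L^2}^2$, the extra factor $\om^{-1}$ being responsible for the gain in the energy estimate.

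It then remains to bound the right-hand side in $L^2$. Writing $w_n = \Theta_n + t_n\om^{-1/2}\Gamma_n$ in the middle term and using $\Vert g_{1,n^*}^2\Vert_{L^\infty}\leq C\om^{1/2}$ (Corollary \ref{theo:Lpg1n}) and $\Vert\Gamma_n\Vert_{L^2}\leq C$ (Lemma \ref{theo:propgamman}), the part $2G\Pi_{1,n}^\perp(g_{1,n^*}^2\Theta_n)$ has norm $\leq C\om^{1/2}\Vert\Theta_n\Vert_{L^2}$ and is absorbed into the left-hand side for $\om$ large, while the remaining $\Gamma_n$-part contributes $O(|t_n|)$. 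The first term is estimated through $|\mu_\om - \lambda_{1,n}|\leq C\max(\om^{1/2}, |n-n^*|^2)$, a consequence of (\ref{eq1:dependanceN}) and (\ref{potchimestim}), giving $O\big(|t_n|\max(1, |n-n^*|^2\om^{-1/2})\big)$. The linearization error is handled by Parseval together with estimates established earlier in this section: $\sum_n\big\Vert\big((|u_\om|^2 - g_{1,n^*}^2)u_\om\big)_{(n)}\big\Vert_{L^2}^2 = \Vert(|u_\om|^2 - g_{1,n^*}^2)u_\om\Vert_{L^2}^2\leq\Vert|u_\om|^2 - g_{1,n^*}^2\Vert_{L^2}^2\,\Vert u_\om\Vert_{L^\infty}^2\leq C\om^{1/2}$, using the density bound (\ref{eq0:resultdensite}) and the improved bound (\ref{uLinfmieux}), so this term equals $\om^{1/4}\beta_n$ with $\sum_n\beta_n^2\leq C$. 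Collecting these bounds and using $\sum_n|t_n|^2\leq 1$, the $L^2$ estimate gives $\Vert\Theta_n\Vert_{L^2}\leq\beta_n\max(\om^{-3/4}, |n-n^*|^2\om^{-3/2})$, and inserting the resulting bound on $\Vert\mathrm{RHS}\Vert_{L^2}$ into the energy estimate (with its extra $\om^{-1}$) produces $F_n(\Theta_n)\leq\beta_n^2\max(\om^{-1/2}, |n-n^*|^4\om^{-2})$; these are precisely (\ref{fnRafin}) and (\ref{fnRafinenergie}).

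The main obstacle is the control of the linearization error $2G\big((|u_\om|^2 - g_{1,n^*}^2)u_\om\big)_{(n)}$. The genuine Euler--Lagrange nonlinearity couples all the angular modes, and the whole scheme works only because this coupling can be treated perturbatively; this in turn hinges on the conjunction of the $L^2$ density estimate (\ref{eq0:resultdensite}) and the sharp $L^\infty$ bound (\ref{uLinfmieux}), both non-trivial results proved earlier in Section 4. A second delicate point, essential for the energy bound, is that the spectral gap $\lambda_{j,n} - \mu_\om\gtrsim\om$ must be uniform in $n\in\NN_{1/2}$ and $j\geq 2$, which rests on Proposition \ref{theo:1D2} and on the control $\mu_\om = \lambda_{1,n^*} + O(\om^{1/2})$ provided by (\ref{potchimestim}).
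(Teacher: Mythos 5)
Your proof is correct and follows essentially the same route as the paper's: linearize the Euler--Lagrange equation by replacing $|u_{\om}|^2$ with $g_{1,n^*}^2$ (controlled via the density estimate (\ref{eq0:resultdensite}) and the improved bound (\ref{uLinfmieux})), project onto the $n$-th angular mode, subtract the defining equation for $\Gamma_n$, and invert on the orthogonal complement of $g_{1,n}$ using the spectral gap from Proposition \ref{theo:1D2}, with the extra factor coming from $\lambda_{j,n}/(\lambda_{j,n}-\mu_{\om})^2\leq C\om^{-1}$. The only differences are organizational: by applying $\Pi_{1,n}^{\perp}$ to the mode equation at the outset you bypass the paper's intermediate orthogonality relation (\ref{orthoutile}) and the associated $\Pi_{1,n}$-projection term, and you absorb the coupling $g_{1,n^*}^2\Theta_n$ into the left-hand side instead of bounding $g_{1,n^*}^2\kappa_n$ crudely via (\ref{kappanorm}) -- the resulting estimates and conclusions are identical.
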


\begin{proof}[Proof of Proposition \ref{theo:modes}]
We need an equation satisfied by $f_n$. First recall that (\ref{decexpoutilde}) and (\ref{estimeLinf}) imply that 
\begin{equation}\label{Linfraf}
\Vert u_{\om} \Vert_{L^{\infty}(\R^2)}\leq C \om^{1/4}. 
\end{equation}
Then, using (\ref{eq0:resultdensite}) ($G$ is now a fixed constant) we get
\[
\int_{\R^2} \left(|u_{\om}|^2u_{\om} - g_{1,n^*}^2 u_{\om} \right) ^2 \leq C \om ^{1/2} \int_{\R^2} \left(|u_{\om}|^2 - g_{1,n^*}^2 \right) ^2 \leq C \om ^{1/2}.
\]
This allows to rewrite the Euler-Lagrange equation for $u_{\om}$ as follows:
\begin{equation}\label{ELLulin}
-\left(\nabla -i\om x^{\perp} \right)^2 u_{\om} +D_{\Om}\frac{\om^2}{2} \left(\vert x\vert^2 - 1 \right)^2 u_{\om} + 2G  g_{1,n^*}  ^2 u_{\om}=\mu _{\om} u_{\om} + L, 
\end{equation} 
where 
\begin{equation}\label{L}
 L = 2G\left( g_{1,n^*}^2 u_{\om} - |u_{\om}|^2u_{\om} \right)
\end{equation}
satisfies $\Vert L \Vert_{L^2}\leq C \om^{1/4}$. Using the Fourier expansion (\ref{eq0:Fourier}) of $u_{\om}$ we can write (\ref{ELLulin}) as 
\begin{equation}\label{ELLulin2}
\sum_n e^{in \theta} \left(-\Delta f_n + V_n(r) f_n +2G  g_{1,n^*}  ^2 f_n \right) = \sum_n e^{in\theta }\mu_{\om} f_n + L 
\end{equation} 
We multiply this equation by $e^{-in\theta}$ and integrate over $\theta$ to get 
\begin{equation}\label{equationfn}
-\Delta f_n  + V_n(r) f_n + 2G  g_{1,n^*} ^2 f_n = \mu_{\om} f_n + L_n 
\end{equation}
where 
\begin{equation}\label{Ln}
 L_n (r) = \int_{0} ^{2\pi} e^{-in\theta} L (r,\theta) d\theta,
\end{equation}
so that 
\begin{equation}\label{normeLn}
\sum_{n} \Vert L_n \Vert_{L^2(\R^+,rdr)} ^2 \leq C \Vert L \Vert_{L^2(\R^2)} ^2 \leq C \om^{1/2}.
\end{equation}
For $n\in \NN_{1/2,+}$ we write $f_n$ as 
\begin{equation}\label{develofn}
 f_n = \left\langle f_n, g_{1,n}\right\rangle \left( g_{1,n} + \om^{-1/2} \kappa_n \right)
\end{equation}
where
\begin{equation}\label{kappaortho}
\left\langle\kappa_n,g_{1,n} \right\rangle = 0 .
\end{equation}
Remembering (\ref{eq3:confine1}) we have  
\begin{equation}\label{kappanorm}
\sum_{n\in \NN_{1/2,+}} \left| \left\langle f_n, g_{1,n}\right\rangle \right| ^2 \Vert \kappa_n \Vert_{L^2(\R^+,rdr)} ^2 \leq C. 
\end{equation}
We claim that 
\begin{equation}\label{estimkappagamma}
\Vert \left\langle f_n, g_{1,n}\right\rangle \left(\kappa_n - \Gamma_n \right) \Vert_{L^2 (\R^2) } \leq \beta_n \max \left(\om^{-1/4},|n-n^*|^2 \om^{-1} \right)
\end{equation}
and
\begin{equation}\label{energiekappagamma}
F_n\left( \left\langle f_n, g_{1,n}\right\rangle \left(\kappa_n - \Gamma_n \right) \right) \leq \beta_n \max \left( \om^{1/2}, |n-n^*|^4\om^{-1} \right)
\end{equation}
with $\beta_n$ satisfying (\ref{betanbis}). Let us denote
\begin{equation}\label{Dnm}
 D_n :=\left\langle f_n, g_{1,n}\right\rangle \left(\kappa_n - \Gamma_n \right)
\end{equation}
Inserting (\ref{develofn}) into (\ref{equationfn}), using the fact that $g_{1,n}$ is a normalized eigenfunction of $-\Delta + V_n$ for the eigenvalue $\lambda_{1,n}$, we obtain an equation for $\left\langle f_n, g_{1,n}\right\rangle \kappa_n$:
\begin{multline}\label{eqKappan}
 -\Delta \left\langle f_n, g_{1,n}\right\rangle \kappa_n + V_n \left\langle f_n, g_{1,n}\right\rangle \kappa_n - \lambda_{1,n}\left\langle f_n, g_{1,n}\right\rangle \kappa_n = \left(\mu_{\om}-\lambda_{1,n} \right)\left\langle f_n, g_{1,n}\right\rangle \kappa_n -2 G g_{1,n^*} ^2 \left\langle f_n, g_{1,n}\right\rangle \kappa_n \\ 
-2 G g_{1,n^*} ^2 \left\langle f_n, g_{1,n}\right\rangle \om^{1/2} g_{1,n} + \left(\mu_{\om} - \lambda_{1,n} \right) \om^{1/2} \left\langle f_n, g_{1,n}\right\rangle g_{1,n}  + \om^{1/2} L_n.
\end{multline}
We multiply this equation by $g_{1,n}$ and integrate to get the useful relation
\begin{multline}\label{orthoutile}
\om^{1/2}\left\langle f_n, g_{1,n}\right\rangle \left( 2G \int_{\R ^2} g_{1,n^*} ^2 g_{1,n}^2 - \left(\mu_{\om} - \lambda_{1,n}\right)\right) =  \om ^{1/2} \int_{\R^2 } L_n g_{1,n} - 2 G \left\langle f_n, g_{1,n}\right\rangle \int_{\R^2} g_{1,n^*} ^2 \kappa_n g_{1,n} 
\\= O(\beta_n \om ^{3/4}). 
\end{multline}
We have used Cauchy-Schwarz, the bound $g_{1,n^*}^2 \leq C \om^{1/2}$, (\ref{normeLn}) and (\ref{kappanorm}).
Substracting the equation for $\left\langle f_n, g_{1,n}\right\rangle \Gamma_n$ from (\ref{eqKappan}), we obtain
\begin{multline}\label{eqDn}
 -\Delta D_n +V_n D_n - \lambda_{1,n} D_n = \left(\mu_{\om} - \lambda_{1,n}\right)\left\langle f_n, g_{1,n}\right\rangle \kappa_n -2G \left\langle f_n, g_{1,n}\right\rangle g_{1,n^*} ^2 \kappa_n \\
+ \left\langle f_n, g_{1,n}\right\rangle \Pi_{1,n} \left( \left(\mu_{\om} - \lambda_{1,n} \right) \om^{1/2}  g_{1,n} -2 G g_{1,n^*} ^2 \om^{1/2} g_{1,n} \right)+ \om^{1/2} L_n .
\end{multline}
Using (\ref{eq0:resultenergie}), (\ref{eq1:dependanceN}), (\ref{potchim0}) and (\ref{borneEint}) we obtain $|\mu_{\om} - \lambda_{1,n}|\leq C \max(\om^{1/2},|n-n^*|^2)$ and thus 
\begin{equation}\label{estimDn1}
\left\Vert \left(\mu_{\om} - \lambda_{1,n}\right)\left\langle f_n, g_{1,n}\right\rangle \kappa_n \right\Vert_{L^2 (\R^2)} \leq  \beta_n \max\left(\om^{1/2}, |n-n^*|^2 \right)
\end{equation}
for any $n\in \NN_{1/2,+}$. We have used (\ref{kappanorm}) and $n^* = \om +O(1)$. Also, using $g_{1,n^*}^2\leq C\om^{1/2}$ and (\ref{kappanorm}),
\begin{equation}\label{estimDn2}
\left\Vert 2G \left\langle f_n, g_{1,n}\right\rangle g_{1,n^*} ^2 \kappa_n \right\Vert_{L^2 (\R^2)} \leq \om^{1/2} \beta_n.
\end{equation}
On the other hand (\ref{orthoutile}) implies
\begin{eqnarray}\label{estimDn3}
 \left\Vert \left\langle f_n, g_{1,n}\right\rangle \Pi_{1,n} \left( \left(\mu_{\om} - \lambda_{1,n} \right) \om^{1/2}  g_{1,n} -2 G g_{1,n^*} ^2 \om^{1/2} g_{1,n} \right) \right\Vert_{L^2 (\R^2)} &=& \nonumber
\\ \left| \om^{1/2}\left\langle f_n, g_{1,n}\right\rangle \left( \left(\mu_{\om} - \lambda_{1,n}\right) -2G \int_{\R ^2} g_{1,n^*} ^2 g_{1,n}^2 \right) \right| \left\Vert g_{1,n} \right\Vert_{L^2 (\R^2)}
&\leq& \beta_n \om^{3/4}
\end{eqnarray}
We compute the $L^2$ norm of both sides of (\ref{eqDn}) after having expanded the left-hand side on the basis $(g_{j,n})_{j=1...+\infty}$ as in (\ref{eqgammanbis}). Gathering equations (\ref{estimDn1}) to (\ref{estimDn3}), recalling (\ref{normeLn}) we obtain 
\begin{equation}\label{normeDn}
 \sum_{j\geq 2} \left(\lambda_{j,n}-\lambda_{1,n}\right)^{2} \left| \left\langle D_n, g_{j,n}\right\rangle\right|^2 \leq  C  \beta_n ^2 \max \left( \om^{3/2}, |n-n^*|^4 \right)
 \end{equation}
where $\lambda_{j,n}$ and $g_{j,n}$ are the eigenvalues and eigenfunctions of the restriction of $-\Delta + V_n$ to the space of radial functions. Recalling (\ref{lambdaj-lambda1}), (\ref{normeDn}) implies 
\begin{equation*}
\sum_{j\geq 2}  \left|\left\langle D_n, g_{j,n}\right\rangle\right|^2 \leq  C  \beta_n ^2  \om ^{-2} \max \left( \om^{3/2}, |n-n^*|^4 \right)  
\end{equation*}
which is (\ref{estimkappagamma}) because $\left\langle D_n, g_{1,n}\right\rangle = 0$ (see the second equation in (\ref{defigammabis}) and (\ref{kappaortho})). We also obtain from (\ref{normeDn}) and (\ref{lambdaj-lambda1})
\begin{equation*}
F_n (D_n) = \sum_{j \geq 2} \lambda_{j,n} \left|\left\langle D_n, g_{j,n}\right\rangle\right|^2 \leq C \sum_{j\geq 2}  \left(\lambda_{j,n}-\lambda_{1,n}\right) \left|\left\langle D_n, g_{j,n}\right\rangle\right|^2 \leq  C  \beta_n ^2  \om ^{-1} \max \left( \om^{3/2}, |n-n^*|^4 \right),  
\end{equation*}
thus (\ref{energiekappagamma}) holds. We have used again $\left\langle D_n, g_{1,n}\right\rangle = 0$. 
\end{proof}

We are now able to present the main result of this section, a refinement of the lower bound on the interaction energy. Its proof uses the same as ideas as that of Proposition \ref{theo:borneinfint}, but the next order in the lower bound is now computed. This is made possible because of the refined asymptotics of Proposition \ref{theo:modes}.\\
Note however that the right-hand side of (\ref{fnRafin}) increases with $|n-n^*|$ so the refinement is not equally efficient for all modes. On the other hand (\ref{estimefonda}) states that the larger $|n-n^*|$ is, the smaller is the mass of the corresponding mode, so the smaller is its contribution to the interaction energy. It turns out that the best lower bound available with our results can be computed by using the refined results of Proposition \ref{theo:modes} only when $n\in \NN_{1/4,+}$.

\begin{proposition}[Refined lower bound on the interaction energy]\label{theo:refinedlowerbound}\mbox{}\\
We have, for any $\ep > 0$,
\begin{multline}\label{eq4:rlbound}
\int_{\R^2}|u_{\om}|^4 \geq \sum_{n\in \NN_{1/4,+}} \left|\left\langle f_n,g_{1,n} \right\rangle\right|^2  \int_{\R^+} \left(g_{1,n} +\om^{-1/2} \Gamma_n \right)^4 rdr 
\\ + \sum_{n\in \NN_{1/2}\setminus \NN_{1/4,+}}  \left|\left\langle f_n,g_{1,n} \right\rangle\right|^2  \int_{\R^+} g_{1,n}^4 rdr -C_{\ep}\om^{-1/4+\ep}. 
\end{multline}
 
\end{proposition}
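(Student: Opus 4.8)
The plan is to run the argument of Proposition~\ref{theo:borneinfint} one order further, feeding in the refined mode asymptotics of Proposition~\ref{theo:modes}. Starting from Lemma~\ref{theo:interactions} and discarding the nonnegative contributions of the modes $n\notin\NN_{1/2}$, I would write
\[
\int_{\R^2}|u_\om|^4 \;\geq\; 2\pi\sum_{p,q\in\NN_{1/2}}\int_{\R^+}|f_p|^2|f_q|^2\,rdr,
\]
and, exactly as in \eqref{reducdomaine}, restrict the radial integration to $\{1/2\le r\le 3/2\}$ up to an exponentially small error, using Propositions~\ref{theo:decexpou} and~\ref{theo:decexpo}. Writing $a_n=\langle f_n,g_{1,n}\rangle$, the key move is to set
\[
w_n=g_{1,n}+\om^{-1/2}\Gamma_n \ \ (n\in\NN_{1/4,+}),\qquad w_n=g_{1,n}\ \ (n\in\NN_{1/2}\setminus\NN_{1/4,+}),
\]
and decompose $f_n=a_n w_n+e_n$. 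For $n\in\NN_{1/4,+}$ Proposition~\ref{theo:modes} gives $\Vert e_n\Vert_{L^2}\le\beta_n\om^{-3/4}$, since $|n-n^*|\le C\om^{1/4}$ makes the second entry in the maximum of \eqref{fnRafin} harmless; for the remaining modes \eqref{eq3:confine1} yields $\Vert e_n\Vert_{L^2}\le\beta_n\om^{-1/2}$, while \eqref{estimefonda} shows their total mass is tiny, $\sum_{n\notin\NN_{1/4}}a_n^2\le C\om^{-1/2}$.

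After substitution the dominant contribution is $M:=\sum_{p,q}a_p^2a_q^2\int_{\R^+}w_p^2w_q^2\,rdr$, while the target main term is $T:=\sum_n a_n^2\int_{\R^+}w_n^4\,rdr$. The crucial algebraic identity is
\[
T-M=\Bigl(1-\sum_q a_q^2\Bigr)\sum_p a_p^2\int_{\R^+}w_p^4\,rdr+\tfrac12\sum_{p,q}a_p^2a_q^2\int_{\R^+}\bigl(w_p^2-w_q^2\bigr)^2\,rdr,
\]
in which both summands are nonnegative; hence $M\le T$ and it suffices to bound $T-M$ from above. The mass defect is controlled by $1-\sum_q a_q^2=\Vert u_\om-\tilde u_\om\Vert_{L^2}^2=O(\om^{-1})$ (for fixed $G$, by \eqref{eq3:confine1}) together with $\int_{\R^+}w_p^4\,rdr=O(\om^{1/2})$, so it is $O(\om^{-1/2})$. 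For the overlap defect I would use the Gaussian profile \eqref{eq0:DLfonction}: since $w_p,w_q$ are bumps of width $\propto\om^{-1/2}$ centred at $R_p,R_q$ with $|R_p-R_q|\propto\om^{-1}|p-q|$ (the $\Gamma$-differences being lower order by \eqref{gamman-gammam}), the rescaling $r=R_{n^*}+h_{n^*}x$ gives $\int_{\R^+}(w_p^2-w_q^2)^2\,rdr\le C\om^{-1/2}|p-q|^2$; combined with $\sum_{p,q}a_p^2a_q^2|p-q|^2\le 4\bigl(\sum_p a_p^2|p-n^*|^2\bigr)\bigl(\sum_q a_q^2\bigr)\le C$ from \eqref{estimefonda}, this is again $O(\om^{-1/2})$. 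Thus $T-M=O(\om^{-1/2})$.

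It then remains to absorb the substitution errors, i.e.\ everything in $\sum_{p,q}\int|f_p|^2|f_q|^2$ beyond $M$. A typical cross term is $\sum_{p,q}\int_{\R^+}a_p^2 w_p^2\,(|f_q|^2-a_q^2 w_q^2)\,rdr$; bounding $\bigl\Vert\sum_p a_p^2 w_p^2\bigr\Vert_{L^\infty}\le C_\ep\om^{1/2+\ep}$ pointwise (from $\sum_p a_p^2\le1$ and the $L^\infty$ bounds of Corollary~\ref{theo:Lpg1n} and Lemma~\ref{theo:propgamman}) and applying Cauchy--Schwarz leaves a factor $\om^{1/2}\bigl(\sum_q a_q^2\bigr)^{1/2}\bigl(\sum_q\Vert e_q\Vert_{L^2}^2\bigr)^{1/2}$ on each family. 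This is $\om^{1/2}\cdot1\cdot\om^{-3/4}=\om^{-1/4}$ on $\NN_{1/4,+}$ and $\om^{1/2}\cdot\om^{-1/4}\cdot\om^{-1/2}=\om^{-1/4}$ on the complement, the smallness of the far modes' mass compensating exactly for the poorer error there. Collecting these estimates yields \eqref{eq4:rlbound}.

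The main obstacle is precisely this final balance. Proposition~\ref{theo:modes} degrades as $|n-n^*|$ grows, so the order-$\om^{-1/2}$ refinement $g_{1,n}+\om^{-1/2}\Gamma_n$ can only be afforded on $\NN_{1/4,+}$; one must then verify that handling the remaining low-mass modes with the coarse profile $g_{1,n}$, and the mismatch between the diagonal target $T$ and the bilinear expression $M$ that the interaction energy naturally produces, both stay below the prescribed accuracy $\om^{-1/4+\ep}$. The threshold $|n-\om|\le\om^{1/4}$ defining $\NN_{1/4}$ is exactly the cut at which the growing mode error $|n-n^*|^2\om^{-3/2}$ and the decaying mode mass trade off.
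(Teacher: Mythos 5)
Your proposal is correct and reaches the stated accuracy, and while it shares the paper's skeleton --- Lemma \ref{theo:interactions}, restriction to $1/2\leq r\leq 3/2$, the two-tier profiles $w_n$ (refined by $\om^{-1/2}\Gamma_n$ only on $\NN_{1/4,+}$, where the second entry of the maximum in (\ref{fnRafin}) is indeed dominated by $\om^{-3/4}$), and the inputs (\ref{eq3:confine1}), (\ref{estimefonda}), (\ref{gn-gm}), (\ref{gamman-gammam}) --- the key technical step is handled by a genuinely different device. The paper compares the bilinear quantity $\sum_{p,q}a_p^2a_q^2\int w_p^2w_q^2$ with the diagonal target pair by pair, rescaling to $x=(r-R_{n^*})/h_{n^*}$ and computing the Gaussian overlap integrals explicitly (see (\ref{eq2:calculsint4}), (\ref{cas27}), (\ref{cas28})), with a three-case analysis according to whether $p,q$ lie in $\NN_{1/4,+}$. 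Your identity $T-M=\bigl(1-\sum_qa_q^2\bigr)\sum_pa_p^2\int w_p^4+\tfrac12\sum_{p,q}a_p^2a_q^2\int(w_p^2-w_q^2)^2$ (which checks out) turns the entire overlap deficit into a manifest square, bounded by $\Vert w_p+w_q\Vert_{L^\infty}^2\Vert w_p-w_q\Vert_{L^2}^2\leq C_\ep\om^{-1/2+\ep}|p-q|^2$ using only the Lipschitz-in-$n$ estimates; together with $\sum_{p,q}a_p^2a_q^2|p-q|^2\leq C$ this gives $T-M=O(\om^{-1/2+\ep})$ with no explicit Gaussian computation, and the symmetrization absorbs the $\int w_p^4$ versus $\int w_q^4$ mismatch automatically. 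Likewise your substitution-error bookkeeping ($L^\infty$ bound on the spectator density plus mode-by-mode Cauchy--Schwarz) replaces the paper's pairwise $L^6$ H\"older estimates (\ref{cas11}), (\ref{cas25}) with the $\beta_p^2\beta_q^2$ accounting; both land at $O(\om^{-1/4+\ep})$. One detail to make explicit: for the cross term in which the spectator is $\sum_q|f_q|^2$ rather than $\sum_qa_q^2w_q^2$, you need $\bigl\Vert\sum_q|f_q|^2\bigr\Vert_{L^\infty}\leq\Vert u_\om\Vert_{L^\infty}^2\leq C\om^{1/2}$, i.e.\ (\ref{Linfraf}) from Section 4.2 (or a symmetrized splitting); with that, your estimates close, and your route is arguably the cleaner of the two.
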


\begin{proof}
We begin by using Lemma \ref{theo:interactions}:
\begin{equation}\label{debut}
\int_{\R^2} |u_{\om}|^4 \geq 2\pi \sum_{p,q\in \NN_{1/2}} \int_{\R^+} |f_p|^2 |f_q|^2rdr
\end{equation}
and recall that we can restrict the integration domains to $1/2 \leq r \leq 3/2 $ at the price of an exponentially small remainder (see (\ref{reducdomaine})). We implicitly do this restriction in the rest of the proof.\\
We continue to denote $\beta_n$ the $n$-th term of a generic sequence satisfying (\ref{betanbis}) and note that for any $n\in \NN_{1/2}$ we have 
\[
 \Vert f _n \Vert_{L^2([1/2,3/2],dr)} \leq C \Vert f _n \Vert_{L^2(\R^+, rdr)}\leq \beta_n
\]
because of the $L^2$ normalization of $u_{\om}$. Moreover
\[
 \Vert f_n \Vert_{H^1([1/2,3/2],dr)} \leq C \Vert f _n \Vert_{H^1(\R^+, rdr)}\leq \beta_n \om^{1/2}
\]
follows from the bounds $\sum_n F_n (f_n) \leq F_{\om} (u_{\om}) \leq C \om$. Interpolating between these two estimates we get
\[
 \Vert f_n \Vert_{H^{1/2}([1/2,3/2],dr)} \leq \beta_n \om^{1/4}. 
\]
Then we obtain, for any $\ep >0$
\begin{equation}\label{fnL6}
 \Vert f _n \Vert_{L^6([1/2,3/2],dr)} \leq C_{\ep} \beta_n \om^{1/6+\ep}
\end{equation}
by the Sobolev imbedding of $H^{1/2}$ in $L^p$ for any $p$ and interpolation between $L^2$ and $L^{p}$ with $p\rightarrow +\infty$.\\
If $n\notin \NN_{1/4}$ this estimate is improved:
\begin{equation}\label{fnL6mieux}
  \Vert f _n \Vert_{L^6([1/2,3/2],dr)} \leq C_{\ep} \beta_n \om^{-1/12+\ep}.
\end{equation}
Indeed, for $n\notin \NN_{1/4}$ we have 
\begin{equation}\label{L6mieuxpreuve1}
 \Vert f _n \Vert_{L^2([1/2,3/2],dr)} \leq C \Vert f _n \Vert_{L^2(\R^+, rdr)}\leq \beta_n \om^{-1/4}
\end{equation}
using (\ref{estimefonda}). Moreover the expansion
\begin{equation}\label{L6mieuxpreuve2}
 F_n(f_n) = \sum_{j\geq 1} \lambda_{j,n} \left| \left\langle f_n, g_{j,n} \right\rangle\right|^2
\end{equation}
combined with (\ref{souslecoude}), $\left| \left\langle f_n, g_{1,n} \right\rangle\right|^2 \leq \beta_{n} ^2 \om^{-1/2}$ and $\lambda_{1,n} = O(\om)$ yields
\begin{equation}\label{L6mieuxpreuve3}
 \Vert f_n \Vert_{H^1([1/2,3/2],dr)} \leq C \Vert f _n \Vert_{H^1(\R^+, rdr)}\leq C \sqrt{F_n (f_n)} \leq \beta_n \om^{1/4}.
\end{equation}
Interpolating between (\ref{L6mieuxpreuve1}) and (\ref{L6mieuxpreuve3}) we obtain (\ref{fnL6mieux}). We note that there also holds, for any $n \in \NN_{1/2}$,
\begin{equation}\label{gnL6}
\Vert \left\langle f_n,g_{1,n} \right\rangle g_{1,n} \Vert_{L^6([1/2,3/2],dr)} \leq C \beta_n \om^{1/6}
\end{equation}
and for any $n\notin \NN_{1/4,+}$
\begin{equation}\label{gnL6mieux}
\Vert \left\langle f_n,g_{1,n} \right\rangle g_{1,n} \Vert_{L^6([1/2,3/2],dr)} \leq C \beta_n \om^{-1/12}.
\end{equation}
These last two estimates are basic consequences of the properties of $g_{1,n}$, see Corollary \ref{theo:Lpg1n}.\\
We now bound from below the $p$-th term of the sum (\ref{debut}), distinguishing two cases:\\
\textbf{Case 1: $p\notin \NN_{1/4,+}$}\\
We have, for any $q\in \NN_{1/2}$
\begin{multline}\label{cas11}
\left| \int \left(|f_p|^2 |f_q|^2 - \left|\left\langle f_p,g_{1,p} \right\rangle\right|^2 \left|\left\langle f_q,g_{1,q} \right\rangle\right|^2 g_{1,p} ^2 g_{1,q} ^2 \right) rdr \right| \leq 
\\C\Vert f_q - \left\langle f_q,g_{1,q} \right\rangle g_{1,q}\Vert_{L^2}\Vert f_p \Vert_{L^6} ^2 \left(\Vert f_q \Vert_{L^6}  +  \Vert \left\langle f_q,g_{1,q} \right\rangle g_{1,q} \Vert_{L^6} \right)
\\ + C\Vert f_p - \left\langle f_p,g_{1,p} \right\rangle g_{1,p}\Vert_{L^2}\Vert \left\langle f_q,g_{1,q} \right\rangle g_{1,q} \Vert_{L^6} ^2 \left(\Vert f_p \Vert_{L^6}  +  \Vert \left\langle f_p,g_{1,p} \right\rangle g_{1,q} \Vert_{L^6} \right)
\\ \leq C_{\ep}\om^{-1/4+\ep} \beta_p ^2 \beta _q ^2  
\end{multline}
using (\ref{eq3:confine1}) and the estimates (\ref{fnL6}) to (\ref{gnL6mieux}). We then use exactly the same technique as in the proof of Proposition \ref{theo:borneinfint} to obtain that for any $q\in \NN_{1/2}$
\begin{equation}\label{cas12}
 \int g_{1,p} ^2 g_{1,q} ^2 rdr \geq \int g_{1,p} ^4 rdr- C \om^{-1/2}|p-q|^2 -C.
\end{equation}
We conclude from (\ref{cas11}) and (\ref{cas12}) that for any $p\notin_{\NN_{1/4,+}}$
\begin{multline}\label{cas13}
\sum_{q\in \NN_{1/2}} 2\pi \int |f_p|^2 |f_q|^2 rdr \geq \sum_{q\in \NN_{1/2}} \left|\left\langle f_p,g_{1,p} \right\rangle\right|^2 \left|\left\langle f_q,g_{1,q} \right\rangle\right|^2 \left( 2\pi \int g_{1,p} ^4 rdr- C \om^{-1/2}|p-q|^2 -C \right)  
\\ -C_{\ep} \beta_p ^2 \om^{-1/4+\ep}.
\end{multline}
\\

\textbf{Case 2: $p\in \NN_{1/4,+}$}
We again distinguish two cases.\\

\textit{Case 2.1: $q\notin \NN_{1/4,+}$}
We argue as in Case 1. Using estimates (\ref{fnL6}) to (\ref{gnL6mieux}), 
\begin{equation}\label{cas21}
\left| \int \left(|f_p|^2 |f_q|^2 - \left|\left\langle f_p,g_{1,p} \right\rangle\right|^2 \left|\left\langle f_q,g_{1,q} \right\rangle\right|^2 g_{1,p} ^2 g_{1,q} ^2 \right) rdr \right| \leq C_{\ep}\om^{-1/4+\ep} \beta_p ^2 \beta _q ^2  
\end{equation}
because $q\notin \NN_{1/4,+}$. Also 
\begin{equation}\label{cas22}
 \int g_{1,p} ^2 g_{1,q} ^2 rdr \geq \int g_{1,p} ^4 rdr - C \om^{-1/2}|p-q|^2 -C.
\end{equation}
and it is clear from (\ref{eq0:DLfonction}) and the estimates in Lemma \ref{theo:propgamman} that for any $\ep > 0$
\begin{equation}\label{cas23}
 \int g_{1,p} ^4 rdr = \int \left(g_{1,p} + \om^{-1/2}\Gamma_p \right) ^4 rdr + O(\om^{\ep}).
\end{equation}
Gathering (\ref{cas21}), (\ref{cas22}) and (\ref{cas23}) we thus have for any $p\in \NN_{1/4,+}$
\begin{multline}\label{cas24}
\sum_{q\notin \NN_{1/4,+}} 2\pi \int |f_p|^2 |f_q|^2 rdr \geq \sum_{q\notin \NN_{1/4,+}} \left|\left\langle f_p,g_{1,p} \right\rangle\right|^2 \left|\left\langle f_q,g_{1,q} \right\rangle\right|^2 \left( 2\pi \int \left(g_{1,p} + \om^{-1/2}\Gamma_p \right) ^4 rdr \right. 
\\ \left. - C \om^{-1/2}|p-q|^2 -C_{\ep} \om^{\ep} \right)   -C_{\ep} \beta_p ^2 \om^{-1/4+\ep}. 
\end{multline}

\textit{Case 2.2: $q\in \NN_{1/4,+}$}
This is the case where the refined asymptotics of Proposition \ref{theo:modes} will be crucial. Note that for $n\in \NN_{1/4,+}$ (\ref{fnRafin}) becomes
 \begin{equation}\label{fnRafinplus}
\left\Vert f_n - \left\langle f_n, g_{1,n}\right\rangle \left( g_{1,n} + \om^{-1/2} \Gamma_n \right) \right\Vert_{L^2(\R^+ , rdr)} \leq \beta_n \om^{-3/4}.
\end{equation}
 We begin with
\begin{multline}\label{cas25}
\left| \int \left(|f_p|^2 |f_q|^2 - \left|\left\langle f_p,g_{1,p} \right\rangle\right|^2 \left|\left\langle f_q,g_{1,q} \right\rangle\right|^2 \left(g_{1,p}+\om^{-1/2}\Gamma_p\right) ^2 \left(g_{1,q}+\om^{-1/2}\Gamma_q\right) ^2 \right) rdr \right| \leq 
\\C\left\Vert f_q - \left\langle f_q,g_{1,q} \right\rangle \left(g_{1,q}+\om^{-1/2}\Gamma_q\right)\right \Vert_{L^2} \Vert f_p \Vert_{L^6} ^2 \left(\Vert f_q \Vert_{L^6}  +  \left\Vert \left\langle f_q,g_{1,q} \right\rangle \left(g_{1,q}+\om^{-1/2}\Gamma_q\right) \right\Vert_{L^6} \right)
\\ + C\left \Vert f_p - \left\langle f_p,g_{1,p} \right\rangle \left(g_{1,p}+\om^{-1/2}\Gamma_p\right)\right \Vert_{L^2} \left \Vert \left\langle f_q,g_{1,q} \right\rangle \left(g_{1,q}+\om^{-1/2}\Gamma_q\right) \right \Vert_{L^6} ^2 
\\ \times \left(\Vert f_p \Vert_{L^6}  +  \left\Vert \left\langle f_p,g_{1,p} \right\rangle \left(g_{1,p}+\om^{-1/2}\Gamma_p\right) \right\Vert_{L^6} \right) \leq C_{\ep}\om^{-1/4+\ep} \beta_p ^2 \beta _q ^2 . 
\end{multline}
We have used (\ref{fnL6}) and (\ref{gnL6}) along with (\ref{fnRafinplus}) and the estimates of Lemma \ref{theo:propgamman}.\\
Now we use the bounds $\Vert g_{1,p}\Vert_{L^4} \leq C\om^{1/8}$ (see Corollary \ref{theo:Lpg1n}) and $\Vert \Gamma_p \Vert_{L^4} \leq C_{\ep}\om^{1/8+\ep}$ (see Lemma \ref{theo:propgamman}), to obtain 
\begin{multline}\label{cas26}
\int \left(g_{1,p}+\om^{-1/2}\Gamma_p\right) ^2 \left(g_{1,q}+\om^{-1/2}\Gamma_q\right) ^2 rdr \geq \int g_{1,p} ^2 g_{1,q}^2 rdr + 2\om^{-1/2} \int g_{1,p}^2 \Gamma_q g_{1,q} rdr 
\\+ 2\om^{-1/2}\int g_{1,q}^2 \Gamma_p g_{1,p} rdr - C_{\ep} \om^{-1/2+\ep}.
\end{multline}
Using (\ref{eq0:rescaling}) and (\ref{eq0:DLfonction}) we compute 
\begin{equation}\label{eq4:calculsint11}
2\pi \int g_{1,p} ^2 g_{1,q} ^2 rdr \geq h_p c_{1,p} ^2 c_{1,q} ^2 I_{p,q} 
\end{equation}
with a change of variables $r=R_p + h_p x$ where
\[
I_{p,q} = J_{p,q} + 2 K_{p,q} + 2 L_{p,q} + O(\om^{-1}) 
\]
and
\begin{eqnarray}\label{Ipq}
J_{p,q}&:=& \int_{\R^+} \xi_{1} ^2 (x) \xi_1 ^2\left(\frac{h_{p}}{h_q} x+\frac{R_p-R_q}{h_ q} \right) \left(R_{p}+h_{p}x  \right)dx\\
K_{p,q}&:=& \int_{\R^+} \xi_{1} ^2 (x) h_q P_{q}\left(\frac{h_{p}}{h_q}x+\frac{R_p-R_q}{h_ q}  \right)\xi_1 ^2 \left(\frac{h_{p}}{h_q}x +\frac{R_p-R_q}{h_ q} \right) \left(R_{p}+h_{p}x  \right)dx \\
L_{p,q}&:=& \int_{\R^+} h_p \xi_{1} ^2 (x) P_{q} \left(x\right)\xi_1^2 \left(\frac{h_{p}}{h_q}x +\frac{R_p-R_q}{h_ q} \right) \left(R_{p}+h_{p}x  \right)dx.
\end{eqnarray}
We recall that for any $p,q\in \NN_{1/2}$
\[
 |h_p-h_q| \propto \om^{-3/2}|p-q|
\]
and 
\[
 |R_p-R_q|\propto \om^{-1} |p-q|.
\]
Then arguing exactly as in the proof of Proposition \ref{theo:borneinfint} we get 
\[
 J_{p,q} = J_{p,p} +O(\om^{-1} |p - q|^2)
\]
whereas simple Taylor inequalities coupled with the estimates (\ref{DLxi2}) yield
\[
 K_{p,q} = K_{p,p} + O(\om^{-1} |p - q|) \quad L_{p,q} = L_{p,p} + O(\om^{-1} |p - q|).
\]
Moreover, (\ref{eq0:DLfonction}) and (\ref{DLxi2}) imply that for any $p,q\in \NN_{1/2}$
\[
 | c_{1,p} ^2 - c_{1,q}^2 | \leq C \om^{-1} |p-q|. 
\]
All in all,
\begin{equation}\label{cas27}
2\pi \int g_{1,p} ^2 g_{1,q} ^2 rdr \geq 2\pi \int g_{1,p} ^4 - C \om^{-1/2}|p-q|^2 - C\om^{-1/2}|p-q| -C\om^{-1/2}.
\end{equation}
On the other hand 
\begin{eqnarray}\label{cas28}
 \left| \int g_{1,p}^2 \Gamma_q g_{1,q} - \int g_{1,p}^3 \Gamma_p \right| &\leq& \left(\int g_{1,p}^4 \Gamma_q ^2 \right)^{1/2} \left(\int \left( g_{1,q}-g_{1,p}\right)^2 \right)^{1/2} + \left(\int g_{1,p}^6 \right)^{1/2} \left(\int \left(\Gamma_p - \Gamma_q \right) ^2 \right)^{1/2}\nonumber
\\ &\leq& \Vert g_{1,p}\Vert_{L ^6 } ^2 \Vert \Gamma_q \Vert_{L ^ 6} \Vert g_{1,p}-g_{1,q} \Vert_{L ^ 2} + \Vert g_{1,p} \Vert_{L ^ 6} ^3 \Vert \Gamma_p - \Gamma_q \Vert_{L ^ 2}\nonumber
\\ &\leq& C_{\ep} \om^{\ep} |p-q|
\end{eqnarray}
using Corollary \ref{theo:Lpg1n}, (\ref{bornesGamma2}), (\ref{gamman-gammam}) and (\ref{gn-gm}).\\
We conclude from (\ref{cas25}), (\ref{cas26}), (\ref{cas27}) and (\ref{cas28}) that for any $p\in \NN_{1/4,+}$ 
\begin{multline}\label{cas29}
\sum_{q \in \NN_{1/4,+}} 2\pi \int |f_p|^2 |f_q|^2 rdr \geq \sum_{q \in \NN_{1/4,+}} \left|\left\langle f_p,g_{1,p} \right\rangle\right|^2 \left|\left\langle f_q,g_{1,q} \right\rangle\right|^2 \left( 2\pi \int \left(g_{1,p} + \om^{-1/2}\Gamma_p \right) ^4 rdr \right. 
\\ \left. - C \om^{-1/2}|p-q|^2 -C\om^{-1/2}|p-q| -C\om^{-1/2} \right)   -C_{\ep} \beta_p ^2\om^{-1/4+\ep}. 
\end{multline}

\bigskip

To conclude the proof we sum (\ref{cas13}), (\ref{cas24}) and (\ref{cas29}) and use the following estimates that we obtain from (\ref{eq3:confine1}) and (\ref{estimefonda}):
\begin{eqnarray*}
\sum_{p,q\in\NN_{1/2}}\ \left|\left\langle f_p,g_{1,p} \right\rangle\right|^2 \left|\left\langle f_q,g_{1,q} \right\rangle\right|^2  |p-q|^2 &\leq& 
C \left(\sum_{p,q\in\NN_{1/2} } \left|\left\langle f_p,g_{1,p} \right\rangle\right|^2 \left|\left\langle f_q,g_{1,q} \right\rangle\right|^2  |p-n^*|^2 \right. \\ 
&+&  \left. \sum_{p,q\in\NN_{1/2} } \left|\left\langle f_p,g_{1,p} \right\rangle\right|^2 \left|\left\langle f_q,g_{1,q} \right\rangle\right|^2 |q-n^*|^2 \right) \\ &\leq& 
C \sum_{p\in\NN_{1/2} } \Vert f_p \Vert_{L^2(\R ^+,rdr )}^2 |p-n^*|^2 \leq C
\end{eqnarray*}
and 
\begin{eqnarray*}
\sum_{p,q\in \NN_{1/2}} \left|\left\langle f_p,g_{1,p} \right\rangle\right|^2 \left|\left\langle f_q,g_{1,q} \right\rangle\right|^2  |p-q| &\leq& \left(\sum_{p,q\in \NN_{1/2}} \Vert f_p \Vert_{L^2(\R ^+,rdr )}^2 \Vert f_q \Vert_{L^2(\R ^+,rdr )}^2  |p-q|^2 \right)^{1/2} \\
&\times &\left(
\sum_{p,q\in \NN_{1/2}} \Vert f_p \Vert_{L^2(\R ^+,rdr )}^2 \Vert f_q \Vert_{L^2(\R ^+,rdr )}^2 \right)^{1/2} \leq C.
\end{eqnarray*}
Also,
\[
 \sum_{q\notin \NN_{1/4,+}} \left|\left\langle f_q,g_{1,q} \right\rangle\right|^2 \leq C\om^{-1/2}.
\]
Finally
\[
 \sum_{n\in \NN_{1/2}} \left|\left\langle f_p,g_{1,p} \right\rangle\right|^2 = 1 - C\om^{-1}.
\]

 \end{proof}

\subsection{Proofs of Theorems \ref{theo:densite/energie2} and \ref{theo:vortex}}

\begin{proof}[Proof of (\ref{eq0:resultenergie2})]
The upper bound is obtained by taking $\Psi_{n^*} e^{i n^* \theta}$ as a test function in $F_{\om}$. For the lower bound we first write, using Proposition \ref{theo:refinedlowerbound} 
\begin{eqnarray}\label{minoration0}
F_{\om}(u_{\om}) &\geq& \sum_{n\in \Z} F_n (f_n) + \sum_{n\in \NN_{1/4,+}} 2\pi G \left|\left\langle f_n,g_{1,n} \right\rangle\right|^2 \int_{\R^+} \left(g_{1,n}+\om^{-1/2}\Gamma_n \right)^4 rdr \nonumber
\\ &+& \sum_{n\in \NN_{1/2} \setminus \NN_{1/4,+}} 2\pi G \left|\left\langle f_n,g_{1,n} \right\rangle\right|^2 \int_{\R^+} g_{1,n}^4 rdr -C_{\ep} \om^{-1/4+\ep}.
\end{eqnarray}
Now, using Lemma \ref{theo:lemme1}, (\ref{fnRafinenergie}) and the definitions of $\lambda_{1,n}$, $\lambda_{2,n}$ and $E_n$ we obtain from (\ref{minoration0})
\begin{eqnarray}\label{minoration1}
F_{\om} (u_{\om}) & \geq& \sum_{n\in \NN_{1/4,+}} \left|\left\langle f_n,g_{1,n} \right\rangle\right|^2  \left( F_n\left(g_{1,n} +\om^{-1/2}\Gamma_n \right) + 2\pi  G \int_{\R^+} \left(g_{1,n}+\om^{-1/2}\Gamma_n \right)^4 rdr \right) \nonumber
\\ &+& \sum_{n\in \NN_{1/2} \setminus \NN_{1/4,+}}  \left|\left\langle f_n,g_{1,n} \right\rangle\right|^2 \left( F_n (g_{1,n}) + 2\pi G  \int_{\R^+} g_{1,n}^4 rdr \right) + \int_{\R^+} \left(f_n-\left\langle f_n,g_{1,n}\right\rangle g_{1,n}\right)^2rdr \lambda_{2,n} \nonumber
\\ &+& \sum_{n\in \NN_{1/2}^c}  2\pi \Vert f_n \Vert_{L^2(\R^+, rdr)} ^2 \lambda_{1,n}  -C_{\ep} \om^{-1/4+\ep} \nonumber
\\ &\geq& \sum_{n\in \NN_{1/4,+}}  \left|\left\langle f_n,g_{1,n} \right\rangle\right|^2  E_n\left(g_{1,n}+\om^{-1/2}\Gamma_n\right) + \sum_{n\in \NN_{1/2} \setminus \NN_{1/4,+}} \Vert f_n \Vert_{L^2(\R^+, rdr)} ^2 E_n\left(g_{1,n}\right)  \nonumber
\\&+& \sum_{n\in \NN_{1/2}^c}  2\pi \Vert f_n \Vert_{L^2(\R^+, rdr)} ^2 c\om  -C_{\ep} \om^{-1/4+\ep}
\end{eqnarray}
where $c$ is the constant appearing in Lemma \ref{theo:lemme1}. We have used the following :
\[
 \lambda_{2,n} \geq \lambda_{1,n} + C\om \geq E_n(g_{1,n}) +C\om -C' \om^{1/2} \geq E_n(g_{1,n})
\]
which is proved by computing $E_n(g_{1,n})$ with the results of Theorem \ref{theo:1D1} and Proposition \ref{theo:1D2} in mind. The normalization of $g_{1,n}$ and the orthogonality of $g_{1,n}$ and $\Gamma_n$ imply
\[
2\pi \left\Vert g_{1,n} +\om^{-1/2} \Gamma_n  \right\Vert _{L^2(\R^+, rdr )} ^2 \geq 1
\]
so 
\begin{equation}\label{minoration2}
 E_n\left(g_{1,n}+\om^{-1/2}\Gamma_n\right) \geq 2\pi \left\Vert g_{1,n} +\om^{-1/2} \Gamma_n  \right\Vert _{L^2(\R^+, rdr )} ^2 \gamma_n.
\end{equation}
On the other hand, because of (\ref{fnRafin}) we have 
\begin{equation}\label{minoration3}
 2 \pi\Vert f_n \Vert_{L^2(\R^+, rdr)} ^2 = 2\pi \left|\left\langle f_n,g_{1,n} \right\rangle\right|^2\left\Vert g_{1,n} +\om^{-1/2} \Gamma_n  \right\Vert _{L^2(\R^+, rdr )} ^2 + \beta_n ^2 \om^{-3/2}
\end{equation}
for any $n\in\NN_{1/4,+}$. Using that $\gamma_n = O(\om)$ we finally obtain from (\ref{minoration1}), (\ref{minoration2}) and (\ref{minoration3})
\begin{equation}\label{minorationfinale}
F_{\om}(u_{\om})\geq \sum_{n\in\NN_{1/2}} 2\pi \Vert f_n \Vert_{L^2(\R^+, rdr)} ^2 \gamma_n + \sum_{n\in \NN_{1/2}^c}  2\pi \Vert f_n \Vert_{L^2(\R^+, rdr)} ^2 \left(\gamma_{n^*}+ C \om \right)  -C_{\ep} \om^{-1/4+\ep}.
\end{equation}
Note that for $n\notin \NN_{1/2}$ one has $ \lambda_{1,n} \geq c\om \geq \gamma_{n^*} + C\om $ because $c> \sqrt{6}$ and $\gamma_{n^*}\leq \sqrt{6}\om $ which can be computed from (\ref{eq4:DLenergie}). The lower bound in (\ref{eq0:resultenergie2}) follows from (\ref{minorationfinale}) and Corollary \ref{cor:lambdan/n^*NL}, recalling
\[
\sum_{n\in \Z} 2\pi \Vert f_n \Vert_{L^2(\R^+, rdr)} ^2 = 1.
\]
\end{proof}

\begin{proof}[Proof of (\ref{eq0:unseulmode})]
Using
\[
 F_{\om}(u_{\om}) \leq \gamma_{n^*}=\sum_{n\in \Z} \gamma_{n^*} 2\pi \Vert f_n \Vert_{L^2(\R^+, rdr)} ^2 
\]
we obtain from (\ref{minorationfinale})
\[
 C_{\ep}\om^{-1/4+\ep} \geq \sum_{n\in\NN_{1/2}} 2\pi \Vert f_n \Vert_{L^2(\R^+, rdr)} ^2 \left(\gamma_n -\gamma_{n *}\right)+ \sum_{n\in \NN_{1/2}^c}  2\pi \Vert f_n \Vert_{L^2(\R^+, rdr)} ^2  C \om  
\]
with $C>0$, thus, Corollary \ref{cor:lambdan/n^*NL} implies
\[
C_{\ep} \om^{-1/4+\ep} \geq  \sum_{n\in \NN_{1/2},n\neq n^*, n^*+1} 2\pi \Vert f_n \Vert_{L^2(\R^+, rdr)} ^2 |n-n^*|^2 + \sum_{n\in \NN_{1/2}^c}  2\pi \Vert f_n \Vert_{L^2(\R^+, rdr)} ^2  C \om
\]
This immediately yields 
\begin{equation}\label{confinefinal'}
 \sum_{n\in \NN_{1/2},n\neq n^*,n^*+1 } 2\pi \Vert f_n \Vert_{L^2(\R^+, rdr)} ^2 |n-n^*|^2 \leq C_{\ep}\om^{-1/4+\ep}.
\end{equation}
and 
\begin{equation}\label{confinefinalbis}
 \sum_{n\in \NN_{1/2}^c} 2\pi \Vert f_n \Vert_{L^2(\R^+, rdr)} ^2  \leq C_{\ep}\om^{-5/4+\ep}.
\end{equation}
We now introduce 
\begin{equation}\label{upoint}
\dot{u}_{\om} = \alpha \Psi_{n^*} e^{in^*\theta}+ \beta  \Psi_{n^*+1} e^{i(n^*+1)\theta}
\end{equation}
where 
\begin{eqnarray}\label{alphabeta}
&\alpha& : = \sqrt{2\pi}\Vert f_{n^*}\Vert_{L^2 (\R^+,rdr)} \nonumber \\
&\beta&  : = \sqrt{2\pi} \Vert f_{n^*+1}\Vert_{L^2(\R^+,rdr)}. 
\end{eqnarray}
We note that (\ref{eq0:DLfonction}) and (\ref{eq4:DLfonction}) imply
\begin{equation}\label{gn-Psin}
g_{1,n} = \Psi_n + O_{L^2 (\R^2)} (\om ^{-1/2}) 
\end{equation}
so, using equation (\ref{confinefinal'}) and (\ref{confinefinalbis}) combined with (\ref{fnRafin}) we see that after extraction of a subsequence
\[
 \Vert u_{\om}-c\dot{u}_{\om} \Vert_{L^2(\R^2)}\leq C_{\ep} \om^{-1/8+\ep}
\]
with $c$ a constant of modulus $1$, so that we have, recalling (\ref{eq4:LpPsin}) and (\ref{L6u}) 
\begin{multline}\label{uupointL4}
 \int_{\R^2} \left||u_{\om}|^4 - |\dot{u}_{\om}|^4 \right| \leq C\left(\Vert u_{\om}\Vert_{L^6(\R^2)}^3+\Vert \dot{u}_{\om}\Vert_{L^6(\R^2)}^3\right)\Vert u_{\om}-c \dot{u}_{\om} \Vert_{L^2(\R^2)} \\ \leq C_{\ep} \om^{1/2+\ep} \Vert u_{\om}-c \dot{u}_{\om} \Vert_{L^2(\R^2)}\leq C_{\ep}\om^{3/8+\ep}.
\end{multline}
The estimate (\ref{eq4:DLfonction}) yields by a change of variables similar to those we used in the proofs of Propositions \ref{theo:borneinfint} and \ref{theo:refinedlowerbound}
\[
\int_{\R^2} |\Psi_{n^*}|^4 = \int_{\R^2} |\Psi_{n^*+1}|^4 + O(\om^{-1/2})= \int_{\R^2} |\Psi_{n^*}|^2 |\Psi_{n^*+1}|^2 +O(\om^{-1/2})
\]
Thus, computing the interaction energy of $\dot{u}_{\om}$, using (\ref{uupointL4}) and the fact that $\alpha^2 + \beta ^2 = 1+O(\om^{-1/8+\ep})$ for any $\ep>0$ we have
\begin{eqnarray*}
 \gamma_{n^*} &\geq& F_{\om} (u_{\om}) \geq \gamma_{n^*} + 2\alpha^2 \beta^2 \int_{\R^2}|\Psi_{n^*}|^ 4 -C_{\ep}\om^{3/8+\ep}\\
C_{\ep}\om^{3/8+\ep} &\geq&  2\alpha^2 \beta^2 \int_{\R^2}|\Psi_{n^*}|^ 4 \geq C \om ^{1/2} \alpha^2 \beta ^2.
\end{eqnarray*}
But, according to our results (\ref{confinefinal'}) and (\ref{confinefinalbis}), either $\alpha ^2$ or $\beta ^2$ is bounded below. Then, along some subsequence we have either $\alpha ^2 \leq C_{\ep}\om^{-1/8+\ep}$ or $\beta ^2 \leq  C_{\ep}\om^{-1/8+\ep}$. Renaming $n^*$ if necessary, we have proved that along some subsequence 
\begin{equation}\label{confinefinal}
 \sum_{n\in\NN_{1/2}, n \neq n^*} \Vert f_n \Vert_{L^2(\R^+, rdr)} ^2 \leq C_{\ep} \om^{-1/8+\ep} 
\end{equation}
which concludes the proof of (\ref{eq0:unseulmode}), recalling (\ref{fnRafin}) and (\ref{gn-Psin}).

\end{proof}

Finally, we present the 

\begin{proof}[Proof of Theorem \ref{theo:vortex}]
 
%
%

Let us define
 \begin{equation}\label{N7/16}
 \NN_{7/16,+}:=  \left\lbrace n\in \Z ,\quad \left| n-n^* \right| \leq \om^{7/16}, \: \left\langle f_n , g_{1,n}\right\rangle \neq 0 \right\rbrace .
\end{equation}
and
\begin{equation}\label{uchapeau}
\hat{u}_{\om}(r,\theta):= \sum_{n\in \NN_{7/16,+}} \left\langle f_n, g_{1,n}\right\rangle \phi_n e^{in\theta}
\end{equation}
where we have denoted
\begin{equation}\label{phin}
\phi_n :=  g_{1,n} + \om^{-1/2} \Gamma_n
\end{equation}
for short. We claim that 
\begin{equation}\label{u-uchapeauLinf}
 \Vert u_{\om}- \hat{u}_{\om}\Vert_{L^{\infty}(\R^2)} \leq C_{\ep}\om^{-1/32+\ep}.
\end{equation}
Firstly we obtain
\begin{eqnarray}\label{u-uchapeauL2}
 \Vert u_{\om} - \hat{u}_{\om}\Vert_{L^{2}(\R^2)}^2 &\leq& \sum_{n\in \NN_{7/16,+} ^c} 2\pi \Vert f_n \Vert_ {L^2}^2 + \sum_{n\in \NN_{7/16,+}} \Vert f_n - \left\langle f_n, g_{1,n}\right\rangle \phi_n \Vert_{L^2}^2 \nonumber
\\ &\leq& C_{\ep} \om^{-9/8+\ep} + C \om^{-5/4}\leq C_{\ep} \om^{-9/8+\ep}
\end{eqnarray}
from (\ref{fnRafin}), (\ref{confinefinal'}) and (\ref{confinefinalbis}).
We then note that $\phi_n$ satisfies
\begin{equation}\label{eqphin}
-\Delta \phi_n + V_n \phi_n = \lambda_{1,n}\phi_n + \Pi_{1,n} ^{\perp} \left(-2G \om ^{1/2} g_{1,n^*} ^2 g_{1,n} \right).
\end{equation}
Multiplying (\ref{eqphin}) by $\left\langle f_n, g_{1,n}\right\rangle e^{in\theta}$ and summing over $n\in \NN_{7/16,+}$ we deduce that
\begin{multline}\label{equchapeau}
-\left(\nabla -i\om x^{\perp} \right)^2 \hat{u}_{\om} +V(r) \hat{u}_{\om} = \sum_{n\in \NN_{7/16,+}}  \lambda_{1,n} \left\langle f_n, g_{1,n}\right\rangle \phi_n e^{in\theta} 
\\ + \sum_{n\in \NN_{7/16,+}} \left\langle f_n, g_{1,n}\right\rangle \Pi_{1,n} ^{\perp} \left(-2G \om ^{1/2} g_{1,n^*} ^2 g_{1,n} \right) e^{in\theta}.
\end{multline}
Substracting this equation from the Euler-Lagrange equation (\ref{eq0:EELu}) we get an equation for $u_{\om}-\hat{u}_{\om}$:
\begin{multline}\label{EELu-uchapeau}
-\left(\nabla -i\om x^{\perp} \right)^2 \left(u_{\om} - \hat{u}_{\om}\right) + V(r) \left(u_{\om} - \hat{u}_{\om}\right) = \mu_{\om} u_{\om}  - 2 G |u_{\om}|^2 u_{\om} 
\\ - \sum_{n\in \NN_{7/16,+}}  \lambda_{1,n} \left\langle f_n, g_{1,n}\right\rangle \phi_n e^{in\theta} - \sum_{n\in \NN_{7/16,+}} \left\langle f_n, g_{1,n}\right\rangle \Pi_{1,n} ^{\perp} \left(-2G \om ^{1/2} g_{1,n^*} ^2 g_{1,n} \right) e^{in\theta}.
\end{multline}
The same technique as in Steps 3, 4 and 5 of the proof of Proposition \ref{theo:estimsobolev} allows to deduce (\ref{u-uchapeauLinf}) from (\ref{u-uchapeauL2}) and (\ref{EELu-uchapeau}).

\bigskip

\emph{Proof of Item 1.}

We write, using (\ref{u-uchapeauLinf})
\begin{equation}\label{preuveItem11}
\left||u_{\om}|-|\phi_{n^*}|\right|\leq \left|1-\left|\left\langle f_{n^*},g_{1,n^*} \right\rangle\right| \right| |\phi_{n^*}| + \sum_{n\in \NN_{7/16,+},n\neq n^*} \left|\left\langle f_{n},g_{1,n} \right\rangle\right| \left| \phi_n \right| +C_{\ep}\om^{-1/32+\ep},
\end{equation}
A consequence of Lemma \ref{theo:propgamman} (more precisely the $L^{\infty}$ estimate in (\ref{bornesGamma2})), (\ref{eq0:DLfonction}) and (\ref{eq4:DLfonction}) is
\begin{equation}\label{gn-phin-Psin}
\phi_n = g_{1,n} + O_{L^{\infty}} (\om^{-1/4}) = \Psi_n + O_{L^{\infty}} (\om^{-1/4}) = c_n e^{-d \left(\frac{r-R_n}{h_n}\right)^2} + O_{L^{\infty}} (\om^{-1/4}) 
\end{equation}
where $c_n$ is defined by (\ref{eq4:rescaling}) and $d$ is the standard deviation of $\xi_1$. A consequence of (\ref{eq4:DLfonction}) is
\begin{equation}\label{estimcn}
 c_n ^2 = \frac{1}{2\pi h_n R_n} + O(\om^{-1/2})
\end{equation}
because $\xi_1$ is an even function. We recall that, for some constant $h$
\begin{eqnarray}
R_n &=& 1 + o(\om^{-1/2}) \label{Rnegal}\\
h_n &=& h \om^{-1/2} (1+o(1)) \label{hnegal}
\end{eqnarray}
for any $n\in \NN_{7/16}$, so arguing as in the proof of Proposition \ref{theo:decexpo} one deduce from (\ref{preuveItem11}) that 
\begin{multline}\label{Item11}
\left||u_{\om}|-|\Psi_{n^*}|\right|\leq C \left(\left|1-\left|\left\langle f_{n^*},g_{1,n^*} \right\rangle\right| \right| + \sum_{n\in \NN_{7/16,+},n\neq n^*} \left|\left\langle f_{n},g_{1,n} \right\rangle\right| \right) \om^{1/4} e^{-\sigma \om \left(|x|-1\right)^2} 
\\+C_{\ep}\om^{-1/32+\ep}
\end{multline}
with
\begin{equation}\label{defsigma}
\sigma:=\frac{h ^2}{4d}.
\end{equation}

On the other hand (\ref{confinefinal}) and (\ref{fnRafin}) imply
\[
\left|1-\left|\left\langle f_{n^*},g_{1,n^*} \right\rangle\right| \right| \leq C_{\ep}\om^{-1/8+\ep}
\]
and
\begin{eqnarray*}
\sum_{n\in \NN_{7/16,+},n\neq n^*} \left|\left\langle f_{n},g_{1,n} \right\rangle\right| &\leq& C \left(\sum_{n\in \NN_{7/16,+},n\neq n^*}  \Vert f_n \Vert_{L^2(\R^+, rdr)} ^2 |n-n^{*}|^2 \right) ^{1/2} \left(\sum_{n\in \NN_{7/16,+},n\neq n^*} |n-n^*| ^{-2} \right) ^{1/2}
\\ &\leq& C_{\ep}\om^{-1/16+\ep},
\end{eqnarray*}
so we obtain (\ref{eq0:estimunif}) from (\ref{Item11}).

\bigskip

\emph{Proof of Items 2 and 3.} These final results are consequences of (\ref{eq0:estimunif}). Using the facts that 
\[
 \Psi_{n^*} = c_{n^*} e^{-d \left(\frac{r-R_{n^*}}{h_{n^*}}\right)^2} + O_{L^{\infty}} (\om^{-1/4}) 
\]
and $c_{n^*} \propto \om^{1/4}$ one can see from (\ref{eq0:estimunif}) that $u_{\om} (x_{\om}) \rightarrow 0$ uniformly on any region where
\[
 \left| |x_{\om}| - 1 \right| \geq \delta \om^{-1} \ln (\om)
\]
with $\delta > \sigma$. On the other hand, if $x_{\om}=r_{\om}e^{i\theta_{\om}}$ is a point such that
\[
 u_{\om}(x_{\om})\rightarrow 0 
\]
as $\om\rightarrow \infty$, 
(\ref{eq0:estimunif}) implies
\[
\om^{1/4}e^{-dh_{n ^*}^{-2} \left(r-R_{n^*}\right)^2} \leq C
\]
for any constant $C$ as $r\rightarrow r_{\om}$ and thus 
\begin{equation}\label{eq4:horsanneau}
 |r_{\om}- R_{n ^*}|^2 \geq \frac{1}{4d} h_{n ^*}^{-2}\left(\ln (\om) - C \right) \geq \sigma \om^{-1} (\ln (\om)-C)
\end{equation}
which completes the proof because 
\[
R_{n ^*} = 1 + O(\om^{-1}).
\]
\end{proof}

\bigskip

\textbf{Acknowledgments}\\

I express my gratitude to Xavier Blanc and Sylvia Serfaty for their constant support, their careful reading of the manuscript and their many useful suggestions. This work was supported by the R\'{e}gion Ile-de-France through a PhD grant.

\nocite{*}

\end{document}